\newtheorem{Theo}{Theorem}[section]
\newtheorem{Prop}[Theo]{Proposition}
\newtheorem{Coro}[Theo]{Corollary}
\newtheorem{Lemm}[Theo]{Lemma}
\newtheorem{Defi}[Theo]{Definition}
\theoremstyle{definition}
\title[Locality estimes for complex time evolution in 1D]{Locality estimates for\\ complex time evolution in 1D}
\author[D. P\'{e}rez-Garc\'{i}a]{David P\'{e}rez-Garc\'{i}a}
\address[P\'{e}rez-Garc\'{i}a]{\newline
Instituto de Ciencias Matem\'{a}ticas (CSIC-UAM-UC3M-UCM), C/ Nicol\'{a}s Cabrera 13-15, Campus de Cantoblanco,
28040 Madrid, Spain\newline Departamento de An\'{a}lisis y Matem\'{a}tica Aplicada, Universidad Complutense de Madrid,
28040 Madrid, Spain}
\email{dperezga@ucm.es}
\urladdr{https://orcid.org/0000-0003-2990-791X}
\author[A. P\'{e}rez-Hern\'{a}ndez]{Antonio P\'{e}rez-Hern\'{a}ndez}
\address[P\'{e}rez-Hern\'{a}ndez]{\newline Departamento de An\'{a}lisis y Matem\'{a}tica Aplicada, Universidad Complutense de Madrid,
28040 Madrid, Spain}
 \email{antonp07@ucm.es}
\urladdr{https://orcid.org/0000-0001-8600-7083}
\thanks{
This project has been partially supported by the European Research Council (ERC) under the European 15 Union’s Horizon 2020 research and innovation programme through the ERC Consolidator Grant GAPS (N. 648913). DPG also acknowledges  support from the Spanish MINECO (project MTM2017-88385-P) and from Comunidad de Madrid (grant QUITEMAD-CM, ref. P2018/TCS-4342). APH acknowledges support from La Caixa-Severo Ochoa Grant (ICMAT Severo Ochoa project SEV-2011-0087, MINECO), and the Juan de la Cierva Grant FJC2018-036519-I.
}
\keywords{Quantum Lattice System, One-Dimensional System, Phase Transition, Locality Estimates, Decay of Correlations}
\subjclass[2010]{82B10, 82B20, 82B26}
\begin{document}

\begin{abstract}
 It is a generalized belief that there are no  thermal phase transitions in short range 1D quantum systems. However, the only known case for which this is rigorously proven is for the particular case of {\it finite range} translational invariant interactions. The proof was obtained by Araki in his seminal paper of 1969  as a consequence of pioneering locality estimates for the time-evolution operator that allowed him to prove its analiticity on the whole complex plane, when applied to a local observable. However, as for now there is no mathematical proof of the abscence of 1D thermal phase transitions if one allows exponential tails in the interactions.
In this work we extend Araki's result to include exponential (or faster) tails. Our main result is the analyticity of the time-evolution operator applied on a local observable on a suitable strip around the real line. As a consequence we obtain that thermal states in 1D exhibit exponential decay of correlations above a threshold temperature that decays to zero with the exponent of the interaction decay, recovering Araki's result as a particular case.  Our result however still leaves open the possibility of 1D thermal short range phase transitions. We conclude with an application of our result to the spectral gap problem for Projected Entangled Pair States (PEPS) on 2D lattices, via the holographic duality due to Cirac et al. 
\end{abstract}

\maketitle

\section{Introduction}

Lieb-Robinson bounds \cite{LiebRob72} are described as a non-relativistic counterpart to the finite-speed limit for the propagation of signals or perturbations in certain quantum systems. They formally establish that the dynamical evolution of a local observable under a local Hamiltonian has an approximately bounded support that grows linearly in the time variable and its velocity depends on the interaction and the underlying metric structure. For the last fifteen years, these locality bounds have been sharpened and extended, motivated by a wide range of applications including the existence of dynamics in the thermodynamic limit \cite{BraRob97, NaOgSi06}, simulation of (real-time) evolution with local Hamiltonians \cite{HaHaKoHa18}, and mainly to study features of ground states of local Hamiltonians, namely multi-dimensional Lieb-Schultz-Mattis theorems \cite{Ha04}, exponential clustering property \cite{HaKo06, NaSi06}, area law \cite{Ha07_2} or classification of phases \cite{BaMiNaSi12}, to name a few; see  \cite{Ha12, NaSi10, NaSi10_2, NaSiYo19} and references therein.

Lieb-Robinson type estimates for complex time evolution have been considered by Robinson \cite{Robinson67} and Araki \cite{Araki69} both for finite-range interactions. The latter apply to one-dimensional spin systems and establishes that the infinite-volume time evolution operator applied to a local observable is analytic in the time variable on the whole complex plane, and the support of the evolved observable is approximately bounded but grows exponentially in the modulus of the complex time variable. These estimates become particularly useful when combined with Araki-Dyson expansionals \cite{Araki73} to deal with local perturbations of equilibrium states. Applications to one-dimensional spin systems include the abscence of phase transition at every temperature for translational invariant and finite range interactions \cite{Araki69}, large deviations principles  \cite{LeRey05, Ogata10, OgataRey11} and central limit theorems \cite{Matsui02, Matsui03}. 


In this article, we aim to extend Araki's result to a wider class of interactions. In Section \ref{sec:AnalyticLocalityBounds} we collect several locality estimates for complex time evolution. The first result (Theorem \ref{Theo:localityFirstVersion}) is essentially known and dates back to Robinson \cite{Robinson67}. It applies to general lattices and interactions providing a disk around the origin where locality estimates hold. This turns out to be basically optimal for $\mathbb{Z}^{g}$ with $g \geq 2$ as shown by Bouch \cite{Bouch15}, but contrasts with Araki's result \cite{Araki69} in one-dimensional systems. To overcome this, we next provide a version in terms of the energy interaction across surfaces (Theorem \ref{Theo:analyticLRgeneral}), that applied to the case of bounded interactions in 1D leads to the main result of the paper (Theorem \ref{Theo:FundamentalEstimations}). In particular, for finite range interactions we recover Araki's result and for interactions that decay exponentially fast we get locality and analiticity on a disk around the origin whose radius scales with the exponent of the decay. We finish the section by showing that in combination with the ordinary Lieb-Robinson bounds, the previous locality estimates can be extended to a horizontal strip around the real axis of width equal to the diameter of the disk (Theorem \ref{Theo:originalLiebRobinson} and Corollary \ref{Coro:stripLiebRobinson}).

To motivate these results, we illustrate two applications. The first one is to extend Araki's result on equilibrium states by showing that, under translational invariance,  the infinite volume Gibbs state has exponential decay of correlations for every temperature greater than the inverse of the width of the aforementioned strip. In Section \ref{sec:expansionals} we introduce and present an auxiliary result on expansionals and in Section \ref{sec:PhaseTransition} we detail the argument for the abscence of phase transition along the lines of \cite{Araki69,GoNe98,Matsui01}.

The second application, and actually main motivation of this work, deals with the spectral gap problem for parent Hamiltonians of Projected Entangled Pair States (PEPS). In the recent article \cite{KaLuPe19} the authors have proved that for PEPS in 2D, if the boundary states on rectangles correspond to Gibbs states whose Hamiltonians feature nice locality properties, then the parent Hamiltonian of the PEPS is gapped. They deal with finite range interactions and leave open the case of interactions with exponential decay, that seems to concur better with numerical simulations \cite{CiPoScVe11}.  In Section \ref{sec:2DPEPS} we extend their result to this latter case. 

\subsection{Notation and terminology}\label{sec:notation}

Let $G=(\mathcal{V},\mathcal{E})$ be a finite or infinite graph with vertices $\mathcal{V}$, edges $\mathcal{E}$ and with the shortest graph distance. Usually, we will consider the $g$-dimensional lattice $\mathbb{Z}^{g}$ for some $g \in \mathbb{N}$. At each site $x \in \mathcal{V}$ consider a local finite-dimensional Hilbert space $\mathcal{H}_{x} = \mathbb{C}^{d}$, and so for each finite subset $X \subset \mathcal{V}$ we have the corresponding Hilbert space $\mathcal{H}_{X}:=\otimes_{x \in X}{\mathcal{H}_{x}}$ and algebra of observables $\mathcal{A}_{X} = \mathcal{B}(\mathcal{H}_{X})$, that is, the space of bounded and linear operators on $\mathcal{H}_{X}$. The assignment $X \mapsto \mathcal{A}_{X}$ is actually monotonic, in the sense that for two finite sets $X \subset Y \subset \mathcal{V}$ we can  identify through a canonical linear isometry
\[ \mathcal{A}_{X} \hookrightarrow \mathcal{A}_{Y} = \mathcal{A}_{X} \otimes \mathcal{A}_{Y \setminus X}\,, \quad Q \longmapsto Q \otimes \mathbbm{1}\,. \]

With this identification we  have a directed set $(\mathcal{A}_{X})_{X}$ whose direct limit is the so-called \emph{algebra of local observables} $\mathcal{A}_{loc}$. In particular, since the above inclusions are isometries,  $\mathcal{A}_{loc}$ is endowed with a natural norm becoming a normed $*$-algebra containing each $\mathcal{A}_{X}$ isometrically. The completion of $\mathcal{A}_{loc}$, denoted $\mathcal{A}_{\mathcal{V}}$ or simply $\mathcal{A}$, is the \emph{C$^\ast$-algebra of observables}. In particular, for a maybe infinite $\Lambda$ we denote by $\mathcal{A}_{\Lambda}$ the closed subspace generated by all $\mathcal{A}_{X}$ with $X \subset \Lambda$ finite, and say that $Q \in \mathcal{A}$ has support in $\Lambda$ whether $Q \in \mathcal{A}_{\Lambda}$.\\ 

Let us denote the partial trace over a finite subset $\Lambda$ by $\operatorname{Tr}_{\Lambda}: \mathcal{A}_{\mathcal{V}} \longrightarrow \mathcal{A}_{\mathcal{V} \setminus \Lambda}$ and its normalized version as $\operatorname{tr}_{\Lambda}=  \operatorname{Tr}_{\Lambda}/d^{|\Lambda|}$. The tracial state over $\mathcal{A}$ will be simply denoted as $\operatorname{tr}$.\\

We will consider a local interaction $\Phi$ on the lattice, namely a function which associates to each (non empty) finite subset $X \subset \mathcal{V}$ an element $\Phi_{X} = \Phi_{X}^{\dagger} \in \mathcal{A}_{X}$.  Let us define for each $n \geq 0$ 
\[ \Omega_{n} := \sup_{x \in 	\mathcal{V}}{ \,\, \sum{\{ \| \Phi_{X}\| \colon X \ni x\,,\, \operatorname{diam}(X) \geq n  \}}}\, \]
where $\operatorname{diam}(\cdot)$ denotes the diameter with respect to the graph distance over $G$. We will assume that  $\Omega_{0}$ is finite, condition sometimes referred as having \emph{bounded interactions}. The sequence $(\Omega_{n})$ is non-increasing and quantifies the decay of interactions between particles of the lattice. We will say that $\Phi$ has \emph{finite range} if there is $r>0$ such that $\Omega_{n}=0$ whenever $n>r$, or that has \emph{exponential decay} if there exists $\lambda > 0$ such that 
\[ \| \Phi\|_{\lambda}:= \sum_{n \geq 0}{\Omega_{n} \, e^{\lambda n}} \, < \,\infty\,. \]
\noindent The total energy corresponding to a finite subset $\Lambda \subset \mathcal{V}$ is the Hamiltonian
\[ H_{\Lambda}:= \sum_{X \subset \Lambda}{\Phi_{X}}\,. \]
The corresponding time-evolution operator in the complex variable $s \in \mathbb{C}$ is defined 
\begin{equation}\label{equa:TimeEvolutionOperator} 
\Gamma^{s}_{\Lambda}(Q) = e^{isH_{\Lambda}} Q e^{-isH_{\Lambda}}\quad, \quad Q \in \mathcal{A}
\end{equation}
or equivalently, through the Dyson series
\begin{equation}
\label{equa:TimeEvolutionPowerSeriesExpansion}
\Gamma_{H_\Lambda}^{s}(Q) = \Gamma_{\Lambda}^{s}(Q) = \sum_{m=0}^{\infty}{\frac{s^{m}}{m!} \, \delta_{H_{\Lambda}}^{m}(Q)}
\end{equation}
where $\delta_{H_{\Lambda}}(Q) := i [H_{\Lambda}, Q]$ is the commutator operator. Notice that this series converges absolutely as long as $\Lambda$ is finite. When considering the time evolution operator on the whole system (whenever it can be defined) we will simply write $\Gamma^{s}$ or $\Gamma_{H}^{s}$.\\

For the one-dimensional lattice $\mathbb{Z}$, we introduce for each $j \in \mathbb{Z}$ and denote by $\tau_{j}$ the \emph{lattice translation} homomorphism characterized for each $Q \in M_{d}(\mathbb{C})$ by 
\[ \tau_{j} \, Q^{(k)} = Q^{(j+k)}\] 
where  $Q^{(k)} \in \mathcal{A}_{\{ k\}}$ is the element that coincides with $Q$ on site $k$ and with the identity on the rest. We say that the interaction $\Phi$ is \emph{translational invariant} if for every finite subset $X \subset \mathbb{Z}$ and every $j \in \mathbb{Z}$
\[ \tau_{j} \, \Phi_{X} = \Phi_{j + X}\,. \]
This nomenclature also applies to the one-sided version $\mathcal{A}_{\mathbb{N}} = \mathcal{A}_{[1,\infty)}$. In this setup, for each $Q \in \mathcal{A}_{\mathbb{N}}$, $m \in \mathbb{N}$ and real number $x>1$ define
\begin{align*}
\| Q\|_{n} & := \inf\{ \| Q - Q_{n}\| \colon Q_{n} \in \mathcal{A}_{[1,n]} \}\,,\\[2mm]
\||Q|\|_{m,x} & := \| Q\| + \sum_{n \geq m}{\| Q\|_{n} \; x^{n}} \,.
\end{align*}
The vector subspace of $\mathcal{A}_{\mathbb{N}}$ given by
\[ \mathcal{A}_{\mathbb{N}}(x) := \{ Q \in \mathcal{A}_{\mathbb{N}} \colon \|| Q|\|_{1,x} < \infty \}\, \]
turns out to be a Banach space when endowed with any of the norms $\|| \cdot |\|_{m,x}$.\\

We will denote $\mathbb{N}_{0} = \mathbb{N} \cup \{ 0\}$. For each $n \in \mathbb{N}$ and each $\alpha = (\alpha_{1}, \ldots, \alpha_{n}) \in \mathbb{N}_{0}^{n}$ let us write $|\alpha| := \alpha_{1} + \ldots + \alpha_{n}$.

\section{Analytic Lieb-Robinson bounds}
\label{sec:AnalyticLocalityBounds}

In this section, we provide several complex-variable versions of the Lieb-Robinson bounds. Given a local observable $A$ with support in a finite set $\Lambda_{0}$, we aim to compare the time evolution of $A$ on two larger regions $\Lambda, \Lambda' \supset \Lambda_{0}$, namely the norm of the difference between $\Gamma_{\Lambda'}^{s}(A)$ and $\Gamma_{\Lambda}^{s}(A)$ in order to analize the region and rate of convergence of $s \mapsto \Gamma_{\Lambda}^{s}(A)$ as $\Lambda$ grows to cover the whole graph.

\subsection{General case}
The first result shows that locality estimates hold on a disk around the origin under certain conditions on the interaction. It adapts an older argument by Robinson \cite[Proof of Theorem 1]{Robinson67} for finite-range interactions. A similar idea is also present in the books of Ruelle \cite[Theorem 7.6.2]{Ruelle1969} and Bratteli and Robinson \cite[Theorem 6.2.4]{BraRob97} for more general interactions to argue that the Dyson series of the time-evolution operator is absolutely convergent on a neighbourhood of the origin for every local observable. However, since no locality estimates for complex variables explicitly appear in the latter we have decided to include them here. These estimates apply to a wide range of graphs and interactions.

We are going to assume that our metric graph $G = (\mathcal{V},\mathcal{E})$ is locally finite, that is, for every $m \geq 1$
\begin{equation}\label{card:function} 
\Delta(m):=\sup{\{ |X| \colon  X \subset \mathcal{V}\,, \operatorname{diam}{(X)} \leq m \}} < \infty\,. 
\end{equation}
This property holds in systems with the following regularity condition appearing in recent versions of the Lieb-Robinson estimates  \cite{NaOgSi06, NaSi10}: there is a non-increasing function \mbox{$F: [0, \infty) \longrightarrow (0, \infty)$} such that 
\[ \| F \| := \sup_{x \in \mathcal{V}} \,\, \sum_{y \in \mathcal{V}}{F(\operatorname{dist}(x,y))} < \infty. \]
Indeed, for every $m \geq 0$ we can bound
\[ \Delta(m) \leq \frac{1}{F(m)} \, \sup_{x \in \mathcal{V}} \,\, \sum \{ F(\operatorname{dist}(x,y)) \colon y \in \mathcal{V},  \operatorname{dist}(y,x) \leq m \} \leq \frac{\| F\|}{F(m)}\,. \]
We state now the result.

\begin{Theo}\label{Theo:localityFirstVersion}
Let us consider a quantum spin system over $G$ with local bounded interaction $\Phi$ satisfying for some $\lambda > 0$ 
\[ \| \Phi\|_{\lambda}^{\Delta} := \mbox{$\sum_{n \geq 0}$} \, \Omega_{n} \, e^{\lambda \Delta(n)} < \infty\,. \]
Then, for every local observable $A$ with support in $\Lambda_{0} \subset \mathcal{V}$ we have that the time evolution operator $\Gamma^{s}(A)$ is defined and analytic on the open disk centered at the origin with radius $R_{\lambda}:=\lambda/(2 \, \| \Phi\|_{\lambda}^{\Delta})$. Indeed, for every $0 \leq \ell < L$ 
\begin{equation}\label{equa:localityFirstVersion} 
\| \Gamma^{s}_{\Lambda_{L}}(A) - \Gamma_{\Lambda_{\ell}}^{s}(A) \| \, \leq \, 2 \, \| A \| \,  e^{\lambda |\Lambda_{0}|} \, R_{\lambda} \, \frac{e^{ - \ell \, \| \Phi\|_{\lambda}^{\Delta} \,  (R_{\lambda} -  |s|)}}{R_{\lambda} - |s| }\,, \quad |s| < R_{\lambda}\,
\end{equation}
where we are denoting
\[ \Lambda_{\ell}:= \{ x \in \mathcal{V} \colon \operatorname{dist}(x, \Lambda_{0}) \leq \ell \}\,. \]
\end{Theo}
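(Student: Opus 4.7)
The plan is to exploit the Dyson expansion \eqref{equa:TimeEvolutionPowerSeriesExpansion}, which for finite $\Lambda$ converges absolutely on $\mathbb{C}$, and bound the difference term by term:
\[
\Gamma^{s}_{\Lambda_L}(A) - \Gamma^{s}_{\Lambda_\ell}(A) \,=\, \sum_{m\geq 1}\frac{s^{m}}{m!}\Bigl(\delta_{H_{\Lambda_L}}^{m}(A) - \delta_{H_{\Lambda_\ell}}^{m}(A)\Bigr).
\]
Distributing each iterated commutator gives $\delta_{H_\Lambda}^{m}(A) = \sum_{X_1,\ldots,X_m\subset\Lambda}\delta_{\Phi_{X_1}}\cdots\delta_{\Phi_{X_m}}(A)$, with $\delta_{\Phi_X}(Q):=i[\Phi_X,Q]$. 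In the difference the $m$-th term restricts to tuples with all $X_j\subset\Lambda_L$ and at least one $X_{j^\star}\not\subset\Lambda_\ell$.

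I next exploit two combinatorial facts. First, each nested commutator $\delta_{\Phi_{X_1}}\cdots\delta_{\Phi_{X_m}}(A)$ vanishes unless the \emph{chain condition} holds: $X_m\cap\Lambda_0\neq\emptyset$ and $X_j\cap(\Lambda_0\cup X_{j+1}\cup\cdots\cup X_m)\neq\emptyset$ for $j<m$. Second, following the chain from $\Lambda_0$ to a vertex at graph distance $>\ell$ forces $\sum_{k=1}^{m}\operatorname{diam}(X_k)>\ell$. Together with the elementary bound $\|\delta_{\Phi_{X_1}}\cdots\delta_{\Phi_{X_m}}(A)\|\leq 2^{m}\|A\|\prod_j\|\Phi_{X_j}\|$ and the inequality $1\leq e^{-(\lambda/2)\ell}\,e^{(\lambda/2)\sum_k \operatorname{diam}(X_k)}$ valid on the restricted set, this reduces the task to bounding an unrestricted weighted chain sum, with the weight $g(X):=\|\Phi_X\|\,e^{(\lambda/2)\operatorname{diam}(X)}$ and the factor $e^{-\lambda\ell/2}$ already peeled off.

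The split $\lambda=\tfrac{\lambda}{2}+\tfrac{\lambda}{2}$ is essential because the remaining budget $\lambda/2$ is exactly what is needed to absorb the polynomial factor $|X|$ arising when expanding a union-bound sum. Concretely, I use
\[
\sup_{y\in\mathcal{V}}\sum_{X\ni y}g(X)\,\leq\,\|\Phi\|_\lambda^\Delta,\qquad \sup_{y\in\mathcal{V}}\sum_{X\ni y}g(X)\,|X|\,\leq\,\frac{2\,\|\Phi\|_\lambda^\Delta}{\lambda}\,=\,\frac{1}{R_\lambda},
\]
where the second inequality follows from $|X|\leq\Delta(\operatorname{diam}(X))$ and $\Delta(n)\,e^{(\lambda/2)n}\leq (2/\lambda)\,e^{\lambda\Delta(n)}$ (a consequence of $x\leq e^{\mu x}/\mu$ with $\mu=\lambda/2$, together with $n\leq\Delta(n)$). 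Plugging these into the recursion $T_m(Y)=\sum_{X\cap Y\neq\emptyset}g(X)\,T_{m-1}(Y\cup X)$ for the weighted chain sum, together with $|Y\cup X|\leq|Y|+|X|$, an induction bounds $T_m(\Lambda_0)$ by $m!\,R_\lambda^{-m}$ times a polynomial in $|\Lambda_0|$ and $\ell$. The prefactor $e^{\lambda|\Lambda_0|}$ of \eqref{equa:localityFirstVersion} comes from dominating the $|\Lambda_0|$-polynomial by an exponential, while the $\ell$-polynomial corrections reassemble into the factor $e^{\ell\|\Phi\|_\lambda^\Delta|s|}$ after pairing with $|s|^m$ in the Dyson series. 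This combinatorial bookkeeping is the main obstacle.

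Assembling everything, the Dyson series becomes geometric in $|s|/R_\lambda$ and sums to $R_\lambda/(R_\lambda-|s|)$, while retaining the exponential factor $e^{-\lambda\ell/2+\ell\|\Phi\|_\lambda^\Delta|s|} = e^{-\ell\|\Phi\|_\lambda^\Delta(R_\lambda-|s|)}$. This reproduces \eqref{equa:localityFirstVersion} on the disk $|s|<R_\lambda$, and uniform Cauchy convergence on compact subsets of this disk yields analyticity of $s\mapsto\Gamma^s(A)$ as $\Lambda$ exhausts the graph.
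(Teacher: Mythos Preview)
Your overall strategy coincides with the paper's: Dyson expansion, chain decomposition of the iterated commutators, the observation that the surviving chains must satisfy $\sum_k\operatorname{diam}(X_k)\geq\ell$, and an exponential weight on the diameters to extract the $\ell$-decay. The difference is only in bookkeeping, and there your sketch has a real gap.

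Once you pass to the \emph{unrestricted} weighted chain sum $T_m(\Lambda_0)$ with weights $g(X)=\|\Phi_X\|e^{(\lambda/2)\operatorname{diam}(X)}$, that quantity no longer depends on $\ell$ at all. Hence there are no ``$\ell$-polynomial corrections'' to reassemble into $e^{\ell\|\Phi\|_\lambda^\Delta|s|}$; that factor has no source in your argument. What your $\lambda/2+\lambda/2$ split actually produces is a fixed decay $e^{-\lambda\ell/2}$ and, after including the commutator factor $2^m$, a geometric ratio $2|s|\|\Phi\|_\lambda^\Delta/(\lambda/2)=2|s|/R_\lambda$, so convergence only on $|s|<R_\lambda/2$. (Separately, the recursion for $T_m$ with just the two aggregate bounds $\sum g\le\|\Phi\|_\lambda^\Delta$ and $\sum g|X|\le R_\lambda^{-1}$ does not close to $m!\,R_\lambda^{-m}$: the degree of the polynomial in $|Y|$ increases at each step, so you end up with $(|\Lambda_0|+\cdots)^m$ rather than $m!$.)

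The paper handles both issues with a single device: it keeps a free parameter $\varepsilon\in(0,\lambda)$ instead of fixing $\lambda/2$, bounds the polynomial growth of the support via $N^m\le (m!/\varepsilon^m)e^{\varepsilon N}$ with $N=|\Lambda_0|+\sum_j\Delta(\alpha_j)$, and only then chooses $\varepsilon=(\lambda+2|s|\|\Phi\|_\lambda^\Delta)/2$. This $|s|$-dependent choice is exactly what converts the decay into $e^{-\ell\|\Phi\|_\lambda^\Delta(R_\lambda-|s|)}$ and pushes the radius of convergence out to $R_\lambda$. Your proposal fixes $\varepsilon=\lambda/2$ at the outset and then asserts the optimized answer without the optimization step; that is the missing idea.
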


\begin{proof}
Let us assume that $\| A\|=1$ and fix $0 \leq \ell < L$. Using the Dyson series expansion \eqref{equa:TimeEvolutionPowerSeriesExpansion}, we can bound
\begin{equation}\label{equa:localityFirstVersionAux1} 
\| \Gamma_{\Lambda_{L}}^{s}(A) - \Gamma_{\Lambda_{\ell}}^{s}(A) \| \leq \sum_{m=1}^{\infty} \, \frac{|s|^{m}}{m!} \| \delta_{H_{\Lambda_{L}}}^{m}(A) - \delta_{H_{\Lambda_{\ell}}}^{m}(A) \|\,. 
\end{equation}
The $m$-th iterated commutator $\delta_{H_{\Lambda}}^{m}(A)$ for a finite region $\Lambda$ can be expanded as
\begin{equation}\label{equa:commutatorIteratedDecomp} 
\delta_{H_{\Lambda}}^{m} (A) = \sum_{ X_{1} \cap S_{0} \neq \emptyset \, \ldots \, , \, X_{m} \cap S_{m-1} \neq \emptyset}{  \delta_{\Phi_{X_{m}}}\circ \ldots \circ \delta_{\Phi_{X_{1}}} (A)}  
\end{equation}
where we are denoting $S_{0} := \Lambda_{0}$ and $S_{j} := X_{j} \cup S_{j-1}$ for each $j \geq 1$, and the sum is extended over subsets $X_{j} \subset \Lambda$. Hence, 
\begin{align}\label{equa:localityFirstVersionAux2} 
\delta_{H_{\Lambda_{L}}}^{m} (A) - \delta_{H_{\Lambda_{\ell}}}^{m} (A) = \sum_{\substack{X_{1} \cap S_{0} \neq \emptyset \, , \, \ldots \,, \, X_{m} \cap S_{m-1} \neq \emptyset \\[1mm] S_{m} \cap (V \setminus \Lambda_{\ell}) \neq \emptyset}}   \delta_{\Phi_{X_{n}}}\circ \ldots \circ \delta_{\Phi_{X_{1}}} (A)  
\end{align}
where the sum extends over subsets $X_{j} \subset \Lambda_{L}$, and the condition \mbox{$S_{m} \cap (V \setminus \Lambda_{\ell}) \neq \emptyset$} follows from the fact that summands corresponding to $X_{1}, \ldots, X_{m} \subset \Lambda_{\ell}$ vanish when substracting. For each $j \geq 1$, rewrite
\[ \sum_{X_{j} \cap S_{j-1} \neq \emptyset} = \sum_{\alpha_{j} \in \mathbb{N}_{0}} \sum_{\substack{X_{j} \cap S_{j-1} \neq\emptyset \\[1mm] \operatorname{diam}(X_{j}) = \alpha_{j}}} \]
Notice that the additional condition $S_{m} \cap (V \setminus \Lambda_{\ell}) \neq \emptyset$ forces the sum of the diameters of all $X_{j}$ to be greater than or equal to $\ell$. We can then bound  \eqref{equa:localityFirstVersionAux2} 
\begin{align*}
\| \delta_{H_{\Lambda_L}}^{m}(A)  - \delta_{H_{\Lambda_\ell}}^{m}(A)\|  \leq 2^{m}  \, \sum_{\substack{\alpha \in \mathbb{N}_{0}^{m}\\ |\alpha| \geq \ell}} \sum_{\substack{X_{1} \cap S_{0} \neq \emptyset \\ \operatorname{diam}(X_{1}) = \alpha_{1}}} \, \ldots \, \sum_{\substack{X_{m} \cap S_{m-1} \neq \emptyset \\ \operatorname{diam}{(X_{m})} = \alpha_{m}}} \,\,\, \prod_{j=1}^{m}{\| \Phi_{X_{j}}\|}\,.
\end{align*}
Let us fix $\varepsilon \in (0, \lambda)$. Using that $|S_{j}| \leq |\Lambda_{0}| + |X_{1}| + \ldots + |X_{j}|$ 
\begin{align*} 
\| \delta_{H_{\Lambda_L}}^{m}(A)  - \delta_{H_{\Lambda_\ell}}^{m}(A)\| & \leq 2^{m}  \, \sum_{\substack{\alpha \in \mathbb{N}_{0}^{m}\\ |\alpha| \geq \ell}} \,\, \prod_{j=1}^{m}  \,\, \left( |\Lambda_{0}| + \Delta(\alpha_{1}) + \ldots + \Delta(\alpha_{j-1}) \right) \, \Omega_{\alpha_{j}} \\[2mm]
& \leq 2^{m}  \, \sum_{\substack{\alpha \in \mathbb{N}_{0}^{m}\\ |\alpha| \geq \ell}} \, \left(|\Lambda_{0}| + \Delta(\alpha_{1}) + \ldots + \Delta(\alpha_{m})\right)^{m} \, \,\, \prod_{j=1}^{m} \,\, \Omega_{\alpha_{j}}\\[2mm]
& \leq 2^{m}  \, \sum_{\substack{\alpha \in \mathbb{N}_{0}^{m}\\ |\alpha| \geq \ell}} \, \frac{m!}{\varepsilon^{m}} e^{ \varepsilon \left(|\Lambda_{0}| + \Delta(\alpha_{1}) + \ldots + \Delta(\alpha_{m})\right)} \, \,\, \prod_{j=1}^{m} \,\, \Omega_{\alpha_{j}}\\[2mm]
& \leq \left( 2/\varepsilon\right)^{m} \,  m! \, e^{\varepsilon |\Lambda_{0}|}  \, \sum_{\substack{\alpha \in \mathbb{N}_{0}^{m}\\ |\alpha| \geq \ell}} \, \,\, \prod_{j=1}^{m} \, \, \Omega_{\alpha_{j}} e^{\varepsilon \Delta(\alpha_{j})}  \,.
\end{align*}
Since $\Delta(n) \geq 1$ for every $n \geq 0$ 
\begin{align*}
\| \delta_{H_{\Lambda_L}}^{m}(A)  - \delta_{H_{\Lambda_\ell}}^{m}(A)\| & \leq \left( 2/\varepsilon\right)^{m} \, m! \, e^{\varepsilon |\Lambda_{0}|} \, e^{\ell \, (\varepsilon - \lambda)}  \, \sum_{\substack{\alpha \in \mathbb{N}_{0}^{m}\\ |\alpha| \geq \ell}} \, \,\, \prod_{j=1}^{m} \, \, \Omega_{\alpha_{j}} e^{\lambda \Delta(\alpha_{j})} \\[2mm]
& \leq \left( 2/\varepsilon\right)^{m}  \, m! \, e^{\varepsilon |\Lambda_{0}|} \, e^{\ell \, (\varepsilon - \lambda)}  \, (\| \Phi\|_{\lambda}^{\Delta})^{m} \,.
\end{align*}
Finally, applying this estimate to \eqref{equa:localityFirstVersionAux1} we deduce that for every $s \in \mathbb{C}$ with $2  \, |s| \, \| \Phi\|_{\lambda}^{\Delta}  < \lambda $ and taking 
\[ \varepsilon:= \frac{\lambda + 2 \, |s| \, \| \Phi\|_{\lambda}^{\Delta}}{2} = \lambda + \| \Phi\|_{\lambda}^{\Delta} \, (|s| - R_{\lambda}) \] 
which satisfies $2 |s| \| \Phi\|_{\lambda}^{\Delta} < \varepsilon < \lambda$
\begin{align*} 
\| \Gamma^{s}_{\Lambda_{L}}(A) - \Gamma^{s}_{\Lambda_{\ell}}(A) \| \, & \leq  \, e^{\ell (\varepsilon - \lambda)} \, e^{\varepsilon |\Lambda_{0}|} \,  \, \sum_{m=1}^{\infty}{ \left(\frac{2 |s| \, \| \Phi\|_{\lambda}^{\Delta}}{\varepsilon}\right)^{m} }\\[2mm] 
& \leq  \, e^{\ell (\varepsilon - \lambda)} \, e^{\varepsilon |\Lambda_{0}|} \,  \frac{\varepsilon}{\varepsilon - 2 |s| \| \Phi\|_{\lambda}^{\Delta}} \\[2mm]
& \leq \, e^{\ell(\varepsilon - \lambda)} \, e^{\lambda |\Delta_{0}|} \, \frac{\lambda}{\lambda - \varepsilon} \\[2mm]
& \leq e^{\ell \, \| \Phi\|_{\lambda}^{\Delta}(|s| - R_{\lambda})} \, e^{\lambda \, \Delta_{0}} \, \frac{\lambda/\| \Phi\|_{\lambda}^{\Delta}}{  R_{\lambda} - |s|}\,.
\end{align*}
and the desired estimate immediately follows.
\end{proof}

If $G = \mathbb{Z}^{g}$ with $g \geq 2$ then $ \Delta(m) \leq (2g)^{m}$, so for sufficiently fast decay interactions, e.g. finite range, the previous theorem guarantees analyticity and locality estimates in a suitable disk around the origin. This can be considered tight, since according to \cite{Bouch15} we cannot expect convergence in the whole complex plane, not even for $g=2$ and translation invariant nearest neighbor interactions.\\

If $G= \mathbb{Z}$ then $\Delta(m) = m+1$ and so the theorem applies to interactions with exponential decay. However, the result seems not tight. Indeed,  for finite range interactions we can choose any value $\lambda > 0$ and the previous theorem only provides a disk of convergence of order $\lambda/e^{\lambda}$. This contrasts with Araki's result \cite{Araki69} which ensures convergence on the whole complex plane. We will overcome this drawback in the following subsections.

\subsection{Energy across surface}

The next locality estimates involve properties of the energy interaction across surfaces, see \cite[p. 249-51]{BraRob97}. Notice that the statement does not depend on the graph distance, and so may be applicable to a wider variety of situations.

\begin{Theo}\label{Theo:analyticLRgeneral}
Let us consider a quantum spin system over $G$ with interaction $\Phi$, and fix an increasing sequence $(\Lambda_{n})_{n \geq 0}$ of finite subsets of $\mathcal{V}$. For each  $0 \leq j < k$ let us denote
\begin{align*}
W(j,k) & := \sum{\{ \| \Phi_{X}\| \colon  X \cap \Lambda_{j} \neq \emptyset \,,\, X \cap (\Lambda_{k} \setminus \Lambda_{k-1}) \neq \emptyset \}}\,,\\[2mm]  
W(j,j) & := \sum{\{\| \Phi_{X} \|\colon X \cap \Lambda_{j} \neq \emptyset \}} \leq |\Lambda_{j}| \, \Omega_{0} \,. 
\end{align*}
Then, for every local observable $A \in \mathcal{A}$ with support in $\Lambda_{0}$ and every $0 \leq \ell \leq L$  
\begin{align}
\label{equa:analyticLRgeneral0} \| \Gamma_{\Lambda_{L}}^{s}(A)\| \, & \leq \, \| A\| \, \sum_{k=0}^{L}{e^{2 |s| \Omega_{0} |\Lambda_{k}|} \,  W^{\ast}_{k}(2|s|)}\\[2mm]  
\|\Gamma_{\Lambda_{L}}^{s}(A) - \label{equa:analyticLRgeneral1}  \Gamma_{\Lambda_{\ell}}^{s}(A)\| \, & \leq \, \| A\| \, \sum_{k=\ell+1}^{L}{e^{2 |s| \Omega_{0} |\Lambda_{k}|} \,  W^{\ast}_{k}(2|s|)} 
\end{align}
where
\[ W^{\ast}_{0}(x) = 1 \quad, \quad W^{\ast}_{k}(x) := \sum_{n=1}^{\infty} \left( \sum_{0 = \beta_{0} < \ldots < \beta_{n}=k} \,\, \prod_{j=1}^{n} W(\beta_{j-1}, \beta_{j}) \right)\frac{x^{n}}{n!}\,. \]
\end{Theo}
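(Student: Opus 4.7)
The plan is to mimic the Dyson-series analysis from Theorem \ref{Theo:localityFirstVersion} but to organize chains of local interactions according to how deeply they penetrate through the layers $(\Lambda_k)$. First I would expand
\[ \Gamma^{s}_{\Lambda_{L}}(A) - \Gamma^{s}_{\Lambda_{\ell}}(A) \, = \, \sum_{m \geq 1}{\frac{s^{m}}{m!} \left( \delta^{m}_{H_{\Lambda_{L}}}(A) - \delta^{m}_{H_{\Lambda_{\ell}}}(A)\right)} \]
and use the expansion \eqref{equa:commutatorIteratedDecomp} together with $\|\delta_{\Phi_{X}}\| \leq 2 \|\Phi_{X}\|$, so that each summand is controlled by $2^{m} \|A\| \prod_{j=1}^{m} \|\Phi_{X_j}\|$ over chains $(X_1,\ldots,X_m)$ with $X_1 \cap \Lambda_0 \neq \emptyset$, $X_j \cap S_{j-1} \neq \emptyset$, and $S_m \cap (\mathcal{V} \setminus \Lambda_{\ell}) \neq \emptyset$. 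The formula \eqref{equa:analyticLRgeneral0} will then follow from the same manipulation without the last exit condition (and starting the external sum at $k=0$, where $W^{\ast}_{0}=1$ accounts precisely for the $m=0$ term).

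The central combinatorial idea is to classify each chain by the maximal layer $k$ with $S_m \cap (\Lambda_{k} \setminus \Lambda_{k-1}) \neq \emptyset$, which must satisfy $\ell < k \leq L$, and then to single out the \emph{jumping} subchain $X_{i_{1}}, \ldots, X_{i_{n}}$ defined by the condition that $X_{i_{j}}$ strictly increases the running maximum layer index from $\beta_{j-1}$ to $\beta_{j}$, with $0=\beta_{0} < \beta_{1} < \ldots < \beta_{n}=k$. By definition, each jumping $X_{i_{j}}$ meets $\Lambda_{\beta_{j-1}}$ (through $S_{i_{j}-1}$) and meets $\Lambda_{\beta_{j}} \setminus \Lambda_{\beta_{j}-1}$, so the sum over such $X_{i_{j}}$ of $\|\Phi_{X_{i_{j}}}\|$ is bounded by $W(\beta_{j-1},\beta_{j})$. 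Every \emph{non-jumping} $X_{i}$ is forced to lie inside the current maximum layer $\Lambda_{\beta_{j-1}} \subset \Lambda_{k}$, so summing $\|\Phi_{X_{i}}\|$ over such $X_{i}$ is controlled by $|\Lambda_{k}| \, \Omega_{0}$.

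Once the labels $n$, the jump positions $i_{1} < \ldots < i_{n}$ inside $\{1,\ldots,m\}$ (there are $\binom{m}{n}$ of them) and the layer indices $\beta_{j}$ are fixed, the chain factorizes and yields the bound
\[ \| \delta^{m}_{H_{\Lambda_{L}}}(A) - \delta^{m}_{H_{\Lambda_{\ell}}}(A)\| \leq 2^{m} \|A\| \sum_{k=\ell+1}^{L} \sum_{n=1}^{m} \binom{m}{n} (|\Lambda_{k}|\Omega_{0})^{m-n} \sum_{0=\beta_{0}<\ldots<\beta_{n}=k} \prod_{j=1}^{n}W(\beta_{j-1},\beta_{j}). \]
Dividing by $m!$, multiplying by $|s|^{m}$ and interchanging sums, the inner sum over $m \geq n$ collapses via
\[ \sum_{m \geq n} \frac{(2|s|)^{m}}{n!\,(m-n)!}\,(|\Lambda_{k}|\Omega_{0})^{m-n} = \frac{(2|s|)^{n}}{n!}\, e^{2|s|\,\Omega_{0}\,|\Lambda_{k}|}, \]
which, after recognising the definition of $W^{\ast}_{k}(2|s|)$, delivers \eqref{equa:analyticLRgeneral1}.

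The main obstacle I expect is the careful bookkeeping in the jumping/non-jumping decomposition: one has to check that every non-jumping set $X_{i}$ is indeed contained in the current maximum layer (so that its summation cost is genuinely $|\Lambda_{k}|\Omega_{0}$ rather than something that blows up with $L$), and that the connectivity constraint $X_{j} \cap S_{j-1} \neq \emptyset$ can be safely dropped without losing the estimate, while simultaneously keeping the jumping constraints tight enough to produce the exact definition of $W(\beta_{j-1},\beta_{j})$. Verifying that the combinatorial factor $\binom{m}{n}$ (and nothing more) suffices — rather than a more intricate ordered-placement count — is the subtle step that makes the resummation collapse cleanly into the stated expression.
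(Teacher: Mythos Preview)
Your proposal is correct and reaches exactly the same combinatorial inequality the paper uses in its Step~IV, but you organize the argument differently. The paper first telescopes $\Gamma_{\Lambda_L}^{s}(A) - \Gamma_{\Lambda_\ell}^{s}(A)$ into consecutive pieces $\mathcal{U}_k^{(m)} := \delta_{H_{\Lambda_k}}^{m}(A) - \delta_{H_{\Lambda_{k-1}}}^{m}(A)$, derives a recursive formula $\mathcal{U}_k^{(m+1)} = \delta_{H_{\Lambda_k}}(\mathcal{U}_k^{(m)}) + \sum_{i<k}\delta_{H_{\Lambda_k}-H_{\Lambda_{k-1}}}(\mathcal{U}_i^{(m)})$, and then proves by induction that $\|\mathcal{U}_k^{(m)}\| \leq 2^m \phi(m,k)$ with $\phi(m,k) = \sum_{0=\beta_0\leq\ldots\leq\beta_m=k}\prod_j W(\beta_{j-1},\beta_j)$ (weakly increasing indices). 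Only afterwards does it separate strict jumps from repeated indices, obtaining the binomial factor and the $|\Lambda_k|\Omega_0$ contribution exactly as you do. Your route bypasses the recursion entirely by classifying chains in $\Lambda_L$ directly according to their running maximal layer and their jump positions; this is more hands-on and, as you rightly flag, demands care in checking that non-jumping sets genuinely stay inside the current maximal layer and that relaxing the connectivity constraint $X_j\cap S_{j-1}\neq\emptyset$ only overcounts. The paper's recursive bookkeeping has the advantage that the connectivity constraint is absorbed automatically into the induction, so no delicate chain tracking is needed; your direct approach has the advantage of being conceptually transparent and avoiding the auxiliary quantity $\phi(m,k)$ altogether.
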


\begin{proof}
Let us assume that $\| A\| = 1$. We just have to prove that \eqref{equa:analyticLRgeneral1} holds for consecutive regions $\Lambda_{k}$ and $\Lambda_{k-1}$, as the general case follows straightforwardly through a telescopic sum of terms. We again make use of the estimate
\begin{equation}\label{equa:analyticLRgeneralAux1}
\| \Gamma_{\Lambda_{k}}^{s}(A) - \Gamma_{\Lambda_{k-1}}^{s}(A) \| \leq \sum_{m=1}^{\infty} \, \frac{|s|^{m}}{m!} \| \delta_{H_{\Lambda_{k}}}^{m}(A) - \delta_{H_{\Lambda_{k-1}}}^{m}(A) \|\,. 
\end{equation}
We must then find good estimates of the summands in the right-hand side of \eqref{equa:analyticLRgeneralAux1}. The argument is split into several stages.\\

\noindent \textbf{Step I}:  \emph{Let us denote for every $m \in \mathbb{N}$ 
\[ \mathcal{U}_{0}^{(m)}:= \delta_{H_{\Lambda_{0}}}^{m} (A) \quad , \quad \mathcal{U}_{k}^{(m)}:= \delta_{H_{\Lambda_{k}}}^{m} (A) - \delta_{H_{\Lambda_{k-1}}}^{m}(A) \,\,\quad (k > 0) \,. \]
Then, for every $k \geq 0$
\begin{align}\label{equa:boundRecursiveFormula}
 \quad \mathcal{U}_{k}^{(m+1)}  = \delta_{H_{\Lambda_{k}}} ( \mathcal{U}_{k}^{(m)} ) \,+ \, \sum_{i=0}^{k-1}  \, \delta_{H_{\Lambda_{k}} - H_{\Lambda_{k-1}}}  ( \mathcal{U}_{i}^{(m)})
\end{align}
where for $k=0$ the finite series on the right-hand side of \eqref{equa:boundRecursiveFormula} is equal to zero}. To prove this, note that fixed $m \geq 1$ and $k \geq 0$ we can decompose
\[ \delta_{H_{\Lambda_{k}}}^{m}(A) = \sum_{i=0}^{k}{\mathcal{U}_{i}^{(m)}}\,. \]
Applying the operator $\delta_{H_{\Lambda_{k}}}$ on both sides we get 
\begin{align*}
\delta_{H_{\Lambda_{k}}}^{m+1} (A) \,  & = \, \sum_{i=0}^{k} \left( \delta_{H_{\Lambda_{i}}} \, (\mathcal{U}_{i}^{(m)}) + \sum_{j=i+1}^{k}{\delta_{H_{\Lambda_{j}} - H_{\Lambda_{j-1}}} \, (\mathcal{U}_{i}^{(m)}}) \right)\\[2mm]
& = \, \sum_{i=0}^{k}{\delta_{H_{\Lambda_{i}}} \, (\mathcal{U}_{i}^{(m)}}) + \sum_{i=0}^{k} \sum_{j=i+1}^{k} \delta_{H_{\Lambda_{j}} - H_{\Lambda_{j-1}}} \, (\mathcal{U}_{i}^{(m)})\\[2mm]
& = \, \sum_{i=0}^{k}{\delta_{H_{\Lambda_{i}}} \, (\mathcal{U}_{i}^{(m)}}) + \sum_{j=1}^{k} \sum_{i=0}^{j-1} \delta_{H_{\Lambda_{j}} - H_{\Lambda_{j-1}}} \, (\mathcal{U}_{i}^{(m)})\\[2mm]
& = \, \sum_{j=0}^{k} \left( \delta_{H_{\Lambda_{j}}} \, (\mathcal{U}_{j}^{(m)}) + \sum_{i=0}^{j-1} \delta_{H_{\Lambda_{j}} - H_{\Lambda_{j-1}}} \, (\mathcal{U}_{i}^{(m)}) \right)\,.
\end{align*}
Using this identity, we can immediately check that
\[ \delta_{H_{\Lambda_{k}}}^{m+1} (A) - \delta_{H_{\Lambda_{k-1}}}^{m+1}(A)\, = \, \delta_{H_{\Lambda_{k}}} \, (\mathcal{U}_{k}^{(m)}) + \sum_{i=0}^{k-1}{\delta_{H_{\Lambda_{k}} - H_{\Lambda_{k-1}}} \, (\mathcal{U}_{i}^{(m)}})\,, \]
which finishes the proof of the statement.\\

\noindent \textbf{Step II}:  \emph{Define for every $m \geq 1$ and $k \geq 0$ the nonnegative number
\begin{align*}  
\phi(m,k):= \sum_{ 0 = \beta_{0} \leq \ldots \leq \beta_{m} = k}{ \,\,\, \prod_{j=1}^{m}{ W(\beta_{j-1}, \beta_{j}) }}\,.
\end{align*}
Then, we have
\begin{align} \label{equa:firstEstimateDerivations}
\| \mathcal{U}_{k}^{(m)} \| \, \leq \, 2^{m}  \, \phi(m,k)\,. 
\end{align}}Note that  if $0 \leq j < k$ and $B$ is an observable with support in $\Lambda_{j}$, then 
\begin{equation}
\| \delta_{H_{\Lambda_{j}}}  (B)\|  \, \leq \,  \sum{\{ \| \delta_{\Phi_{X}} (B) \| \colon X \cap \Lambda_{j} \neq \emptyset \}} \, \leq \, 2 \, \| B\| \, W(j,j)
\end{equation}
and 
\begin{equation}
\begin{split}
\| \delta_{H_{\Lambda_{k}} - H_{\Lambda_{k-1}}} (B) \| \, & \leq \, \sum{\{ \| \delta_{\Phi_{X}}(B) \| \colon X \cap \Lambda_{j} \neq \emptyset\,, X \subset \Lambda_{k} \,, X \nsubseteq \Lambda_{k-1}  \}}\\[2mm]
& \leq \, 2 \, \| B\|\, W(j,k) \,.
\end{split}
\end{equation}
These inequalities will be helpful to prove that \eqref{equa:firstEstimateDerivations} holds for every $k \geq 0$ by induction on $m \geq 1$. Indeed, the case $m=1$ follows taking $j = 0$ and $B = A$. Fixed $m \geq 1$ and assuming that formula \eqref{equa:firstEstimateDerivations} holds for every $k \geq 0$, we can apply the recursive formula \eqref{equa:boundRecursiveFormula}  from Step I to get  for every $k \geq 0$ 
\begin{align*}
\| \mathcal{U}_{k}^{(m+1)} \| \, & \leq \, \|\delta_{H_{\Lambda_{k}}} (\mathcal{U}_{k}^{(m)}) \| + \sum_{i=0}^{k-1} \| \delta_{H_{\Lambda_{k}} - H_{\Lambda_{k-1}}} (\mathcal{U}_{i}^{(m)})\|\\[2mm]
& \leq \, 2 \, \| \mathcal{U}_{k}^{(m)}\| \, W(k,k) \, + \,  \sum_{i=0}^{k-1} 2 \, \| \mathcal{U}_{i}^{(m)}\| \, W(i,k) \\[2mm]
& \leq \, 2^{m+1}  \, \sum_{i=0}^{k}  \phi(m,i) \, W(i, k)\\[2mm] 
& = \, 2^{m+1}  \, \phi(m+1, k) \,.
\end{align*}
This finishes the proof of the statement.\\

\noindent \textbf{Step III}: We are going to bound $\phi(m,k)$ by rearranging terms $W(\beta_{j-1}, \beta_{j})$ into those with $\beta_{j-1} \neq \beta_{j}$ and those with $\beta_{j-1} = \beta_{j}$. Note that for $0 \leq j \leq k$, we can bound $W(j,j) \leq W(k,k) \leq \Omega_{0} \, |\Lambda_{k}|$. Moreover, if $n,m \geq 1$ and we fix a sequence of integers \mbox{$0 = \alpha_{0} < \ldots < \alpha_{n} = k$}, then the number of sequences of integers \mbox{$0 = \beta_{0} \leq  \ldots \leq \beta_{m} = k$} with $\{ \beta_{0}, \ldots, \beta_{m}\} = \{ \alpha_{0}, \ldots, \alpha_{n} \}$ coincides with the number of subsets with cardinality $n$ of $\{ 1, \ldots, m\}$. Therefore, 
\begin{equation}\label{equa:mainBoundPhi} 
\phi(m,k) \leq \sum_{n=1}^{m}{\left( \sum_{0 = \beta_{0} < \ldots < \beta_{n} = k} \,\, \prod_{j=1}^{n}{W(\beta_{j-1}, \beta_{j})} \right) \Omega_{0}^{m-n} \, |\Lambda_{k}|^{m-n} \, \binom{m}{n}}\,. 
\end{equation}

\noindent \textbf{Step IV}: Applying estimates \eqref{equa:firstEstimateDerivations} and \eqref{equa:mainBoundPhi} to \eqref{equa:analyticLRgeneralAux1}, we conclude
\begin{align*}
\| \Gamma_{\Lambda_{k}}^{s}&(A)  -  \Gamma_{\Lambda_{k-1}}^{s}(A) \| \, \, \leq \\[2mm] 
& \, \leq  \, \sum_{m=1}^{\infty} \frac{(2 |s|)^{m}}{m!} \, \sum_{n=1}^{m}{\left( \sum_{0 = \beta_{0} < \ldots < \beta_{n} = k} \,\, \prod_{j=1}^{n}{W(\beta_{j-1}, \beta_{j})} \right) \Omega_{0}^{m-n} \, |\Lambda_{k}|^{m-n} \, \binom{m}{n}}\\[2mm]
& = \sum_{m=1}^{\infty} \sum_{n=1}^{m} \left( \sum_{0 = \beta_{0} < \ldots < \beta_{n} = k} \,\, \prod_{j=1}^{n}{W(\beta_{j-1}, \beta_{j})} \right) \, \frac{\Omega_{0}^{m-n} \, |\Lambda_{k}|^{m-n} (2 |s|)^{m-n}}{(m-n)!} \, \frac{(2 |s|)^{n}}{n!}\\[2mm]
& \leq \sum_{n=1}^{\infty} \frac{(2 |s|)^{n}}{n!} \, \left( \sum_{0 = \beta_{0} < \ldots < \beta_{n} = k} \,\, \prod_{j=1}^{n}{W(\beta_{j-1}, \beta_{j})} \right) \, \sum_{m=n}^{\infty} \frac{\Omega_{0}^{m-n} \, |\Lambda_{k}|^{m-n} (2 |s|)^{m-n}}{(m-n)!}\\[2mm]
& = e^{2 |s| \Omega_{0} |\Lambda_{k}|} \, \sum_{n=1}^{\infty} \frac{(2 |s|)^{n}}{n!} \, \left( \sum_{0 = \beta_{0} < \ldots < \beta_{n} = k} \,\, \prod_{j=1}^{n}{W(\beta_{j-1}, \beta_{j})} \right)\,.
\end{align*}
Finally, estimate \eqref{equa:analyticLRgeneral0} also follows  using telescopic sums and the fact that
\[ \|\Gamma^{s}_{\Lambda_{0}}(A)\| \leq \sum_{m=0}^{\infty} \frac{|s|^{m}}{m!} \| \mathcal{U}_{0}^{(m)}\| \, \leq \, \sum_{m=0}^{\infty} \frac{(2|s|)^{m}}{m!}(\Omega_{0} |\Lambda_{0}| )^{m} \leq e^{2|s| \Omega_{0}|\Lambda_{0}|}\,. \]
\end{proof}

In contrast to Theorem \ref{Theo:localityFirstVersion}, the previous result is not suitable for $\mathbb{Z}^{g}$ with $g\geq 2$.  Indeed, consider a quantum spin system over $\mathbb{Z}^{2}$, an observable $A$ supported at the origin, the sequence $\Lambda_{n} = [-n,n]^{2}$ and some nearest neighbour interaction with $W(j, j + 1) > \varepsilon > 0$ for every $j \geq 0$ and some $\varepsilon > 0$. Then,
 $W_{k}^{\ast}(2|s|) \geq (2\varepsilon |s|)^{k}/k!$ and  $e^{2 |s| \Omega_{0} |\Lambda_{k}|} = e^{2|s|\Omega_{0} (2k+1)^{2}}$, so the series in \eqref{equa:analyticLRgeneral1} cannot converge as $L$ tends to infinity unless $s$ is equal to zero.

\subsection{One-dimensional lattice}

We are going to particularize Theorem \ref{Theo:analyticLRgeneral} to the one-dimensional lattice $\mathbb{Z}$. The natural supporting sets to consider here are intervals $J=[a,b]$, for which $J_{k}=[a-k,b+k]$ for every $k \in \mathbb{N}$ so that $|J_{k} \setminus J_{k-1}| \leq 2$. Following the notation of the  aforementioned Theorem \ref{Theo:analyticLRgeneral}, 
\[ W(j,k) \leq 2 \, \Omega_{k-j} \quad \mbox{for every $0 \leq j < k$}\,. \]
This immediately yields the following main result.
\begin{Theo}\label{Theo:FundamentalEstimations}
Let us consider a quantum spin system over $\mathbb{Z}$ with bounded interaction $\Phi$ and let $A \in \mathcal{A}_{loc}$ be a local observable with support in an interval $J$. Then, for every $0 \leq \ell \leq L$ and every $s \in \mathbb{C}$ 
\begin{align}
\label{equa:FundamentalEstimations1} \| \Gamma^{s}_{J_{L}}(A) - \Gamma^{s}_{J_{\ell}}(A)   \|  & \,\, \leq \,\, e^{2 |s| \Omega_{0} |J|} \, \sum_{k=\ell + 1}^{L} e^{4 |s| \Omega_{0} k}\, \Omega_{k}^{*}(4 |s|)\\
\label{equa:FundamentalEstimations2}  \| \Gamma^{s}_{J_{L}}(A) \| & \,\, \leq \,\, e^{2 |s| \Omega_{0} |J|} \,\,\,  \sum_{k=0}^{L} e^{4 |s| \Omega_{0} k}\, \Omega_{k}^{*}(4 |s|)
 \end{align}
 where 
\begin{equation}\label{equa:OmegaStarExpression1} 
\Omega_{0}^{\ast}(x) = 1\,, \quad  \Omega_{k}^{*}(x) = \sum_{n=1}^{\infty}{ \Big( \sum_{\substack{\alpha \in \mathbb{N}^{n}\\ |\alpha| = k}} \, \prod_{j=1}^{n}\Omega_{\alpha_{j}} \Big) \frac{x^{n}}{n!} } \,.
\end{equation}
Notice that the $\Omega_{k}^{*}(x)$'s are the coefficients in the power series expansion of
\begin{equation}\label{equa:OmegaStarExpression2} 
\mathbb{D} \ni z \, \longmapsto \, \exp{\left[ x \, \sum_{k=1}^{\infty} \Omega_{k} z^{k}\right]} \, = \, \sum_{k = 0}^{\infty}{\Omega_{k}^{\ast}(x) \, z^{k}} \,. 
\end{equation}
\end{Theo}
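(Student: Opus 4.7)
The plan is to specialize Theorem \ref{Theo:analyticLRgeneral} to the sequence $\Lambda_k := J_k = [a-k, b+k]$, where $J = [a,b]$ is the interval supporting $A$, and to translate the abstract quantities $W(j,k)$ and $W_k^{\ast}$ into the one-dimensional quantities $\Omega_{k-j}$ and $\Omega_k^{\ast}(4|s|)$. Without loss of generality I assume $\|A\| = 1$ (the estimates scale linearly in $\|A\|$).

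The key new input is the bound $W(j,k) \leq 2\,\Omega_{k-j}$ for $0 \leq j < k$, which the discussion preceding the theorem already flags. To verify it, note that in one dimension $J_k \setminus J_{k-1}$ contains at most the two endpoints $\{a-k, b+k\}$, and any finite $X \subset \mathbb{Z}$ intersecting both $J_j$ and $J_k \setminus J_{k-1}$ must contain one of these endpoints together with some point in $[a-j, b+j]$, hence $\operatorname{diam}(X) \geq k-j$. Splitting the sum defining $W(j,k)$ according to which endpoint lies in $X$ and applying the defining supremum for $\Omega_{k-j}$ to each piece gives $W(j,k) \leq 2\,\Omega_{k-j}$.

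Plugging this bound into the definition of $W_k^{\ast}(2|s|)$ supplied by Theorem \ref{Theo:analyticLRgeneral}, I pull out a factor $2^n$ and reindex via $\alpha_j := \beta_j - \beta_{j-1}$: this is a bijection between strictly increasing sequences $0 = \beta_0 < \ldots < \beta_n = k$ and tuples $\alpha \in \mathbb{N}^n$ with $|\alpha| = k$, and the $2^n$ combines with $(2|s|)^n$ to give $(4|s|)^n$, so that $W_k^{\ast}(2|s|) \leq \Omega_k^{\ast}(4|s|)$ in the notation of \eqref{equa:OmegaStarExpression1}. Since $|J_k| = |J| + 2k$, the prefactor also splits as $e^{2|s|\Omega_0 |J_k|} = e^{2|s|\Omega_0|J|}\,e^{4|s|\Omega_0 k}$; inserting both facts into \eqref{equa:analyticLRgeneral0} and \eqref{equa:analyticLRgeneral1} yields \eqref{equa:FundamentalEstimations2} and \eqref{equa:FundamentalEstimations1}.

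Finally, identity \eqref{equa:OmegaStarExpression2} is just the formal expansion of an exponential: setting $f(z) := \sum_{k \geq 1}\Omega_k z^k$, one has $\exp[x f(z)] = \sum_{n \geq 0}\frac{x^n}{n!}\, f(z)^n$, and the coefficient of $z^k$ in $f(z)^n$ equals $\sum_{\alpha \in \mathbb{N}^n,\, |\alpha| = k}\prod_j \Omega_{\alpha_j}$, matching \eqref{equa:OmegaStarExpression1}; convergence on the unit disk follows from $\Omega_0 < \infty$ and the trivial bound $\Omega_k \leq \Omega_0$. Since Theorem \ref{Theo:analyticLRgeneral} already carries all the combinatorial burden, I expect no substantive obstacle here; the only thing to track carefully is that the strict-inequality indexing of $W_k^{\ast}$ lines up with the strict-composition reindexing underlying $\Omega_k^{\ast}$, and that the factor $2$ from the two endpoints of $J_k \setminus J_{k-1}$ is correctly absorbed into the doubling of the argument from $2|s|$ to $4|s|$.
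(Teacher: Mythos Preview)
Your proof is correct and follows exactly the approach of the paper: specialize Theorem~\ref{Theo:analyticLRgeneral} to the nested intervals $J_k$, use $W(j,k)\le 2\,\Omega_{k-j}$ (from $|J_k\setminus J_{k-1}|\le 2$), and absorb the resulting factor $2^n$ into the argument of $\Omega_k^\ast$ while splitting $|J_k|=|J|+2k$ in the exponential prefactor. You have in fact spelled out more of the bookkeeping (the bijection $\alpha_j=\beta_j-\beta_{j-1}$ and the verification of \eqref{equa:OmegaStarExpression2}) than the paper does.
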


In the rest of the section, we shed more light on the bounds in the previous theorem by particularizing them to interactions with finite range and exponential decay. Next, we reformulate the usual Lieb-Robinson bounds for real time evolution in terms of the sequence $\Omega_{n}$ to compare it with the previous result, and finally present a combined locality estimate for real and imaginary time evolution.

\subsubsection{Finite-range interactions}

Assume that there exists $r >1$ such that $\Omega_{k} = 0$ if $k \geq r$. We estimate $\Omega_{k}^{\ast}(x)$ using the explicit expression \eqref{equa:OmegaStarExpression1}. Note that if $\alpha \in \mathbb{N}^{n}$ satisfies $|\alpha| = k \geq r n$ then  $\alpha_{j} \geq r$ for some $j$, and so \mbox{$\Omega_{\alpha_{1}} \cdot \ldots \cdot \Omega_{\alpha_{n}} =0$}. Moreover, the set of elements $\alpha \in \mathbb{N}^{n}$ with $|\alpha|=k$ has cardinal $\binom{k-1}{n-1} = \frac{n}{k} \binom{k}{n} \leq  \left( \frac{k e}{n}\right)^{n} \leq (re)^{n}$ where in the last inequality we have used the constraint $k<rn$. Thus
\begin{align*}
\sum_{k= \ell + 1}^{L}{e^{x \Omega_{0} k} \, \Omega^{\ast}_{k}(x)} & \leq \sum_{k=\ell + 1}^{L} e^{x \Omega_{0} k}  \, \sum_{k/r< \, n \, \leq k}  \frac{(x\Omega_{0}re)^{n}}{n!} \leq \sum_{n>\ell/r} nr e^{x \Omega_{0}nr} \frac{(x\Omega_{0}re)^{n}}{n!}\\[2mm]
& \leq \sum_{n > \ell/r} \left( x \Omega_{0} r^{2} e^{1+x \Omega_{0} r} \right)^{n}\frac{1}{n!}\\
& \leq \exp{\left(x \Omega_{0} r^{2} e^{1+x\Omega_{0} r} \right)} \, \frac{\left(x \Omega_{0} r^{2} e^{1+x\Omega_{0} r} \right)^{[\ell/r] + 1}}{\left([\ell/r]+1 \right)!}
\end{align*}

This estimate shows that for every observable $A$ with support in a compact interval $J$, the sequence of maps $s \mapsto \Gamma_{J_{\ell}}^{s}(A)$ converges superexponentially fast as $\ell$ tends to infinity on every bounded subset of the complex plane. Indeed, the asymptotic order of convergence coincides with the one provided by Araki in \cite[Theorem 4.2]{Araki69}.

\subsubsection{Exponentially decaying interactions}

Let us assume that our interaction  has exponential decay $\| \Phi\|_{\lambda} < \infty$ for some $\lambda > 0$. Then, for every $x \geq 0$
\[ \Omega_{k}^{\ast}(x) = e^{-\lambda k} \, \sum_{n=1}^{\infty} \Big( \sum_{\substack{\alpha \in \mathbb{N}^{n}\\ |\alpha| = k}} \, \prod_{j=1}^{n}{\Omega_{\alpha_{j}} e^{\lambda \alpha_{j}}} \Big)\frac{x^{n}}{n!}  \]
and so for $0 \leq x \leq \frac{\lambda}{\Omega_{0}}$ we can bound
\begin{align*} 
\sum_{k=\ell + 1}^{L}{e^{x \Omega_{0} k} \, \Omega_{k}^{\ast}(x)} & \leq e^{(x \Omega_{0} - \lambda) \ell} \sum_{k=\ell+1}^{L} \sum_{n=1}^{\infty} \Big( \sum_{\substack{\alpha \in \mathbb{N}^{n}\\ |\alpha| = k}} \, \prod_{j=1}^{n}{\Omega_{\alpha_{j}} e^{\lambda \alpha_{j}}} \Big) \, \frac{x^{n}}{n!} \\[2mm] 
& \leq e^{(x \Omega_{0} - \lambda)\ell} \, e^{ x \| \Phi\|_{\lambda} }  \,.
\end{align*}
Combining this estimate with Theorem \ref{Theo:FundamentalEstimations} we conclude that  for every local observable $A$ with support in an interval $J$ and every $s \in \mathbb{C}$ with $|s| \leq \frac{\lambda}{4 \Omega_{0}}$
\begin{align}
\label{equa:localityEstimatesExponentialDecay1} \| \Gamma^{s}_{J_{L}}(A) - \Gamma^{s}_{J_{\ell}}(A)   \|  & \,\, \leq \,\, e^{2|s|\Omega_{0}|J|}  \,\, e^{ 4|s| \| \Phi\|_{\lambda} }\,\, e^{(4|s|\Omega_{0} - \lambda)  \ell}\,, \\[2mm]
\label{equa:localityEstimatesExponentialDecay2} \| \Gamma^{s}_{J_{L}}(A) \| & \,\, \leq \,\, e^{2|s|\Omega_{0}|J|} \,\, e^{ 4|s| \| \Phi\|_{\lambda} }\,.
 \end{align}
As a consequence, the sequence $s \mapsto \Gamma^{s}_{J_{\ell}}(A)$ converges as $\ell$ tends to infinite  on every compact subset of the open disk centered at the origin and radius $\lambda/(4 \Omega_{0})$ exponentially fast. 

\subsubsection{Lieb-Robinson bounds}

Let us illustrate how the Lieb-Robinson bounds for real time evolution can be reformulated in terms of the sequence $\Omega_{k}$. They are significantly better than the complex time version that we obtained for the one-dimensional case in Theorem \ref{Theo:FundamentalEstimations}.

\begin{Theo}\label{Theo:originalLiebRobinson}
Let us consider a quantum spin system over $\mathbb{Z}$ with local bounded interaction $\Phi$  and let  $A$ be a local observable with support in a finite set $\Lambda_{0} \subset \mathbb{Z}$. Then, for every $0 \leq \ell \leq L$ and every $t \in \mathbb{R}$ we have
\begin{align*} 
\| \Gamma_{\Lambda_{L}}^{t}(A) - \Gamma_{\Lambda_{\ell}}^{t}(A) \| \, \leq \, \| A\| \, |\Lambda_{0}|\, e^{4 |t| \Omega_{0}} \, \sum_{k \geq \ell + 1} \, \Omega_{k}^{\ast}(4 |t|)\,. 
\end{align*}
where we are denoting $\Lambda_{\ell} := \{ x \in \mathbb{Z} \colon \operatorname{dist}(x, \Lambda_{0}) \leq \ell \}$.
\end{Theo}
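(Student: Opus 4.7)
I would adapt the proof of Theorems~\ref{Theo:analyticLRgeneral} and~\ref{Theo:FundamentalEstimations} with one crucial modification: since $t \in \mathbb{R}$, the map $\Gamma_\Lambda^t$ is a $*$-automorphism and hence norm-preserving. This lets us avoid the Dyson-series bound $\|\Gamma_{\Lambda_k}^s(A)\| \leq e^{2|s|\Omega_0|\Lambda_k|}\|A\|$ that produced the factors $e^{2|s|\Omega_0|J|}$ and $e^{4|s|\Omega_0 k}$ in Theorem~\ref{Theo:FundamentalEstimations}; in the real-time setting they are replaced by the much cheaper prefactor $|\Lambda_0|$ and the $k$-independent $e^{4|t|\Omega_0}$.

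\textbf{Key steps.} First telescope
\[
\Gamma_{\Lambda_L}^t(A) - \Gamma_{\Lambda_\ell}^t(A) = \sum_{k=\ell+1}^{L}\bigl(\Gamma_{\Lambda_k}^t(A) - \Gamma_{\Lambda_{k-1}}^t(A)\bigr),
\]
and express each summand via the Duhamel identity
\[
\Gamma_{\Lambda_k}^t(A) - \Gamma_{\Lambda_{k-1}}^t(A) = i\int_0^t \Gamma_{\Lambda_k}^{t-u}\bigl([V_k, \Gamma_{\Lambda_{k-1}}^u(A)]\bigr)\, du, \qquad V_k := H_{\Lambda_k} - H_{\Lambda_{k-1}}.
\]
Using norm-preservation of $\Gamma_{\Lambda_k}^{t-u}$ yields
$\|\Gamma_{\Lambda_k}^t(A) - \Gamma_{\Lambda_{k-1}}^t(A)\| \leq \int_0^{|t|}\|[V_k, \Gamma_{\Lambda_{k-1}}^u(A)]\|\,du$. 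It then suffices to bound $\|[\Phi_X, \Gamma_{\Lambda_{k-1}}^u(A)]\|$ for each $\Phi_X$ appearing in $V_k$ by a path-sum that, once combined with the one-dimensional shell bound $|\Lambda_j \setminus \Lambda_{j-1}| \leq 2$ giving $W(j,k) \leq 2\Omega_{k-j}$, reorganizes into $\Omega_k^{\ast}(4|u|)$ as in formula~\eqref{equa:OmegaStarExpression1}. This is obtained by iterating the Duhamel identity and peeling off a norm-preserving $\Gamma_{\Lambda_{k-1}}^{\cdot}$ at every level, so that only the ``active'' commutators with fresh interaction terms contribute, producing the combinatorial chain $0 = \beta_0 < \beta_1 < \ldots < \beta_n = k$.

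\textbf{Main obstacle.} The delicate point is the bookkeeping required to extract the precise prefactors. The factor $|\Lambda_0|$ must appear because the innermost commutator $[\Phi_{X_1}, A]$ forces $X_1 \cap \Lambda_0 \neq \emptyset$, giving at most $|\Lambda_0|$ choices of anchor site. The factor $e^{4|t|\Omega_0}$ arises from resumming ``short'' near-$\Lambda_0$ contributions that in Theorem~\ref{Theo:analyticLRgeneral} were hidden inside the $|\Lambda_k|$-dependent bound $W(j,j) \leq \Omega_0|\Lambda_k|$, but which in the real-time case can be reconstituted into unitaries and dropped. Essentially, one must carry out the analog of Steps~I--IV in the proof of Theorem~\ref{Theo:analyticLRgeneral} while recognizing that all ``diagonal'' contributions $W(\beta_{j-1},\beta_j)$ with $\beta_{j-1} = \beta_j$ can be absorbed into a unitary factor and thus produce, at worst, the $k$-independent $e^{4|t|\Omega_0}$. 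This is the step that distinguishes the real case from the complex one, and it is exactly what makes the right-hand side independent of $L$, so that the sum extends to $k \geq \ell+1$.
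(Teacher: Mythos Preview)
Your intuition about the role of unitarity is right, but the route you sketch conflates two different arguments and has a gap at the key step.

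The paper does not telescope or adapt Steps~I--IV of Theorem~\ref{Theo:analyticLRgeneral}. It quotes directly the standard Lieb--Robinson commutator bound of \cite{NaOgSi06} for $\|[\Gamma_\Lambda^t(A),B]\|$ (this is where unitarity has already been used) together with the Duhamel inequality $\|\Gamma_{\Lambda_L}^t(A)-\Gamma_{\Lambda_\ell}^t(A)\|\le \sum_{X\cap\Lambda_\ell^c\ne\emptyset}\int_0^{|t|}\|[\Gamma_{\Lambda_\ell}^u(A),\Phi_X]\|\,du$, and then expands the resulting coefficients $b_k$ as \emph{site-indexed} chains $j_0\in\Lambda_0$, $j_1,\dots,j_{k-1}\in\mathbb{Z}$, $j_k\in\Lambda_\ell^c$ with weight $\prod_i\Omega_{|j_i-j_{i-1}|}$. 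The factor $|\Lambda_0|$ counts the anchors $j_0$; the factor $e^{4|t|\Omega_0}$ comes from a purely \emph{combinatorial} split of the step sizes $\alpha_i=|j_i-j_{i-1}|$ into those with $\alpha_i=0$ (each contributing $\Omega_0$) and those with $\alpha_i\ge 1$, followed by the binomial resummation of Step~IV.

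The gap in your plan is the claim that ``diagonal contributions $W(\beta_{j-1},\beta_j)$ with $\beta_{j-1}=\beta_j$ can be absorbed into a unitary factor''. Those diagonal terms are an artifact of the recursive organization of the Dyson coefficients $\mathcal{U}_k^{(m)}$ in Steps~I--III, and the norms $\|\mathcal{U}_k^{(m)}\|$ do not depend on whether time is real or complex --- there is no unitary to absorb them into at that stage. Unitarity must be exploited \emph{before} one passes to term-wise norm bounds on the Dyson series, which is exactly what the interaction-picture iteration behind the standard LR bound does. Your iterated-Duhamel idea is the right instinct, but once fleshed out it is a rederivation of the cited bound rather than the shell-based modification of Theorem~\ref{Theo:analyticLRgeneral} you describe, and it naturally produces site-indexed chains, not the strict shell chain $0=\beta_0<\dots<\beta_n=k$.
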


\begin{proof}
Let us recall that by \cite[Section 2.1]{NaOgSi06}, for every finite subset $\Lambda \subset \mathbb{Z}$, every local observable $B$ with support in $\mathfrak{B}$ and every $t \in \mathbb{R}$
\[ \| [\Gamma^{t}_{\Lambda}(A), B] \| \leq \| A\| \, \| B\| \, \sum_{k=0}^{\infty}{\frac{(2|t|)^{k}}{k!} \, a_{k}}\,, \] 
where $a_{0} = 1$ and for each $k \geq 1$
\begin{equation*}
a_{k} = \sum_{\substack{ X_{1} \cap \Lambda_{0} \neq \emptyset}} \,\, \sum_{  X_{2} \cap X_{1} \neq \emptyset} \,\, \ldots \, \sum_{ \substack{X_{k} \cap X_{k-1} \neq \emptyset \\ X_{k} \cap \mathfrak{B} \neq \emptyset}} \,\, \prod_{j=1}^{k}\| \Phi_{X_{j}}\| \,
\end{equation*}
where the sum is extended over subsets $X_{j} \subset \Lambda$. Combining this  with the following inequality (see e.g. \cite[Section 2.2]{NaOgSi06})
\[ \| \Gamma_{\Lambda_{L}}^{t}(A) - \Gamma_{\Lambda_{\ell}}^{t}(A) \| \,  \leq \, \sum_{\substack{ X \cap \Lambda_{\ell}^{c} \neq \emptyset \\ X \subset \Lambda_{L}}} \,\, \int_{0}^{|t|}{\|  [ \Gamma_{\Lambda_{\ell}}^{u}(A) \,,  \Phi_{X} ]   \|}\, d u\,, \]
we can estimate
\begin{equation} 
\| \Gamma_{\Lambda_{L}}^{t}(A) - \Gamma_{\Lambda_{\ell}}^{t}(A) \| \, \leq \, \|A \| \, \sum_{k=1}^{\infty}{\frac{(2 |t|)^{k}}{k!} b_{k}} 
\end{equation}
where
\begin{align*}
b_{k} \, & = \sum_{X_{1} \cap \Lambda_{0} \neq \emptyset} \,\, \sum_{ X_{2} \cap X_{1} \neq \emptyset} \,\, \ldots \, \sum_{X_{k-1} \cap X_{k-2} \neq \emptyset} \,\, \sum_{\substack{ X_{k} \cap X_{k-1} \neq \emptyset \\ X_{k} \cap \Lambda_{\ell}^{c} \neq \emptyset}} \,\, \| \Phi_{X_{1}}\| \, \ldots  \, \| \Phi_{X_{k}}\|  \\
& \leq \, \sum_{j_{0} \in \Lambda_{0}} \,\, \sum_{j_{1}, \ldots, j_{k-1} \in \mathbb{Z}} \,\, \sum_{j_{k} \in \Lambda_{\ell}^{c}} \,\, \sum_{X_{1} \ni j_{0}, j_{1}} \, \ldots \sum_{X_{k} \ni j_{k-1}, j_{k}} \,\,  \| \Phi_{X_{1}}\| \, \ldots \, \| \Phi_{X_{k}}\|\\
& \leq \, \sum_{j_{0} \in \Lambda_{0}} \,\, \sum_{j_{1}, \ldots, j_{k-1} \in \mathbb{Z}} \,\, \sum_{j_{k} \in \Lambda_{\ell}^{c}} \,\, \Omega_{|j_{1} - j_{0}|} \cdot \ldots \cdot \Omega_{|j_{k} - j_{k-1}|}\\[1mm]
&  \leq 2^{k} \, |\Lambda_{0}| \, \sum_{\alpha \in \mathbb{N}_{0}^{k}, \, |\alpha| > \ell} \Omega_{\alpha_{1}} \cdot \ldots \cdot \Omega_{\alpha_{k}}\\
& \leq 2^{k} \, |\Lambda_{0}| \, \sum_{j=1}^{k} \, \Big( \sum_{\alpha \in \mathbb{N}^{j}, \, |\alpha| > \ell} \Omega_{\alpha_{1}} \cdot \ldots \cdot \Omega_{\alpha_{j}} \Big) \, \Omega_{0}^{k-j} \, \binom{k}{j}\,.
\end{align*}
Arguing now as in Step IV of the proof of Theorem \ref{Theo:analyticLRgeneral},  we conclude the result
\end{proof}

Finally, we present a locality estimate which combines the complex time version with the Lieb-Robinson bounds. 

\begin{Coro} \label{Coro:stripLiebRobinson}
Let us consider a quantum spin system over $\mathbb{Z}$ with interaction $\Phi$ having exponential decay $\| \Phi\|_{\lambda} < \infty$ for some $\lambda>0$. Then, for very local observable $A$ with support in an interval $J$, every $1 \leq \ell \leq L$ and every complex number \mbox{$s = t + i \beta \in \mathbb{C}$} we have
\[  \| \Gamma_{J_{L}}^{s}(A) - \Gamma_{J_{\ell}}^{s}(A) \| \, \leq \, 2 \, |J| \, e^{2|\beta| \Omega_{0} |J|} \, e^{8|s| \, \| \Phi\|_{\lambda}} \, \ell \, e^{(4|\beta| \Omega_{0} - \lambda) \, \lfloor \ell/2 \rfloor} \,.  \]
In particular, the time evolution operator $s \longmapsto \Gamma^{s}(A)$ is well-defined and quasilocal on the strip $\{ s \in \mathbb{C} \colon |\operatorname{Im}(s)| < \lambda/(4 \Omega_{0})\}$.
\end{Coro}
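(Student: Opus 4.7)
The plan is to combine the pure-imaginary-time estimates \eqref{equa:localityEstimatesExponentialDecay1}--\eqref{equa:localityEstimatesExponentialDecay2} with the real-time Lieb--Robinson bound of Theorem~\ref{Theo:originalLiebRobinson}. The key observation is that, since $H_\Lambda$ commutes with itself, for $s = t + i\beta$ one has $e^{isH_\Lambda}=e^{itH_\Lambda}e^{-\beta H_\Lambda}$ and hence $\Gamma^s_\Lambda = \Gamma^t_\Lambda \circ \Gamma^{i\beta}_\Lambda$. The crucial point is that this arrangement places the non-isometric imaginary-time part $\Gamma^{i\beta}_\Lambda$ \emph{inside} the norm-preserving outer $\ast$-automorphism $\Gamma^t_\Lambda$; reversing the order would allow the non-unitary factor to act on observables supported on large regions, producing uncontrollable norm blow-ups. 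This is where the main technical care lies.

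The core step is a telescopic identity through an intermediate depth $m := \lfloor \ell/2 \rfloor$. Letting $B := \Gamma^{i\beta}_{J_m}(A)$, which has strict support in $J_m$ because $H_{J_m}$ does, I write
\begin{equation*}
\Gamma^s_{J_L}(A) - \Gamma^s_{J_\ell}(A) = \Gamma^t_{J_L}\bigl( \Gamma^{i\beta}_{J_L}(A) - B \bigr) + \bigl( \Gamma^t_{J_L}(B) - \Gamma^t_{J_\ell}(B) \bigr) + \Gamma^t_{J_\ell}\bigl( B - \Gamma^{i\beta}_{J_\ell}(A)\bigr).
\end{equation*}
The first and third terms are unitarily equivalent to $\Gamma^{i\beta}_{J_L}(A) - \Gamma^{i\beta}_{J_m}(A)$ and $\Gamma^{i\beta}_{J_m}(A) - \Gamma^{i\beta}_{J_\ell}(A)$ respectively, so \eqref{equa:localityEstimatesExponentialDecay1} applied with $\ell$ replaced by $m$ yields the decay factor $e^{(4|\beta|\Omega_0 - \lambda) m}$. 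For the middle term $B$ is an honest local observable supported on $J_m$, so Theorem~\ref{Theo:originalLiebRobinson} applies with $\Lambda_0 = J_m$; since $J_L = (J_m)_{L-m}$ and $J_\ell = (J_m)_{\ell-m}$, the effective depth is $\ell - m \geq m$.

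To bound the tail $\sum_{k \geq \ell-m+1} \Omega_k^\ast(4|t|)$, I observe from \eqref{equa:OmegaStarExpression2} that $\sum_{k \geq N+1} \Omega_k^\ast(x) \leq e^{-\lambda N} e^{x\|\Phi\|_\lambda}$, yielding an overall factor $e^{-\lambda m} e^{4|t|\|\Phi\|_\lambda}$, and $e^{-\lambda m} \leq e^{(4|\beta|\Omega_0 - \lambda)m}$ trivially. The norm of $B$ is controlled by \eqref{equa:localityEstimatesExponentialDecay2}, giving $\|B\| \leq \|A\| e^{2|\beta|\Omega_0|J|}e^{4|\beta|\|\Phi\|_\lambda}$, and the size of the support is $|J_m| = |J|+2m$.

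Collecting the three contributions, using $\Omega_0 \leq \|\Phi\|_\lambda$ and $|t|, |\beta| \leq |s|$, one absorbs the assorted exponentials into $e^{8|s|\|\Phi\|_\lambda}$ with some slack in the constant, while the combinatorial prefactor $2 + |J| + 2m$ is dominated by $2|J|\ell$ whenever $|J|, \ell \geq 1$. This produces the stated inequality. For the last assertion, the hypothesis $|\operatorname{Im}(s)| < \lambda/(4\Omega_0)$ makes $4|\beta|\Omega_0 - \lambda$ strictly negative, so the right-hand side decays exponentially in $\ell$. Therefore $\{\Gamma^s_{J_\ell}(A)\}_\ell$ is Cauchy in $\mathcal{A}$, its limit defines $\Gamma^s(A)$ in the thermodynamic limit, and by construction this limit is approximated in norm by the $\mathcal{A}_{J_\ell}$-supported truncations, i.e.\ it is quasi-local.
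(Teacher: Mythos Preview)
Your proof is correct and follows essentially the same approach as the paper: the same intermediate depth $m=\lfloor \ell/2\rfloor$, the same three-term telescopic splitting through $B=\Gamma^{i\beta}_{J_m}(A)$, the same use of \eqref{equa:localityEstimatesExponentialDecay1}--\eqref{equa:localityEstimatesExponentialDecay2} for the imaginary-time pieces and of Theorem~\ref{Theo:originalLiebRobinson} for the real-time piece, and the same tail estimate $\sum_{k>N}\Omega_k^\ast(x)\le e^{-\lambda N}e^{x\|\Phi\|_\lambda}$. The final constant-collecting is slightly loose in both your write-up and the paper (your claim that $2+|J|+2m\le 2|J|\ell$ fails for $|J|=\ell=1$), but this affects only the innocuous prefactor and not the exponential decay in $\ell$ that drives the result.
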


\begin{proof}
Let us assume that $\| A\| = 1$ and fix $j:= \lfloor \ell / 2 \rfloor$. Then, 
\begin{align*}
\| \Gamma_{J_{L}}^{s} & (A)  -  \Gamma_{J_{\ell}}^{s}(A) \|\\[3mm] 
& \leq \, \| \Gamma_{J_{L}}^{i \beta}(A) - \Gamma_{J_{j}}^{i \beta}(A)\|  + \| \Gamma_{J_{\ell}}^{i \beta}(A) - \Gamma_{J_{j}}^{i \beta}(A)\| + \| (\Gamma^{t}_{J_{L}} - \Gamma^{t}_{J_{\ell}}) \Gamma_{J_{j}}^{i \beta}(A) \|\,.
\end{align*}
The first two summands can be bounded using \eqref{equa:localityEstimatesExponentialDecay1}
as
\[ \| \Gamma_{J_{L}}^{i \beta}(A) - \Gamma_{J_{j}}^{i \beta}(A)\| + \| \Gamma_{J_{\ell}}^{i \beta}(A) - \Gamma_{J_{j}}^{i \beta}(A)\| \, \leq \, 2 \, e^{2|\beta| \Omega_{0} |J|} \, e^{4|\beta| \, \| \Phi\|_{\lambda}} \, e^{(4|\beta|\Omega_{0} - \lambda)j}\,. \]
The third one can be estimated using Theorem \ref{Theo:originalLiebRobinson}  and \eqref{equa:localityEstimatesExponentialDecay2},
\begin{align*}
\| (\Gamma^{t}_{J_{L}} - \Gamma^{t}_{J_{\ell}}) \Gamma_{J_{j}}^{i \beta}(A) \| \, & \, \leq \,  \| \Gamma_{J_{j}}^{i \beta}(A) \| \, |J_{j}| e^{4 |t| \Omega_{0}} \, e^{-\lambda(\ell - j)} e^{4|t| \, \| \Phi\|_{\lambda}}\\[2mm]
& \, \leq \, e^{2|\beta| \Omega_{0} |J|} \, e^{4 |\beta| \, \| \Phi\|_{\lambda}} \, |J_{j}| \, e^{4 |t| \Omega_{0}} \, e^{-\lambda(\ell - j)} e^{4|t| \, \| \Phi\|_{\lambda}}\\[2mm]
& \, \leq \, e^{6|s| \Omega_{0} |J|} \, e^{8|s| \, \| \Phi\|_{\lambda}} \, (\ell \, |J|) \, e^{- \lambda \, j}\,
\end{align*}
where in the last inequality we have used that $\ell-j \geq j$ by definition. Combining these estimates we conclude the result.
\end{proof}

\section{Local perturbations and expansionals}
\label{sec:expansionals}

In perturbation theory we study for a given Hamiltonian $H$ the effect of introducing a weak physical perturbance in the form of a new Hamiltonian $U$, namely $H + U$. It is then useful to relate $e^{- H}$ and $e^{-(H + U)}$ through identities of the form
\[ e^{ - (H + U)} \, = \,  E \, e^{ - H} \, E^{\dagger}  \]
for some suitable observable $E$ featuring the locality properties $U$.  Following Araki's approach  (see \cite[p. 135]{Araki69} or \cite{Araki73}) we take $E = E_{r}(U/2, H/2)$ where \\
\begin{equation}\label{equa:expansionals1} 
\begin{split}
 E_{r}(U;H)  & := e^{ - (H + U)} e^{H}\\  
 & = \mathbbm{1} + \sum_{m=1}^{\infty} (-1)^{m} \int_{0}^{1} d\beta_{1} \, \ldots \int_{0}^{\beta_{m}} d \beta_{m} \,\, \Gamma_{H}^{i \beta_{m}}(U) \ldots \Gamma_{H}^{i \beta_{1}}(U)\,,
 \end{split} 
 \end{equation}
and its inverse
\begin{equation}\label{equa:expansionals2} 
\begin{split}
E_{l}(U;H) &:= e^{-  H} e^{H + U} \\
& = \mathbbm{1} + \sum_{m=1}^{\infty}  \int_{0}^{1} d\beta_{1}  \ldots \int_{0}^{\beta_{m-1}} d \beta_{m} \,\, \Gamma_{H}^{i \beta_{1}}(U) \ldots \Gamma_{H}^{i \beta_{m}}(U)\,. 
\end{split}
\end{equation}
These identities are consequence of Duhamel's formula and allow us to apply the locality results on imaginary time evolution from the previous sections to obtain the following.

\begin{Theo}\label{Theo:BoundExpansionals}
Let us consider a quantum spin system on $\mathbb{Z}$ with bounded interaction $\Phi$. For every $\beta > 0$, $a \in \mathbb{Z}$  and $p \geq 0$ define
\[ E_{a,p}^{\beta}:= e^{-\beta H_{[a-p,a+1+p]}} \, e^{\beta H_{[a-p,a]} + \beta H_{[a+1,a+1+p]} }\,.  \]

%
%
%
%
%
%
%
%

\noindent Then, following the notation of Theorem \ref{Theo:FundamentalEstimations} , for every $0 \leq p \leq q$
\begin{align*}
\| E_{a,p}^{\beta}\|\, & \leq \, \exp{\left( \, \frac{1}{2}\, \sum_{k =1}^{p+1}  \, e^{4 \Omega_{0} \beta k} \, \Omega_{k}^{\ast}(4 \beta) \,  \right)}\,,\\[3mm]
 \| E_{a,p}^{\beta} - E_{a,q}^{\beta}\| & \leq  \, \frac{1}{2} \, \left( \sum_{k = p +1}^{q+1}  \, e^{4 \Omega_{0} \beta k} \, \Omega_{k}^{\ast}(4 \beta) \,  \right) \,  \exp{\left( \frac{1}{2} \, \sum_{k =1}^{q+1}  \, e^{4 \Omega_{0} \beta k} \, \Omega_{k}^{\ast}(4 \beta) \,  \right)}\,.
\end{align*}
Moreover, the same inequalities hold if we replace $\| E_{a,p}^{\beta}\|$ and $\| E_{a,p}^{\beta} - E_{a,q}^{\beta}\|$ with  $\| (E_{a,p}^{\beta})^{ -1}\|$ and $\| (E_{a,p}^{\beta})^{ -1} - (E_{a,q}^{\beta})^{ -1}\|$ repectively.
\end{Theo}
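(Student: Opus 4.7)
The strategy is to expand $E_{a,p}^\beta$ via the Duhamel/Dyson formula \eqref{equa:expansionals1} and reduce the norm bound to a combinatorial analysis parallel to Theorems \ref{Theo:analyticLRgeneral} and \ref{Theo:FundamentalEstimations}. Setting $I_p := [a-p,a+1+p]$, $H_0^p := H_{[a-p,a]} + H_{[a+1,a+1+p]}$, and $U_p := H_{I_p} - H_0^p = \sum_X \Phi_X$ (the sum running over $X \subset I_p$ straddling the cut $\{a,a+1\}$), a rescaling of \eqref{equa:expansionals1} yields
\begin{equation*}
E_{a,p}^\beta = \mathbbm{1} + \sum_{m=1}^\infty (-1)^m \int_{0 \leq t_m \leq \cdots \leq t_1 \leq \beta} \Gamma_{H_0^p}^{it_m}(U_p) \cdots \Gamma_{H_0^p}^{it_1}(U_p)\, dt_1 \cdots dt_m.
\end{equation*}
Taking operator norms and using the standard simplex estimate for products of time-ordered integrands gives the envelope $\|E_{a,p}^\beta\| \leq \exp\bigl(\int_0^\beta \|\Gamma_{H_0^p}^{it}(U_p)\|\,dt\bigr)$. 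It therefore suffices to prove $\int_0^\beta \|\Gamma_{H_0^p}^{it}(U_p)\|\,dt \leq \tfrac{1}{2}\sum_{k=1}^{p+1} e^{4\Omega_0 \beta k}\Omega_k^*(4\beta)$.

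To bound $\|\Gamma_{H_0^p}^{it}(U_p)\|$ I would re-expand using the power series \eqref{equa:TimeEvolutionPowerSeriesExpansion} of $\Gamma_{H_0^p}^{it}$ together with $U_p = \sum_X \Phi_X$, obtaining iterated commutators $\delta_{H_0^p}^n(\Phi_X)$ with $X$ straddling. Since $H_0^p$ only contains one-sided terms, each nested commutator adds some $\Phi_Y$ with $Y$ lying entirely in one half of the cut and meeting the running support. The combinatorial analysis now mimics Steps I--IV of the proof of Theorem \ref{Theo:analyticLRgeneral}: group the contributions by the sequence $0 = \gamma_0 < \gamma_1 < \cdots < \gamma_n = k$, where $k-1$ is the furthest distance of the terminal support from the cut and $\gamma_j - \gamma_{j-1}$ records the diameter of the $j$-th added subset. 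Summing over such sequences with fixed terminal level $k$ reproduces exactly the coefficient $\Omega_k^*(4|t|)$ from \eqref{equa:OmegaStarExpression1}, while the support-size prefactor $e^{2|t|\Omega_0 |J|}$ of \eqref{equa:FundamentalEstimations2} becomes $e^{4|t|\Omega_0 k}$ since the terminal support has width at most $2k$. Integrating over $t \in [0,\beta]$ and summing over $k \in \{1,\ldots,p+1\}$ produces the claimed exponent; the factor $\tfrac{1}{2}$ arises from symmetrizing over the two anchoring options (left-dominated versus right-dominated) of a straddling $X$ contributing to level $k$, which avoids double-counting.

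For $\|E_{a,p}^\beta - E_{a,q}^\beta\|$ with $p \leq q$ I would compare the Dyson series at the level of indexing sequences $(X_1, Y_{1,\bullet}, \ldots, X_m, Y_{m,\bullet})$. The summands common to both expansions cancel, and the survivors are those including at least one subset outside $I_p$; under the grouping above these correspond to terminal levels $k \in \{p+1, \ldots, q+1\}$. Factoring out the contributions of these new levels as a prefactor and bounding the remaining envelope by the same exponential as for $E_{a,q}^\beta$ produces the claimed estimate. Finally, $(E_{a,p}^\beta)^{-1} = E_l(\beta U_p; \beta H_0^p)$ admits the Dyson series \eqref{equa:expansionals2}, which differs from the $E_r$-series only by the absence of $(-1)^m$ ---irrelevant after taking norms--- so the identical bounds apply, yielding the ``same inequalities hold'' statement.

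The principal obstacle will be the combinatorial bookkeeping matching the iterated-commutator sums to the generating function \eqref{equa:OmegaStarExpression2} of $\Omega_k^*$, simultaneously tracking the factor $\tfrac{1}{2}$ (from the left/right symmetry at the cut) and the precise index range $k \in \{1,\ldots,p+1\}$ (from anchoring each straddling subset on whichever side has the shorter reach). This is the same delicate step that powered the proofs of Theorems \ref{Theo:analyticLRgeneral} and \ref{Theo:FundamentalEstimations}, so conceptually the main move is to recognize that unfolding each $\Gamma^{it}(U_p)$ in the Dyson expansion of $E_r$ into its own power series presents exactly the surface-crossing structure those theorems already analyzed.
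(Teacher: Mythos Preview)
Your strategy is sound and will reach the same destination, but it takes a noticeably different road from the paper's and leaves the hardest bookkeeping step unresolved.

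\textbf{What the paper does differently.} The paper writes $E_{a,p}^{\beta}$ as an $E_{l}$-expansional with base Hamiltonian the \emph{full} $H_{J_{p}}$ (not the split $H_{0}^{p}$), so the factors in the Dyson series are $\Gamma_{H_{J_{p}}}^{i\beta_{j}}(U_{J_{p}})$. It then avoids re-opening any combinatorics by decomposing the perturbation in layers,
\[
U_{J_{p}} = \sum_{j=0}^{p} U_{j}, \qquad U_{j}:=U_{J_{j}}-U_{J_{j-1}} \in \mathcal{A}_{J_{j}},\qquad \|U_{j}\|\le 2\,\Omega_{j+1},
\]
and applying Theorem~\ref{Theo:FundamentalEstimations} \emph{verbatim} to each $U_{j}$ (with supporting interval $J_{j}$ and outer interval $J_{p}=(J_{j})_{p-j}$). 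The resulting double sum $\sum_{j}\sum_{k}\Omega_{j+1}\,e^{4\Omega_{0}(j+1+k)}\Omega_{k}^{\ast}(4)$ is then collapsed into $\tfrac12\sum_{m}e^{4\Omega_{0}m}\Omega_{m}^{\ast}(4)$ by a single convolution identity among the $\Omega_{k}^{\ast}$; this is exactly where the factor $\tfrac12$ and the index range $1\le m\le p{+}1$ come from. The same layer decomposition handles the difference $\Gamma_{H_{J_{p}}}^{s}(U_{J_{p}})-\Gamma_{H_{J_{q}}}^{s}(U_{J_{q}})$ via \eqref{equa:FundamentalEstimations1}, and then the bound on $\|E_{a,p}^{\beta}-E_{a,q}^{\beta}\|$ follows by the elementary telescoping estimate for products inside the Dyson series.

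\textbf{Comparison and caveat.} Your route---evolve under $H_{0}^{p}$ and reproduce the Step~I--IV combinatorics of Theorem~\ref{Theo:analyticLRgeneral} from scratch---is legitimate, but it duplicates work already packaged in Theorem~\ref{Theo:FundamentalEstimations}. The one place your sketch is not yet convincing is the origin of the $\tfrac12$: ``symmetrizing over left- versus right-dominated anchoring'' is not what produces it in the paper, and a direct iterated-commutator count under $H_{0}^{p}$ will not obviously yield a clean halving. If you pursue your approach, I would recommend instead importing the paper's layer decomposition $U_{p}=\sum_{j}U_{j}$ (which is independent of which base Hamiltonian you choose) and then invoking Theorem~\ref{Theo:FundamentalEstimations} for each layer; that makes both the constant $\tfrac12$ and the range $1\le k\le p{+}1$ transparent without any new combinatorics.
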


We can assume that $\beta = 1$ by absorbing $\beta$ in the interaction, since if we define $\Phi_{X}' = \beta \Phi_{X}$ for every finite $X \subset \mathbb{Z}$ the ineraction $\Phi'$ satisfies for every $k \geq 0$
\[ \Omega'_{k} = \beta \, \Omega_{k} \quad \mbox{ and } \quad \Omega'^{*}_{k}(x) = \Omega^*_{k}(\beta x )\,. \]
Denote $J:=[a,a+1]$. For each $p \geq 0$, let   $J_{p}:=[a-p,a+1+p]$ and 
\[ U_{J_{p}} := \sum\{ \Phi_{X}\colon X \subset J_{p}, X \cap [a-p,a] \neq \emptyset, X \cap [a+1,a+1+p] \neq \emptyset \}\,. \]
With the notation of \eqref{equa:expansionals1}  and \eqref{equa:expansionals2},
\begin{equation}\label{equa:expansionals3}
E_{a,p}^{1} = E_{l}(U_{J_{p}};H_{J_{p}})  \quad , \quad (E_{a,p}^{1})^{-1} =E_{r}(U_{J_{p}};H_{J_{p}})\,. 
\end{equation}
The strategy of the proof is then clear.

\begin{Lemm}\label{Lemm:applyHamiltonianToItself}
With the previous notation,  for every $0 \leq p \leq q$ 
\begin{align}
\label{equa:applyHamiltonianToItself1} \sup_{|s| \leq 1} \| \Gamma_{H_{J_{p}}}^{s}(U_{J_{p}})\| & \, \leq \, \frac{1}{2} \, \sum_{k = 1}^{p+1}  \, e^{4 \Omega_{0} k} \, \Omega_{k}^{\ast}(4)\,,\\[2mm]
\label{equa:applyHamiltonianToItself2}   \sup_{|s| \leq 1} \, \| \Gamma_{H_{J_{p}}}^{s}(U_{J_{p}}) - \Gamma_{H_{J_{q}}}^{s}(U_{J_{q}})\| & \, \leq \, \frac{1}{2} \, \sum_{k = p +1}^{q+1} \,  e^{4 \Omega_{0}  k} \, \Omega_{k}^{\ast}(4)\,. 
\end{align}
\end{Lemm}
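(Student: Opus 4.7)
The plan is to carry out a direct combinatorial estimate on $\|\delta_{H_{J_p}}^m(U_{J_p})\|$ closely paralleling the proof of Theorem \ref{Theo:FundamentalEstimations} and its precursor Theorem \ref{Theo:analyticLRgeneral}. Per the paragraph preceding the lemma, we may assume $\beta = 1$, so it suffices to bound $\|\delta_{H_{J_p}}^m(U_{J_p})\|$ for each $m \geq 0$ and then sum $|s|^m/m!$ against these bounds over $m$.

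Writing $U_{J_p} = \sum_{X_0} \Phi_{X_0}$ (over $X_0 \subset J_p$ crossing the bond $(a, a+1)$) and $H_{J_p} = \sum_Y \Phi_Y$, the iterated commutator $\delta_{H_{J_p}}^m(U_{J_p})$ expands into nested commutators $\delta_{\Phi_{X_m}} \cdots \delta_{\Phi_{X_1}}(\Phi_{X_0})$ over sequences in $J_p$ satisfying the standard connectivity condition $X_j \cap (X_0 \cup \cdots \cup X_{j-1}) \neq \emptyset$. Each such nested commutator is bounded by $2^m \prod_j \|\Phi_{X_j}\|$, placing us in the combinatorial framework of Steps I--III of the proof of Theorem \ref{Theo:analyticLRgeneral}. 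I would then reorganise the sum by the convex hull diameter $k = \ell + r + 1$ of $S_m := X_0 \cup \cdots \cup X_m$, where $\mathrm{conv}(S_m) = [a-\ell, a+1+r]$ is forced to contain $\{a, a+1\}$ by the crossing of $X_0$. Parametrising intermediate growth as compositions of $k$ into positive increments---the same mechanism as Step III---recovers exactly the coefficient $\Omega_k^*(4)$ from \eqref{equa:OmegaStarExpression1}, and the weight $e^{4\Omega_0 k}$ arises from the analogue of Step IV applied to $e^{2|s|\Omega_0 |\mathrm{conv}(S_m)|} \leq e^{4\Omega_0 k}$, valid for $|s| \leq 1$ and $|\mathrm{conv}(S_m)| \leq 2k$.

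The factor $\tfrac{1}{2}$ and the cutoff $k \leq p+1$ are the subtle points and I expect them to be the main obstacle. The $\tfrac{1}{2}$ comes from the crossing constraint on $X_0$: whereas in the unrestricted setting of Theorem \ref{Theo:FundamentalEstimations} the bound $W(j,k) \leq 2\Omega_{k-j}$ accounts for both the left and right boundary extensions of a generic interval, here the initial sum over crossing $X_0$ is anchored at the bond and effectively picks up only one of these two directional contributions per hull, halving the count. The cutoff $k \leq p+1$ follows because every crossing $X_0 \subset J_p$ has leftmost and rightmost points within distance $p$ of the bond, so the reach $\max(\ell, r) + 1$ cannot exceed $p + 1$. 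For the difference estimate \eqref{equa:applyHamiltonianToItself2}, the identical analysis applies: sequences with $\mathrm{conv}(S_m) \subset J_p$ contribute in the same way to $\Gamma_{H_{J_p}}^s(U_{J_p})$ and $\Gamma_{H_{J_q}}^s(U_{J_q})$ and hence cancel in the difference, leaving only the tail $k \in \{p+2, \ldots, q+1\}$ that matches the claimed bound.
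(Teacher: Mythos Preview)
Your approach differs from the paper's in a substantive way, and as written it has a real gap. The paper does \emph{not} redo the combinatorics of Theorem~\ref{Theo:analyticLRgeneral} from scratch. Instead it decomposes $U_{J_p}=\sum_{j=0}^{p}U_j$ into layers $U_j:=U_{J_j}-U_{J_{j-1}}$ supported in $J_j$ with $\|U_j\|\le 2\,\Omega_{j+1}$, applies Theorem~\ref{Theo:FundamentalEstimations} to each $U_j$ as a black box, and then recombines via the convolution-type observation
\[
\sum_{j+k=m}e^{4\Omega_0(j+k)}\,\Omega_j\,\Omega_k^{\ast}(4)\ \longleftrightarrow\ \tfrac{1}{4}\,e^{4\Omega_0 m}\,\Omega_m^{\ast}(4),
\]
which is exactly where the constant $\tfrac12=2\cdot\tfrac14$ and the summation range $1\le m\le p+1$ come from. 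The index $m$ in the final bound is therefore (layer index $+1$) $+$ (one-sided growth from Theorem~\ref{Theo:FundamentalEstimations}); it is \emph{not} the diameter of the final support.

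This is precisely where your argument breaks. You set $k=\ell+r+1=\operatorname{diam}(\operatorname{conv}S_m)$, but then justify the cutoff $k\le p+1$ by a property of $X_0$ alone (``leftmost and rightmost points within distance $p$ of the bond''). That argument bounds the reach of $X_0$, not of $S_m$; and even granting $S_m\subset J_p$, the conclusion for your $k$ would be $k\le 2p+1$, not $p+1$. If you silently switch to the one-sided reach $k=\max(\ell,r)+1$, the cutoff works, but then you have not shown that this parametrisation produces exactly $\Omega_k^{\ast}(4)$, and your ``halving'' heuristic for the factor $\tfrac12$ is not a proof. The same inconsistency propagates to the difference estimate: with $k$ the full diameter, the surviving range is not $\{p+2,\ldots,q+1\}$.

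A correct direct argument along your lines would have to track separately the size of the crossing set $X_0$ and the subsequent one-sided growth of the support under iterated commutators, and then merge these two indices---which is exactly the content of the paper's layer decomposition plus the convolution identity. So either import that structure, or apply Theorem~\ref{Theo:FundamentalEstimations} to each $\Phi_{X_0}$ individually and sum; in either case you end up reproducing the paper's route rather than bypassing it.
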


\begin{proof}
Let us start with the following observation: for each $0< n_{1} \leq n_{2}$
\begin{align}\label{aux:applyHamiltonianToItself1} 
\sum_{n_{1} \leq j + k \leq n_{2}} \, e^{4 \Omega_{0}(j+k)} \, \Omega_{j} \, \Omega_{k}^{\ast}(4) \, \leq \, \frac{1}{4} \, \sum_{n_{1} \leq m \leq n_{2}}  \, e^{4  \Omega_{0} m} \, \Omega_{m}^{\ast}(4)\,. 
\end{align}
On the other hand, if we put $U_{-1}:=0$ then $U_{j}:= U_{J_{j}} - U_{J_{j-1}}$ has support in $J_{j}$ and satisfies the bound
\begin{equation}\label{aux:applyHamiltonianToItself2}  
\| U_{j} \| \, \leq \, 2 \, \Omega_{j+1}\,. 
\end{equation}
Having \eqref{aux:applyHamiltonianToItself1} and \eqref{aux:applyHamiltonianToItself2}  in mind, we get by Theorem \ref{Theo:FundamentalEstimations} that for $|s| \leq 1$
\begin{align*}
 \| \Gamma_{H_{J_{p}}}^{s}(U_{J_{p}})\|  \leq \sum_{j = 0}^{p}{\| \Gamma_{H_{J_{p}}}^{s}(U_{j})\|} \,&  \leq \, \sum_{j = 0}^{p} \| U_{j}\| \, e^{2  \Omega_{0} |J_{j}|} \, \left( \sum_{k = 0}^{p-j}e^{4\Omega_{0} k} \, \Omega_{k}^{\ast}(4) \right)\\[2mm]
 & \leq \sum_{j = 0}^{p} 2 \, \Omega_{j+1} \, e^{4 \Omega_{0} (j+1)} \, \left( \sum_{k = 0}^{p-j}e^{4 \Omega_{0} k} \, \Omega_{k}^{\ast}(4) \right) \\[2mm]
    & \leq  \frac{1}{2} \,  \sum_{m =1}^{p+1} \,  e^{4 \Omega_{0} m} \, \Omega_{m}^{\ast}(4)\,. 
\end{align*}
To prove the second bound, we can decompose
\begin{align*} 
\| \Gamma_{H_{J_{p}}}^{s}(U_{J_{p}}) - \Gamma_{H_{J_{q}}}^{s}(U_{J_{q}}) \| \,  \leq \,  \sum_{j=p+1}^{q} & \| \Gamma^{s}_{H_{J_{q}}}(U_{j})\| \, + \\
& + \, \sum_{j=0}^{p}{\| \Gamma_{H_{J_{p}}}^{s}(U_{j}) - \Gamma_{H_{J_{q}}}^{s}(U_{j})\|}\,. 
\end{align*}
We can now bound each term separately making use of Theorem \ref{Theo:FundamentalEstimations},
\begin{align*}
\sum_{j=0}^{p}{\| \Gamma_{H_{J_{p}}}^{s}(U_{j}) - \Gamma_{H_{J_{q}}}^{s}(U_{j})\|}\, & \, \leq \, \sum_{j=0}^{p} e^{2  \Omega_{0}|J_{j}|} \, \| U_{j}\| \,\left( \sum_{k = p-j+1}^{q-j} e^{4 \Omega_{0} k} \, \Omega_{k}^{\ast}(4)\right)\\[2mm]
\, & \, \leq \, \sum_{j=0}^{p} e^{4  \Omega_{0} (j+1)} \, 2 \, \Omega_{j+1} \,\left( \sum_{k = p-j+1}^{q-j} e^{4 \Omega_{0} k} \, \Omega_{k}^{\ast}(4)\right)\\[2mm]
\, &  \, \leq \, \frac{1}{2} \sum_{m = p+1}^{q+1}  \, e^{4 \Omega_{0} m} \, \Omega_{m}^{\ast}(4)\,
\end{align*}
and
\begin{align*}
\sum_{j=p+1}^{q}{\| \Gamma^{s}_{H_{J_{q}}}(U_{j})\|} & \, \leq \,  \sum_{j=p+1}^{q} e^{2 \Omega_{0} |J_{j}|} \, \| Q_{j}\| \, \left( \sum_{k = 0}^{q-j}{e^{4 \Omega_{0} k} \, \Omega_{k}^{\ast}(4)}\right)\\[2mm]
& \leq \sum_{j=p+1}^{q} e^{4 \Omega_{0} (j+1)} \, 2 \, \Omega_{j+1} \, \left( \sum_{k =0}^{q-j}{e^{4 \Omega_{0} k} \, \Omega_{k}^{\ast}(4)}\right)\\[2mm]
& \leq \frac{1}{2} \sum_{m = p+2}^{q+1}  \, e^{4 \Omega_{0} m} \, \Omega_{m}^{\ast}(4)\,. 
\end{align*}
Combining both inequalities we conclude the result.
\end{proof}

\begin{proof}[Proof of Theorem \ref{Theo:BoundExpansionals}]
We just explain the argument for $E_{a,p}^{1}$ since it is analogous for the inverse. Using  \eqref{equa:expansionals3} for $E_{a,p}^{1}$ and applying \eqref{equa:applyHamiltonianToItself1} to each factor of \eqref{equa:expansionals2}
\begin{align*} 
\| E_{a,p}^{1}\| \, & \leq \, \sum_{m=0}^{\infty}{ \left( \frac{1}{2} \, \sum_{k=1}^{p+1}  \, e^{4 \Omega_{0} k} \, \Omega^{\ast}_{k}(4 ) \right)^{m} } \frac{1}{m!}\, = \, \exp{\left( \frac{1}{2} \, \sum_{k =1}^{p+1}{  \, e^{4 \Omega_{0} k} \, \Omega^{\ast}_{k}(4)} \right)}\,.  
\end{align*}
On the other hand, notice that for every $0 \leq \beta_{1}, \ldots , \beta_{m} \leq 1$ we have applying \eqref{equa:applyHamiltonianToItself1} and \eqref{equa:applyHamiltonianToItself2} that
\begin{align*}
\Big\| \prod_{j=1}^{m} & { \Gamma_{H_{J_{p}}}^{\beta_{j}}(U_{J_{p}})}  -  \prod_{j=1}^{m}{\Gamma_{H_{J_{q}}}^{\beta_{j}}}(U_{J_{q}}) \Big\| \\[2mm]
& \, \leq \, \sum_{k=1}^{m} \, \prod_{j = 1}^{k-1}{\| \Gamma_{H_{J_{q}}}^{\beta_{j}}(U_{J_{q}})\|} \, \cdot \, \| \Gamma_{H_{J_{p}}}^{\beta_{j}}(U_{J_{p}}) - \Gamma_{H_{J_{q}}}^{\beta_{j}}(U_{J_{q}}) \| \cdot \prod_{j = k+1}^{m}{\| \Gamma_{H_{J_{p}}}^{\beta_{j}}(U_{J_{p}})\|}\\[2mm]
& \leq m \, \left( \frac{1}{2} \, \sum_{k = p+ 1}^{q+1}  \, e^{4 \Omega_{0} k} \, \Omega_{k}^{\ast}(4) \right) \,\left( \frac{1}{2} \, \sum_{k = 1}^{q+1}  \, e^{4 \Omega_{0} k} \, \Omega_{k}^{\ast}(4) \, \right)^{m-1}\,.
\end{align*}
Hence
\begin{align*}
\| E_{a,p}^{1} - E_{a,q}^{1}\| & \leq \sum_{m \geq 1} \, \frac{1}{m!} \, \sup_{0 \leq \beta_{1}, \ldots, \beta_{m} \leq 1} \Big\| \prod_{j=1}^{m}  { \Gamma_{H_{J_{p}}}^{\beta_{j}}(U_{J_{p}})}  -  \prod_{j=1}^{m}{\Gamma_{H_{J_{q}}}^{\beta_{j}}}(U_{J_{q}}) \Big\|\\[2mm]
& \leq  \, \frac{1}{2} \, \left( \sum_{k = p+1}^{q+1}  \, e^{4 \Omega_{0} k} \,  \Omega_{k}^{\ast}(4) \right)  \, \exp{\left(  \frac{1}{2} \sum_{k = 1}^{q+1}  \, e^{4 \Omega_{0} k} \, \Omega_{k}^{\ast}(4) \,  \right)}\,.
\end{align*}
\end{proof}

\section{Phase transitions and thermal states}
\label{sec:PhaseTransition}

Broadly speaking, phase transitions are singularities in correlation and thermodynamic functions of the temperature or the interactions in the thermodynamic limit. It is folklore that there is no thermal phase transition in one-dimension. To be more formal, let us restrict to interactions $\Phi$ on the one-dimensional lattice $\mathbb{Z}$. For each interval $[a,b] \subset \mathbb{Z}$  the \emph{local Gibbs state on $[a,b]$ at inverse temperature $\beta$} is defined as
\[ \varphi_{\beta}^{[a,b]}(Q) = \frac{\operatorname{tr}(e^{- \beta H_{[a,b]}} Q)}{\operatorname{tr}(e^{- \beta H_{[a,b]}})}\,, \quad Q \in \mathcal{A}_{\mathbb{Z}} \,. \]
For general C$^\ast$-dynamical systems, equilibrium states at inverse temperature $\beta$ are defined in terms of the KMS condition and may feature subtle issues such as nonexistence and nonuniqueness. Indeed, the existence and uniqueness of these KMS states is a feature of the abscence of phase transition. These properties are satisfied in our setting for rather general conditions on the interaction, see \cite{BraRob87}. In classical systems this is a result by Ruelle \cite{Ruelle1968}, while in quantum systems there are uniqueness results by Araki  \cite{Araki75} and Kishimoto \cite{Kishimoto76} (see also \cite[Section 6.2.5]{BraRob97}) that apply to interactions with uniformly bounded surface energies and, in particular, to bounded interactions $\Phi$ on $\mathbb{Z}$. They yield that for every $\beta > 0$ the following limit exists 
\[ \varphi_{\beta}(Q) = \lim_{\substack{a \rightarrow - \infty \\ b \rightarrow \infty}}{\varphi_{\beta}^{[a,b]}(Q)}\,, \quad Q \in \mathcal{A}\,.  \]
and defines the unique equilibrium state at inverse temperature $\beta$. 

Another feature of phase transitions is the clustering property of $\varphi_{\beta}$, namely the decay of the correlation function
\[ \operatorname{Corr}_{\beta}(Q_{1},Q_{2}) = \varphi_{\beta}(Q_{1},Q_{2}) - \varphi_{\beta}(Q_{1}) \varphi_{\beta}(Q_{2})\,. \] 
in terms of the distance that separates the support of the observables $Q_{1}$ and $Q_{2}$. That these correlations decay exponentially fast can be proven for rather general graphs at sufficiently high temperatures (small $\beta$) under suitable conditions on the interactions by means of cluster expansion techniques, see e.g. Kliesch et al. \cite{KGKRE14} in the finite range case, or the more recent result by Kuwahara et al. \cite{KuKaBra19} for the decay of the conditional mutual information even under long-range interactions. Similar ideas have been also applied to prove analiticity of thermodynamic and correlation functions at high enough temperatures, see \cite[Chapter III]{GrDo72}.

For low temperatures, it seems however that the typical arguments rely on the \emph{transfer operator method}. The idea is to construct for the given interaction $\Phi$ an operator $\mathcal{L}^{\Phi}$ on the space of observables. If $\Phi$ decays sufficiently fast, then $\mathcal{L}^{\Phi}$ has ``nice'' spectral properties in the spirit of Perron-Frobenius theorem, that in turn yield ``nice'' properties for our thermodynamic and correlation functions, see \cite{Baladi00, Ruelle2004} and for the non-commutative version \cite{Matsui01}.

In classical lattice systems, this approach was exploited by Ruelle \cite{Ruelle1968} to show that, under suitable polynomial decay assumptions on the interaction, correlations decay exponentially fast at every temperature. Moreover, Araki \cite{Araki69} showed that the pressure and correlation functions depend analitically on the temperature and the interaction if the latter decay exponentially fast, see also \cite{Ruelle2004}. This result was later improved by Dobrushin \cite{Dobrushin73} to more general interactions. By contrast, recall that there are examples of two-body interactions with polynomial decay in one-dimension featuring abscence of phase transition, see Dyson \cite{Dyson69}. We refer to \cite{GrDo72} and references therein for further information.

In quantum lattice systems, it seems that the only known result for low temperatures is due to Araki \cite{Araki69} for finite range interactions. Using the locality estimates from previous sections we extend Araki's results in the following form.

\begin{Theo}\label{Theo:noPhaseTranstionStrip}
Let us consider a quantum spin system over $\mathbb{Z}$ and a translational invariant interaction $\Phi$ with exponential decay $\| \Phi\|_{\lambda}<\infty$ for some $\lambda > 0$. Then, for each $\beta \in (0, \frac{\lambda}{2 \Omega_{0}})$ there exist constants $C,\delta > 0$ satisfying:\\[-2mm] 
\begin{enumerate}\itemsep0.7em
\item[(i)] For every $a,b \in \mathbb{Z}$ with $a < b$ and every $Q \in \mathcal{A}_{[a,b]}$  
\[ |\varphi_{\beta}(Q) - \varphi_{\beta}^{[a-k,b+k]}(Q)| \leq C e^{- \delta k}\,.\\ \]
\item[(ii)] For every $x \in \mathbb{Z}, k \in \mathbb{N}$, if $Q_{1} \in \mathcal{A}_{(- \infty, x]}$ and $Q_{2} \in \mathcal{A}_{[x+k, \infty)}$ then
\[ |\varphi_{\beta}(Q_{1}Q_{2}) - \varphi_{\beta}(Q_{1}) \, \varphi_{\beta}(Q_{2})| \, \leq \, C e^{ - \delta k}\,. \]
In other words, $\varphi_{\beta}$ has exponential decay of correlations.
\end{enumerate}
\end{Theo}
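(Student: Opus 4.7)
The plan is to adapt Araki's strategy from \cite{Araki69}, replacing the finite-range expansional bounds used there with Theorem \ref{Theo:BoundExpansionals}. The core identity is the \emph{symmetric} expansional factorisation: for any interval $\Lambda$ and any partition $\Lambda = \Lambda_L \sqcup \Lambda_R$ into two sub-intervals meeting at a cut site $y$, one has
\[
e^{-\beta H_\Lambda} \,=\, E \, \cdot \, e^{-\beta H_{\Lambda_L}} e^{-\beta H_{\Lambda_R}} \, \cdot \, E^\dagger, \qquad E := e^{-\beta H_\Lambda/2} \, e^{\beta(H_{\Lambda_L}+H_{\Lambda_R})/2},
\]
which is valid because $H_{\Lambda_L}$ and $H_{\Lambda_R}$ commute. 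The operator $E$ is precisely $E_{y,p}^{\beta/2}$ in the notation of Theorem \ref{Theo:BoundExpansionals}, with $p$ determined by the geometry of $\Lambda$. The appearance of $\beta/2$ rather than $\beta$ is the key source of the factor of two in the stated threshold: applying Theorem \ref{Theo:BoundExpansionals} with time parameter $\beta/2$ produces tails of the form $\sum_{k > p} e^{2\Omega_0\beta k}\, \Omega_k^*(2\beta)$, which by the exponential-decay estimates of Subsection 2.3.2 are bounded by $C\, e^{-\delta p}$ with $\delta = \lambda - 2\Omega_0\beta > 0$ precisely when $\beta < \lambda/(2\Omega_0)$.

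The first step is to deduce from Theorem \ref{Theo:BoundExpansionals} that the sequence $\{E_{y,p}^{\beta/2}\}_{p \geq 0}$ is Cauchy in operator norm with geometric rate $e^{-\delta p}$, converging to a limit $E_y \in \mathcal{A}$, and similarly for its inverse. For part (i), apply the symmetric factorisation at both boundary cuts of $[a-k,b+k]$ inside a larger interval $[a-k',b+k']$ to write $e^{-\beta H_{[a-k',b+k']}}$ as a product of exponentials of the three local Hamiltonians on $[a-k',a-k-1]$, $[a-k,b+k]$ and $[b+k+1,b+k']$, conjugated by two boundary expansionals. Forming the ratio that defines $\varphi_\beta^{[a-k',b+k']}(Q)$ and replacing each boundary expansional by its infinite-volume limit introduces an error $O(e^{-\delta k})$ uniformly in $k'$. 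Letting $k' \to \infty$ then yields existence of the thermodynamic limit $\varphi_\beta$ and the quantitative estimate in (i).

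For part (ii), place the cut at $y = x + \lfloor k/2 \rfloor$ in the middle of the gap between $\operatorname{supp}(Q_1)$ and $\operatorname{supp}(Q_2)$, and apply the symmetric factorisation in a large finite interval $\Lambda$. Replacing $E_y$ by its truncation $E_{y,p}^{\beta/2}$ with $p = \lfloor k/4 \rfloor$ introduces an error $O(e^{-\delta k/4})$ while localising the expansional to a region disjoint from both supports. The trace $\operatorname{tr}(e^{-\beta H_\Lambda} Q_1 Q_2)$ then splits, via cyclicity and the disjoint-support structure, into a product over the two halves of $\Lambda$; the same is true of $\operatorname{tr}(e^{-\beta H_\Lambda})$, and the expansional factors cancel in the ratio, leaving $\varphi_\beta^\Lambda(Q_1)\,\varphi_\beta^\Lambda(Q_2) + O(e^{-\delta k/4})$. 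Passing to the thermodynamic limit using (i) yields (ii). The main obstacle is precisely this cancellation of the truncated expansional and its adjoint in the ratio of traces: even though $E_{y,p}^{\beta/2}$ is localised inside the gap and therefore commutes with $Q_1$ and $Q_2$, it still sits between the two Gibbs factors $e^{-\beta H_{\Lambda_L}}$ and $e^{-\beta H_{\Lambda_R}}$, and one must carefully reorganise the traces using cyclicity and the fact that observables in $\mathcal{A}_{\Lambda_L}$ and $\mathcal{A}_{\Lambda_R}$ commute in order to produce the clean product structure; this is exactly the step where Araki's bookkeeping must be replicated and where Theorem \ref{Theo:BoundExpansionals} provides the quantitative input.
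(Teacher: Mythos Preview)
Your proposal contains a genuine gap in the argument for part (ii). After truncating the expansional, you arrive (using cyclicity and that $E_{y,p}$ commutes with $Q_1,Q_2$) at
\[
\frac{\operatorname{tr}\big(e^{-\beta H_{\Lambda_L}}\,e^{-\beta H_{\Lambda_R}}\,E_{y,p}^{\dagger}E_{y,p}\,Q_1Q_2\big)}{\operatorname{tr}\big(e^{-\beta H_{\Lambda_L}}\,e^{-\beta H_{\Lambda_R}}\,E_{y,p}^{\dagger}E_{y,p}\big)}\,.
\]
The operator $F:=E_{y,p}^{\dagger}E_{y,p}$ straddles the cut; it is bounded and bounded below, but it is \emph{not} a scalar and does not cancel in the ratio. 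Expanding $F=\sum_\alpha A^L_\alpha\otimes A^R_\alpha$ across the cut turns the numerator into $\sum_\alpha \operatorname{tr}(e^{-\beta H_{\Lambda_L}}A^L_\alpha Q_1)\,\operatorname{tr}(e^{-\beta H_{\Lambda_R}}A^R_\alpha Q_2)$, which is close to a product of Gibbs expectations only if you already know that $\operatorname{tr}(e^{-\beta H_{\Lambda_L}}A^L_\alpha Q_1)\approx \operatorname{tr}(e^{-\beta H_{\Lambda_L}}A^L_\alpha)\,\varphi^{\Lambda_L}(Q_1)$; but this is exactly the exponential clustering you are trying to establish. So the ``clean product structure'' you invoke does not follow from cyclicity and disjoint supports, and the argument is circular. (Contrary to what you suggest, this is \emph{not} how Araki proceeds either.)

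The paper's proof takes a genuinely different route: it builds the one-sided Ruelle transfer operator $\mathcal{L}$ and shows, via Schauder--Tychonov fixed-point arguments and a flattening estimate (Theorems~\ref{Theo:fixedState}, \ref{Theo:fixedPointH}, \ref{Theo:Flatness}, \ref{Theo:mainConvergence}), that the rescaled powers $L^n=\mu^{-n}\mathcal{L}^n$ converge exponentially fast on $\mathcal{A}_{\mathbb{N}}(x)$ to a rank-one projector $Q\mapsto\nu(Q)h$. Both (i) and (ii) are then read off from this spectral gap (Theorem~\ref{Theo:ExponentialDecay}, Lemma~\ref{Lemm:convergenceStates}, Theorem~\ref{Theo:KMSstate}). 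The expansional bounds of Theorem~\ref{Theo:BoundExpansionals} enter only to prove that $E_{(n,a)}\to E_n$ with the right rate, hence that $\mathcal{L}$ is well defined and satisfies the estimates of Theorem~\ref{Theo:PropertiesLn}; they are never used to force a direct factorisation of traces. Your identification of the $\beta/2$ in the symmetric expansional as the origin of the threshold $\lambda/(2\Omega_0)$ is correct, and your outline for (i) is not unreasonable, but to get (ii) you need the transfer-operator machinery (or an equivalent spectral argument), not a one-cut splitting.
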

The proof follows the lines of the original argument by Araki \cite{Araki69}, its later extension by Golodets and Neshveyev \cite{GoNe98} to AF-algebras, and the more recent work on the non-commutative Ruelle transfer operator by Matsui \cite{Matsui01}.

\subsection{Idea of the proof. Notation}
 We can absorb $\beta$ in the interaction, and so we just have to prove the version $\beta = 1$ of the theorem assuming that $\lambda > 2 \Omega_{0}$. We analize the one-sided version of the problem on $\mathcal{A}_{\mathbb{N}}$. Let us consider for every $a \geq 1$
\[ \varphi^{[1,a]}\left(Q \right) = \frac{\operatorname{tr}\left(e^{-H_{[1,a]}}Q \right)}{\operatorname{tr}\left(e^{-H_{[1,a]}} \right)}\,, \quad Q \in \mathcal{A}_{\mathbb{N}}\,. \]
Fixed $1 \leq n < a$ let us denote
\begin{align*}
\widetilde{E}_{(n,a)}& :=e^{-\frac{1}{2}H_{[1,a]}} e^{\frac{1}{2} H_{[1+n,a]}}\\[2mm]
E_{(n,a)}& :=e^{-\frac{1}{2}H_{[1,a]}} e^{\frac{1}{2}H_{[1,n]} + \frac{1}{2} H_{[1+n,a]}}
\end{align*}
which satisfy
\begin{equation}\label{equa:ExpansionalDecompositionFormulas} 
e^{-\frac{1}{2}H_{[1,a]}} \, = \,  \widetilde{E}_{(n,a)} \, e^{-\frac{1}{2}H_{[1+n,a]}} \, = \, E_{(n,a)} \, e^{-\frac{1}{2}H_{[1,n]} - \frac{1}{2} H_{[1+n,a]}}\,.
\end{equation}

\begin{figure}[h]\label{Fig:ExpansionalDecompositionFormulas}
\begin{tikzpicture}[scale=0.89]

\pgfmathsetmacro{\step}{0.5}

\begin{scope}
\foreach \x in {1,...,10} \draw[fill=black] ({\step*\x},-0.2) circle [radius = 0.05];   

\draw (\step-0.1,0) rectangle ({\step*10 + 0.1},0.6);
\draw ({\step*5.5 + 0.1},0.3) node {\footnotesize $e^{- \frac{1}{2} \, H_{[1,a]}}$};
\end{scope}


\begin{scope}[xshift=6cm]
\foreach \x in {1,...,10} \draw[fill=black] ({\step*\x},-0.2) circle [radius = 0.05];   

\draw[fill=black!15] (\step-0.1,0) rectangle ({\step*10 + 0.1},0.6);
\draw[black] ({\step*5.5},0.3) node {\footnotesize $e^{- \frac{1}{2} \, H_{[1,a]}}$};

\draw[fill=black!15] (6*\step-0.1,0.8) rectangle ({10*\step + 0.1},1.4);
\draw[black] ({\step*8},1.1) node {\footnotesize $e^{ \frac{1}{2} \, H_{[1+n,a]}}$};

\draw (6*\step-0.1,1.6) rectangle ({10*\step + 0.1},2.2);
\draw[black] ({\step*8},1.9) node {\footnotesize $e^{- \frac{1}{2} \, H_{[1+n,a]}}$};
\end{scope}


\begin{scope}[xshift=12cm]
\foreach \x in {1,...,10} \draw[fill=black] ({\step*\x},-0.2) circle [radius = 0.05];   

\draw[fill=black!15] (\step-0.1,0) rectangle ({\step*10 + 0.1},0.6);
\draw[black] ({\step*5.5},0.3) node {\footnotesize $e^{- \frac{1}{2} \, H_{[1,a]}}$};

\draw[fill=black!15] (6*\step-0.1,0.8) rectangle ({10*\step + 0.1},1.4);
\draw[black] ({\step*8},1.1) node {\footnotesize $e^{ \frac{1}{2}  H_{[1+n,a]}}$};

\draw (6*\step-0.1,1.6) rectangle ({10*\step + 0.1},2.2);
\draw[black] ({\step*8},1.9) node {\footnotesize $e^{- \frac{1}{2}  H_{[1+n,a]}}$};

\draw[fill=black!20] (1*\step-0.1,0.8) rectangle ({5*\step + 0.1},1.4);
\draw[black] ({\step*3},1.1) node {\footnotesize $e^{ \frac{1}{2}  H_{[1,n]}}$};

\draw (1*\step-0.1,1.6) rectangle ({5*\step + 0.1},2.2);
\draw[black] ({\step*3},1.9) node {\footnotesize $e^{- \frac{1}{2}  H_{[1,n]}}$};

\end{scope}

\end{tikzpicture}
\caption{\small From left to right, decomposition of $e^{-\frac{1}{2}H_{[1,a]}}$ in terms of the factors $\widetilde{E}_{(n,a)}$ and $E_{(n,a)}$ represented both with shaded boxes.}
\end{figure}
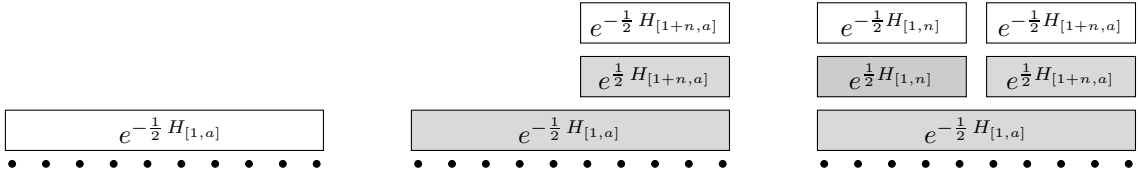

\noindent Using that $\tau_{-n}$ is an algebra homomorphism from $\mathcal{A}_{[1+n,\infty)}$ into $\mathcal{A}_{[1,\infty)}$, we have
\begin{align*}
\operatorname{tr}\left( e^{-H_{[1,a]}}Q\right) & \, = \, \operatorname{tr}\big( \,e^{-H_{[1+n,a]}} \, \widetilde{E}_{(n,a)}^{\dagger} \,Q \, \widetilde{E}_{(n,a)} \, \big)\\[2mm]
& \, = \, \operatorname{tr}\left(\,  e^{-H_{[1,a-n]}} \, \mathcal{L}_{(n,a)}(Q)\, \right)\,
\end{align*}
where $\mathcal{L}_{(n,a)}:\mathcal{A}_{\mathbb{N}} \longrightarrow \mathcal{A}_{\mathbb{N}} $ is given by
\[ \mathcal{L}_{(n,a)}(Q) := \tau_{-n} \operatorname{tr}_{[1,n]}\big(\widetilde{E}_{(n,a)}^{\dagger} Q \widetilde{E}_{(n,a)} \big)\, = \, \tau_{-n} \operatorname{tr}_{[1,n]}\big(e^{-H_{[1,n]}}\, E_{(n,a)}^{\dagger} Q E_{(n,a)} \big).\]
Therefore, we can rewrite local Gibbs state as
\begin{equation}\label{equa:localGibbsL}
\varphi^{[1,a]}(Q) \, = \, \frac{\varphi^{[1,a-n]}\big( \mathcal{L}_{(n,a)}(Q)\big)}{\varphi^{[1,a-n]}\big( \mathcal{L}_{(n,a)}(\mathbbm{1})\big)}\,.
\end{equation}

The rest of the proof has been decomposed into several stages corresponding to the following subsections. Let us briefly sketch them here:\\

\begin{enumerate}
\item[(1)] The estimates on the expansionals given in subsection \eqref{sec:expansionals} yield that for every $n \in \mathbb{N}$ the elements $E_{(n,a)}$  converge as $a \rightarrow + \infty$ to a certain $E_{n}$ exponentially fast. In particular, the family $\mathcal{L}_{(n,a)}$ strongly converges to some operator $\mathcal{L}_{n}$.\\
\item[(2)] The sequence $\mathcal{L}_{n}$ satisfies  $\mathcal{L}_{n} = \mathcal{L} \circ \ldots \circ \mathcal{L}$ ($n$ times) where  $\mathcal{L} : = \mathcal{L}_{1}$ is the so-called \emph{Ruelle transfer operator} that maps $\mathcal{A}_{\mathbb{N}}(x)$ into $\mathcal{A}_{\mathbb{N}}(x)$ for a suitable constant $x>1$ (recall Section \ref{sec:notation} for nomenclature). Moreover, there exist a positive constant $\mu >0$, a state $\nu$ over $\mathcal{A}_{\mathbb{N}}$ and a positive and invertible observable $h$ in $\mathcal{A}_{\mathbb{N}}(x)$ such that the rank-one operator
\[ \mathcal{P}:\mathcal{A}_{\mathbb{N}}(x) \longrightarrow \mathcal{A}_{\mathbb{N}}(x) \,, \quad Q \,\, \longmapsto \,\,  \nu(Q) h \]
satisfies\\[-3mm]
\begin{enumerate}\itemsep0.5em 
\item[(i)]  $\mathcal{P} \circ \mathcal{P} = \mathcal{P}$,
\item[(ii)]  $\mathcal{L} \circ \mathcal{P} = \mathcal{P} \circ \mathcal{L} = \mu \, \mathcal{P}$.\\[-3mm]
\end{enumerate}

\noindent In particular, we can decompose for every $n \in \mathbb{N}$
\begin{align*} 
\mathcal{L}^{n} = \mu^{n} \, \mathcal{P} \, + \, \mathcal{L}^{n}(\mathbbm{1} - \mathcal{P})\,.
\end{align*}
\item[(3)] The rescaled operator $L := \mu^{-1} \mathcal{L}: \mathcal{A}_{\mathbb{N}}(x) \longrightarrow \mathcal{A}_{\mathbb{N}}(x)$ satisfies that there is $\gamma >1$ such that
\[  \lim_{n \rightarrow \infty}{\gamma^{n}} \, \|| L^{n}(\mathbbm{1} - \mathcal{P}) |\|_{1,x} = 0\,.\] 
\item[(4)] The sequence $\nu \circ \tau_{n}$ converges to a translation invariant state $\psi^{[1,\infty)}$ on $\mathcal{A}_{[1,\infty)}$, which can be extended to a (unique) translation invariant state $\psi$ over $\mathcal{A}_{\mathbb{Z}}$ satisfying exponential decay of correlations. The state $\psi$ is actually the infinite-volume KMS state associated to $\Phi$ at $\beta=1$.
\end{enumerate}


\subsection{The map $\mathcal{L}_{n}$. Definition and basic properties}

We have to prove for a fixed $n \in \mathbb{N}$ that $E_{(n,a)}$, $\widetilde{E}_{(n,a)}$ and their adjoint converge as $a$ tends to infinity. For that we will make use of Theorem \ref{Theo:BoundExpansionals}. Let us consider the slightly modified version of the error terms that appear there
\begin{equation}\label{equa:constantsExpansionalsConv} 
\mathcal{G} := \exp{\left( \, \mbox{$\sum_{k \geq 1}$}  \, e^{2 \Omega_{0} k} \, \Omega_{k}^{\ast}(2) \, \right)} \quad , \quad \mathcal{G}_{\ell}:= \mathcal{G} \,\, \mbox{$\sum_{k \geq \ell}$} \, e^{2 \Omega_{0} k} \, \Omega_{k}^{\ast}(2)\quad \ell \geq 1\,. 
\end{equation}
Using the exponential decay condition $\| \Phi\|_{\lambda} < \infty$ we can argue as in the proof of \eqref{equa:localityEstimatesExponentialDecay1}-\eqref{equa:localityEstimatesExponentialDecay2} to deduce that
\[ \mathcal{G} \leq \exp{\big( e^{2 \, \| \Phi\|_{\lambda}}  \big)} \]
and
\begin{align*} 
\sum_{\ell \geq 1}{\mathcal{G}_{\ell} \, e^{(\lambda - 2 \Omega_{0}) \ell}} & \, \leq \, \exp{\big( e^{2 \, \| \Phi\|_{\lambda}}  \big)}\, \sum_{ \ell \geq 1} \, \sum_{k \geq \ell} e^{(2\Omega_{0} - \lambda) \, (k-\ell)} e^{\lambda \, k} \, \Omega^{\ast}_{k}(2)\\[2mm] 
& \, = \, \exp{\big( e^{2 \, \| \Phi\|_{\lambda} } \big)} \, \sum_{k\geq 1} \, \sum_{1 \leq \ell \leq k} \, e^{(2\Omega_{0} - \lambda) \, (k-\ell)} e^{\lambda \, k} \, \Omega^{\ast}_{k}(2) \\[2mm]
& \, \leq \, \exp{\big( e^{2 \, \| \Phi\|_{\lambda} } \big)} \, \big( \,  \mbox{$\sum_{\ell \geq 0}$} \, e^{(2 \Omega_{0} - \lambda)\ell} \, \big) \, \big(\,  \mbox{$\sum_{k \geq 1}$} \, e^{\lambda k} \, \Omega_{k}^{\ast}(2) \, \big)\\[4mm]
& \, \leq  \, \exp{\big(2\| \Phi\|_{\lambda} + e^{2 \, \| \Phi\|_{\lambda}}\big)} \, \big( \, \mbox{$\sum_{\ell \geq 0}$} \, e^{(2 \Omega_{0} - \lambda)\ell}\big) \, < \, \infty\,.
\end{align*}
In particular, the latter yields that $\mathcal{G}_{\ell}$ decays exponentially fast. We have then by Theorem \ref{Theo:BoundExpansionals} the following convergence result.

\begin{Prop}\label{Prop:expansionalProperties}
For each $n \in \mathbb{N}$ the following limits exist and are invertible
\[ E_{n} := \lim_{a \rightarrow + \infty}E_{(n,a)}\,, \quad \widetilde{E}_{n} := \lim_{a \rightarrow + \infty}\widetilde{E}_{(n,a)} = E_{n} \, e^{-\frac{1}{2}H_{[1,n]}}\,. \]
Moreover, they satisfy for every $1 \leq n < a$
\begin{enumerate}\itemsep0.7em
\item[(i)] $\|E_{n}\|, \| E_{(n,a)}\|, \| E_{n}^{-1}\|, \| E_{(n,a)}^{-1}\| \leq \mathcal{G}$,
\item[(ii)] $\| E_{(n,a)} - E_{n} \|, \| E_{(n,a)}^{-1} - E_{n}^{-1} \| \leq \mathcal{G}_{a-n}$,
\item[(iii)] For every positive and invertible $Q \in \mathcal{A}_{\mathbb{N}}$
\[ \mathcal{G}^{-2} \, \| Q^{-1}\|^{-1} \,\mathbbm{1} \, \leq \, E^{\dagger} Q E \, \leq \, \mathcal{G}^{2} \, \| Q\| \, \mathbbm{1}\,, \quad E \in \{ E_{(n,a)}, E_{n} \}\,. \]
\end{enumerate}
\end{Prop}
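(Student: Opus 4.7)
My plan is to realize $E_{(n,a)}$ as the Araki expansional associated to the asymmetric cut of $[1,a]$ at the bond between sites $n$ and $n+1$, and then apply the obvious asymmetric variant of Theorem \ref{Theo:BoundExpansionals}. Writing $H_0 := \tfrac{1}{2}H_{[1,n]} + \tfrac{1}{2}H_{[1+n,a]}$ and $U := \tfrac{1}{2}H_{[1,a]} - H_0$, the element $U$ equals (up to a sign) the sum of $\tfrac{1}{2}\Phi_X$ over those $X \subset [1,a]$ that straddle the cut, and Duhamel's formula then presents $E_{(n,a)} = E_l(-U;H_0)$ and $(E_{(n,a)})^{-1} = E_r(-U;H_0)$, exactly as in \eqref{equa:expansionals3}. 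The proof of Theorem \ref{Theo:BoundExpansionals}, through Lemma \ref{Lemm:applyHamiltonianToItself}, uses only two features of the crossing perturbation: its $j$-th piece has norm bounded by $2\Omega_{j+1}$ and support in an interval of length at most $2(j+1)$ around the cut. Both persist in the asymmetric setting, where the left window simply truncates at site $1$, so the same bookkeeping with $\beta = 1/2$ produces
\begin{gather*}
\|E_{(n,a)}\|,\,\, \|(E_{(n,a)})^{-1}\| \,\leq\, \mathcal{G}, \\
\|E_{(n,a)} - E_{(n,b)}\|,\,\, \|(E_{(n,a)})^{-1} - (E_{(n,b)})^{-1}\| \,\leq\, \mathcal{G}_{\min(a,b)-n}
\end{gather*}
for all $n < a \leq b$, with $\mathcal{G}$ and $\mathcal{G}_\ell$ as in \eqref{equa:constantsExpansionalsConv}.

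I would then invoke the estimate $\sum_{\ell \geq 1}\mathcal{G}_\ell\, e^{(\lambda - 2\Omega_0)\ell} < \infty$ derived immediately above the statement: since $(\mathcal{G}_\ell)$ thereby decays exponentially, the sequences $(E_{(n,a)})_{a>n}$ and $((E_{(n,a)})^{-1})_{a>n}$ are Cauchy in norm and converge to limits $E_n$ and $F_n$ in $\mathcal{A}_{\mathbb{N}}$. Passing to the limit in $E_{(n,a)}(E_{(n,a)})^{-1} = \mathbbm{1}$ forces $F_n = E_n^{-1}$, so $E_n$ is invertible; items (i) and (ii) then follow by continuity of the norm. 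For (iii), the operator inequalities $\|Q^{-1}\|^{-1}\mathbbm{1} \leq Q \leq \|Q\|\mathbbm{1}$ together with $E^\dagger E \geq \|E^{-1}\|^{-2}\mathbbm{1} \geq \mathcal{G}^{-2}\mathbbm{1}$, which follows from $\|(E^\dagger E)^{-1}\| \leq \|E^{-1}\|^2$, sandwich $E^\dagger Q E$ between $\mathcal{G}^{-2}\|Q^{-1}\|^{-1}\mathbbm{1}$ and $\mathcal{G}^2\|Q\|\mathbbm{1}$, as claimed. Finally, since $H_{[1,n]}$ and $H_{[1+n,a]}$ commute we have $\widetilde{E}_{(n,a)} = E_{(n,a)}\, e^{-H_{[1,n]}/2}$, and continuity at once yields $\widetilde{E}_n = E_n\, e^{-H_{[1,n]}/2}$ together with its invertibility.

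The only genuinely nontrivial point is the first step: checking that the argument of Theorem \ref{Theo:BoundExpansionals} transfers to the asymmetric cut. Concretely, one must verify that replacing the symmetric window $J_p = [a-p,a+1+p]$ by $[\max(1, n-p), \min(a, n+1+p)]$ neither introduces new terms into the bound $\|U_j\| \leq 2\Omega_{j+1}$ nor invalidates the inequality $|J_j| \leq 2(j+1)$ used in applying Theorem \ref{Theo:FundamentalEstimations}. Both checks are routine, but they are where the technical content of the proof actually lives.
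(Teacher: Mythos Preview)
Your approach is correct and is essentially the one taken in the paper. Two small remarks. First, your identification of the expansional is swapped: with $H_0=\tfrac12 H_{[1,n]}+\tfrac12 H_{[1+n,a]}$ and $U=\tfrac12 H_{[1,a]}-H_0$ one has $E_{(n,a)}=e^{-(H_0+U)}e^{H_0}=E_r(U;H_0)$ and $(E_{(n,a)})^{-1}=E_l(U;H_0)$, not the other way around; this is harmless since only $\|\Gamma^{s}(U)\|$ enters the bounds. Second, where you re-run the proof of Theorem~\ref{Theo:BoundExpansionals} for the truncated window $[\max(1,n-p),\min(a,n+1+p)]$, the paper instead applies Theorem~\ref{Theo:BoundExpansionals} \emph{as a black box} to a ``suitable Hamiltonian'', namely the interaction obtained by zeroing out every $\Phi_X$ with $X\not\subset[1,a]$ (or $[1,\infty)$). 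For that restricted interaction the asymmetric $E_{(n,a)}$ coincides with a symmetric $E^{1/2}_{n,p}$ for all large enough $p$, and since its $\Omega_k$'s are dominated by the original ones, the constants $\mathcal G,\mathcal G_\ell$ are unchanged. This avoids having to re-verify Lemma~\ref{Lemm:applyHamiltonianToItself} in the asymmetric setting, but the content is the same as your argument.
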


\begin{proof}
Item (iii) is straightforward from (i) and (ii) so let us focus on the latter. We just explain those inequalities involving $E_{(n,a)}$ and $E_{n}$, the others are analogous. Fixed $n \in \mathbb{N}$ and $a,c \in \mathbb{N}$ satisfying $n \leq c \leq a$ and $c \leq 2n$, it is easy to see applying Theorem \ref{Theo:BoundExpansionals} on a suitable Hamiltonian that
\begin{equation}\label{equa:boundAuxExpansionalsConvergence} 
\| E_{(n,a)} - e^{-\frac{1}{2}H_{[n-c,c]}}e^{\frac{1}{2}H_{[2n+1-c,n]} + \frac{1}{2}H_{[n+1,c]}}\|  \leq \frac{1}{2} \mathcal{G}_{c-n}\,. 
 \end{equation}
 This yields $\| E_{(n,a)} - E_{(n,a')}\| \leq \mathcal{G}_{a-n}$ for every $a,a' \in \mathbb{N}$ with $n \leq a \leq a'$ and so $E_{(n,a)}$ converges to some $E_{n}$ as $a$ tends to infinity. Taking limit when $a'$ tends to infinity in the previous expression yields (ii). The bound $\| E_{(n,a)}\| \leq \mathcal{G}$ valid for every $1 \leq n< a$ by Theorem \ref{Theo:BoundExpansionals} extends therefore to $E_{n}$, and so (i) holds. 
\end{proof}

\begin{Defi}
For every $n \in \mathbb{N}$ let $\mathcal{L}_{n}:\mathcal{A}_{\mathbb{N}} \longrightarrow \mathcal{A}_{\mathbb{N}}$ be the positive linear operator 
\[ \mathcal{L}_{n}(Q) :=  \tau_{-n} \, \operatorname{tr}_{[1,n]}\Big(  e^{- H_{[1, n]} } \,    E_{n}^{\dagger }\,  Q \, E_{n} \Big) \, = \, \tau_{-n} \, \operatorname{tr}_{[1,n]}\Big(       \widetilde{E}_{n}^{\dagger }\,  Q \, \widetilde{E}_{n} \Big) \,. \]
We will simply denote $\mathcal{L}:= \mathcal{L}_{1}$.
\end{Defi}

\begin{Theo}\label{Theo:PropertiesLn}
Let $n, \ell \in \mathbb{N}$ and $Q,A \in \mathcal{A}_{\mathbb{N}}$. Then,\\[-2mm]
\begin{enumerate}\itemsep0.7em
\item[(i)] $\| \mathcal{L}_{n}(Q)\| \leq \operatorname{tr}(e^{-H_{[1,n]}}) \, \mathcal{G}^{2} \, \| Q\|$,
\item[(ii)] $\| \mathcal{L}_{n}(Q) - \mathcal{L}_{(n,n+\ell)}(A)\| \, \leq \, \operatorname{tr}(e^{-H_{[1,n]}}) \, \big(\, 2 \, \mathcal{G} \, \mathcal{G}_{\ell} \, \| Q\| + \mathcal{G}^{2} \, \| Q - A\| \big)$,
\item[(iii)] $\| \mathcal{L}_{n}(Q)\|_{\ell} \leq \operatorname{tr}(e^{-H_{[1,n]}}) \, (2 \mathcal{G} \mathcal{G}_{\ell} + \mathcal{G}^{2}\| Q\|_{n+\ell})$. \\
\end{enumerate}
Moreover, if $Q \in \mathcal{A}_{\mathbb{N}}$ is positive and invertible, then\\[-2mm]
\begin{enumerate}\itemsep0.7em
\item[(iv)] $\operatorname{tr}(e^{-H_{[1,n]}}) \, \mathcal{G}^{-2} \, \| Q^{-1}\|^{-1} \, \mathbbm{1}\, \leq \, \mathcal{L}_{n}(Q) \, \leq  \,\operatorname{tr}(e^{-H_{[1,n]}}) \, \mathcal{G}^{2} \, \| Q\| \, \mathbbm{1}$\,,
\item[(v)] $ \| \mathcal{L}_{n}(Q)\|_{\ell} \, \| \mathcal{L}_{n}(Q)^{-1}\|  \, \leq \, 2 \, \mathcal{G}^{3} \, \mathcal{G}_{\ell} \, \| Q\| \, \| Q^{-1}\| + \mathcal{G}^{4} \, \| Q\|_{n+\ell} \, \| Q^{-1} \|$.
\end{enumerate}
\end{Theo}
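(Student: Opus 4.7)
My plan is to derive everything from the key identity
\[ \mathcal{L}_{n}(Q) \;=\; \tau_{-n}\,\operatorname{tr}_{[1,n]}\bigl( e^{-H_{[1,n]}/2}\,E_{n}^{\dagger}\,Q\,E_{n}\,e^{-H_{[1,n]}/2}\bigr), \]
which follows from cyclicity of $\operatorname{tr}_{[1,n]}$ with respect to factors supported in $[1,n]$, together with the expansional estimates from Proposition \ref{Prop:expansionalProperties}. Written this way, $\mathcal{L}_{n}$ is visibly completely positive (it is a partial trace of a conjugation), and the same holds for each finite-$a$ version $\mathcal{L}_{(n,a)}$. The bounds (i) and (iv) will then be obtained essentially for free: for positive $Q$ one estimates $e^{-H_{[1,n]}/2}E_{n}^{\dagger}Q E_{n}e^{-H_{[1,n]}/2}$ between $\mathcal{G}^{-2}\|Q^{-1}\|^{-1}e^{-H_{[1,n]}}$ and $\mathcal{G}^{2}\|Q\|e^{-H_{[1,n]}}$ using Proposition \ref{Prop:expansionalProperties}(iii), then take the partial trace and use $\operatorname{tr}_{[1,n]}(e^{-H_{[1,n]}})=\operatorname{tr}(e^{-H_{[1,n]}})\,\mathbbm{1}$, which proves (iv). Item (i) follows since $\|\mathcal{L}_{n}\|_{\mathrm{op}}=\|\mathcal{L}_{n}(\mathbbm{1})\|$ for completely positive maps.

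For (ii) the idea is the telescopic decomposition
\[ \mathcal{L}_{n}(Q)-\mathcal{L}_{(n,n+\ell)}(A)\;=\;\bigl(\mathcal{L}_{n}(Q)-\mathcal{L}_{(n,n+\ell)}(Q)\bigr)\;+\;\mathcal{L}_{(n,n+\ell)}(Q-A). \]
The second summand is handled by (i) applied to $\mathcal{L}_{(n,n+\ell)}$ (the same argument works with $E_{(n,n+\ell)}$ in place of $E_{n}$). For the first summand, write
\[ E_{n}^{\dagger}Q E_{n}-E_{(n,n+\ell)}^{\dagger}Q E_{(n,n+\ell)}\;=\;(E_{n}-E_{(n,n+\ell)})^{\dagger}Q E_{n}+E_{(n,n+\ell)}^{\dagger}Q(E_{n}-E_{(n,n+\ell)}), \]
and invoke Proposition \ref{Prop:expansionalProperties}(i)-(ii) to bound the difference by $2\mathcal{G}\mathcal{G}_{\ell}\|Q\|$; multiplying by $\operatorname{tr}(e^{-H_{[1,n]}})$ after the partial trace gives the first term on the right-hand side of (ii).

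The key observation for (iii) is that $\mathcal{L}_{(n,n+\ell)}$ maps $\mathcal{A}_{[1,n+\ell]}$ into $\mathcal{A}_{[1,\ell]}$: the operator $\widetilde{E}_{(n,n+\ell)}$ is supported in $[1,n+\ell]$, so for any $Q_{n+\ell}\in\mathcal{A}_{[1,n+\ell]}$ the element $\operatorname{tr}_{[1,n]}\bigl(\widetilde{E}_{(n,n+\ell)}^{\dagger}Q_{n+\ell}\widetilde{E}_{(n,n+\ell)}\bigr)$ sits in $\mathcal{A}_{[n+1,n+\ell]}$, which after $\tau_{-n}$ lies in $\mathcal{A}_{[1,\ell]}$. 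Choosing $Q_{n+\ell}$ to nearly realize the infimum defining $\|Q\|_{n+\ell}$ and applying (ii) with $A=Q_{n+\ell}$ yields (iii). Finally, (v) is an immediate combination of (iii) with the lower bound from (iv), which gives $\|\mathcal{L}_{n}(Q)^{-1}\|\leq \mathcal{G}^{2}\|Q^{-1}\|/\operatorname{tr}(e^{-H_{[1,n]}})$ so that the $\operatorname{tr}(e^{-H_{[1,n]}})$ factors cancel.

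The main obstacle is conceptual rather than technical: one must recognize that the seemingly global object $\mathcal{L}_{n}(Q)$ admits, via the finite approximant $\mathcal{L}_{(n,n+\ell)}$, a natural decomposition into a \emph{local} piece living in $\mathcal{A}_{[1,\ell]}$ plus an exponentially small error controlled by $\mathcal{G}_{\ell}$. Once this support-tracking is in place, all five items reduce to applying the bounds of Proposition \ref{Prop:expansionalProperties} and the trivial estimate for the normalized partial trace $\operatorname{tr}_{[1,n]}(e^{-H_{[1,n]}})=\operatorname{tr}(e^{-H_{[1,n]}})\,\mathbbm{1}$.
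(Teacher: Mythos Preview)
Your proposal is correct and follows essentially the same route as the paper: items (i) and (iv) come straight from Proposition~\ref{Prop:expansionalProperties}(iii), item (ii) from the telescoping $E_{n}^{\dagger}QE_{n}-E_{(n,n+\ell)}^{\dagger}AE_{(n,n+\ell)}$ together with Proposition~\ref{Prop:expansionalProperties}(i)--(ii), item (iii) from the observation that $\mathcal{L}_{(n,n+\ell)}(\mathcal{A}_{[1,n+\ell]})\subset\mathcal{A}_{[1,\ell]}$ combined with (ii), and item (v) from (iii) and (iv). The only cosmetic difference is that you rewrite $\mathcal{L}_{n}$ with $e^{-H_{[1,n]}/2}$ on both sides to make complete positivity explicit, whereas the paper works directly with the form $\operatorname{tr}_{[1,n]}(e^{-H_{[1,n]}}\,\cdot\,)$; also note that the infimum defining $\|Q\|_{n+\ell}$ is actually attained (by finite-dimensionality), so you may take an exact minimizer rather than an approximate one.
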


\begin{proof}
All statements are consequence of Proposition \ref{Prop:expansionalProperties}.
Items (i) and (iv) are straightforward. Let us check (ii).
\begin{equation*}\label{equa:FlatAux1}
\| \, \mathcal{L}_{n}(Q) - \mathcal{L}_{(n,n+\ell)}(A) \, \| \,   \leq \, \operatorname{tr}\big( e^{-H_{[1, n]}}\big) \, \| E_{n}^{\dagger} Q E_{n} - E_{(n,n+\ell)}^{\dagger} A E_{(n,n+\ell)}^{\dagger}\|
\end{equation*}
where 
\begin{equation*}\label{equa:FlatAux2}
\begin{split}
\| E_{n}^{\dagger} Q E_{n} - E_{(n,n+\ell)}^{\dagger} A E_{(n,n+\ell)}  \| & \leq  \| E_{n}^{\dagger} Q E_{n} - E_{(n,n+\ell)}^{\dagger} Q E_{(n,n+\ell)} \| + \mathcal{G}^{2} \, \| Q - A\|\\[2mm]
& \leq 2 \, \mathcal{G}  \, \mathcal{G}_{\ell} \, \| Q\| + \mathcal{G}^{2} \| Q - A \| \,.
\end{split}
\end{equation*}
To prove (iii) for the given $Q$, let us take  $\widetilde{Q} \in \mathcal{A}_{[1,n+\ell]}$ satisfying $\| Q - \widetilde{Q}\| = \| Q\|_{n+\ell}$ which exists by compactness. Then, $\mathcal{L}_{(n, n+\ell)}(\widetilde{Q})$ belongs to $\mathcal{A}_{[1, \ell]}$ and so by (ii)
\[ \| \mathcal{L}_{n}(Q)\|_{\ell} \, \leq \, \| \mathcal{L}_{n}(Q) - \mathcal{L}_{(n,n+\ell)}(\widetilde{Q}) \| \, \leq \, \operatorname{tr}(e^{-H_{[1,n]}}) \, \big(\, 2 \, \mathcal{G} \, \mathcal{G}_{\ell} \, \| Q\| + \mathcal{G}^{2} \, \| Q\|_{n+\ell} \big)\,. \]
Finally, combining (iii) and (iv) we conclude that (v) holds.
\end{proof}

A consequence of the pervious theorem is that for every positive and invertible element $Q \in \mathcal{A}_{\mathbb{N}}$ it holds $\| \mathcal{L}_{n}(Q)\| \, \|\mathcal{L}_{n}(Q)^{-1} \| \leq \mathcal{G}^{4} \| Q\|\, \| Q^{-1}\|$. The next result improves this estimates for large values of $n$.

\begin{Theo}\label{Theo:Flatness}
Let $Q \in \mathcal{A}_{\mathbb{N}}$ be positive and invertible and let $n, \ell \in \mathbb{N}$ such that $1 + \ell \leq n$. Then 
\begin{align*}
\| \mathcal{L}_{n}(Q)\|  \| \mathcal{L}_{n}(Q)^{-1}\| \, & \leq \, \mathcal{G}^{4}   \left( 1 + 2 \, \mathcal{G}^{3} \, \mathcal{G}_{\ell} \, \| Q\| \, \| Q^{-1}\| \,  + 2 \, \mathcal{G}^{4} \, \| Q\|_{n-\ell }\, \| Q^{-1}\| \right)\,. 
\end{align*}
\end{Theo}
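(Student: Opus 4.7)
The strategy is to improve the trivial condition-number bound $\|\mathcal{L}_n(Q)\| \|\mathcal{L}_n(Q)^{-1}\| \leq \mathcal{G}^4 \|Q\|\|Q^{-1}\|$ coming from Theorem \ref{Theo:PropertiesLn}(iv) by exhibiting an approximate ``scalar collapse'' of $\mathcal{L}_n(Q)$ when $Q$ is essentially supported in $[1,n-\ell]$. First, I would decompose $Q = Q_0 + R$ with $Q_0 \in \mathcal{A}_{[1,n-\ell]}$ self-adjoint and $\|R\| = \|Q - Q_0\| \leq \|Q\|_{n-\ell}$ (pick a self-adjoint near-best approximant in the closed subspace $\mathcal{A}_{[1,n-\ell]}$). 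Setting $Z_n := \operatorname{tr}(e^{-H_{[1,n]}})$, Theorem \ref{Theo:PropertiesLn}(i) immediately gives $\|\mathcal{L}_n(R)\| \leq Z_n \mathcal{G}^2 \|Q\|_{n-\ell}$; after dividing by the lower bound for $\mathcal{L}_n(Q)$, this will deliver the summand $2\mathcal{G}^4 \|Q\|_{n-\ell}\|Q^{-1}\|$ of the claim.

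The core step is to establish $\mathcal{L}_n(Q_0) = \mu(Q_0)\,\mathcal{L}_n(\mathbbm{1}) + \mathcal{E}$, where $\mu$ is (after normalisation) a state on $\mathcal{A}_{[1,n-\ell]}$ and $\|\mathcal{E}\|$ is controlled by $Z_n \mathcal{G}^3\mathcal{G}_\ell \|Q_0\|$. Two localisation ingredients enter. First, approximate $E_n$ by a local expansional $E_n^{(\ell)}$ supported in $[n-\ell+1,n+\ell]$ with $\|E_n - E_n^{(\ell)}\| \leq \mathcal{G}_\ell$, built analogously to the device used in the proof of Proposition \ref{Prop:expansionalProperties}. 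Since $Q_0 \in \mathcal{A}_{[1,n-\ell]}$ and $E_n^{(\ell)} \in \mathcal{A}_{[n-\ell+1,n+\ell]}$ have disjoint supports, $[Q_0, E_n^{(\ell)}] = 0$, so $Q_0$ can be commuted through inside $\mathcal{L}_n(Q_0) = \tau_{-n}\operatorname{tr}_{[1,n]}\!\big(e^{-H_{[1,n]}} E_n^\dagger Q_0 E_n\big)$. Second, factor $e^{-H_{[1,n]}}$ across the inner boundary at $(n-\ell,n-\ell+1)$ via the expansional $F := e^{-H_{[1,n]}} e^{H_{[1,n-\ell]} + H_{[n-\ell+1,n]}}$, with $\|F\|, \|F^{-1}\| \leq \mathcal{G}$ by Theorem \ref{Theo:BoundExpansionals}. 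Substituting $e^{-H_{[1,n]}} = F \cdot e^{-H_{[1,n-\ell]}}\otimes e^{-H_{[n-\ell+1,n]}}$ into the partial trace and using cyclicity on the pieces living in $\mathcal{A}_{[1,n]}$, the factor $Q_0$ is extracted as the scalar $\mu(Q_0) = \operatorname{tr}(e^{-H_{[1,n-\ell]}} Q_0)/\operatorname{tr}(e^{-H_{[1,n-\ell]}})$ (up to an $F$-boundary correction), while the remainder reassembles to $\mathcal{L}_n(\mathbbm{1})$. This produces the summand $2\mathcal{G}^3 \mathcal{G}_\ell\|Q\|\|Q^{-1}\|$ in the final bound.

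To close, since $\mu$ is a state and $Q$ is positive invertible, $\|Q^{-1}\|^{-1} \leq \mu(Q) \leq \|Q\|$, and $|\mu(Q_0) - \mu(Q)| \leq \|Q\|_{n-\ell}$, giving $\|Q^{-1}\|^{-1} - \|Q\|_{n-\ell} \leq \mu(Q_0) \leq \|Q\| + \|Q\|_{n-\ell}$. Combined with the sandwich $Z_n \mathcal{G}^{-2}\mathbbm{1} \leq \mathcal{L}_n(\mathbbm{1}) \leq Z_n \mathcal{G}^{2}\mathbbm{1}$ from Theorem \ref{Theo:PropertiesLn}(iv) applied to $\mathbbm{1}$, I would deduce
\[
\mathcal{L}_n(Q) \leq \bigl(\mu(Q_0) Z_n \mathcal{G}^2 + \mathrm{err}\bigr)\mathbbm{1}, \qquad \mathcal{L}_n(Q) \geq \bigl(\mu(Q_0) Z_n \mathcal{G}^{-2} - \mathrm{err}\bigr)\mathbbm{1},
\]
with $\mathrm{err} = 2 Z_n \mathcal{G}^{3}\mathcal{G}_\ell\|Q\| + Z_n \mathcal{G}^2 \|Q\|_{n-\ell}$. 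Taking operator norms of $\mathcal{L}_n(Q)$ and $\mathcal{L}_n(Q)^{-1}$ and multiplying, the factors of $Z_n$ cancel, the prefactor $\mathcal{G}^4 = \mathcal{G}^2\cdot\mathcal{G}^2$ emerges from the conditioning of $\mathcal{L}_n(\mathbbm{1})$, and one recovers $\mathcal{G}^4 \bigl(1 + 2\mathcal{G}^3 \mathcal{G}_\ell\|Q\|\|Q^{-1}\| + 2\mathcal{G}^4 \|Q\|_{n-\ell}\|Q^{-1}\|\bigr)$ after tracking the prefactors.

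The main obstacle I foresee is the bookkeeping in the decoupling step: writing $E_n^{(\ell)} = \sum_j A_j \otimes B_j$ with $A_j \in \mathcal{A}_{[n-\ell+1,n]}$ and $B_j \in \mathcal{A}_{[n+1,n+\ell]}$, one must verify that the weights $\operatorname{tr}_{[1,n]}\!\bigl(F\cdot e^{-H_{[1,n-\ell]}}Q_0 \otimes e^{-H_{[n-\ell+1,n]}}A_{j'}^\dagger A_j\bigr)$ factor as $\mu(Q_0)$ times their $Q_0=\mathbbm{1}$ analogue, up to controlled $\mathcal{G}_\ell$-errors arising from the non-commutativity of $F$ with $Q_0$. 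A secondary delicate point is the transition between the ``small error'' regime, where the lower bound $\mu(Q_0) Z_n \mathcal{G}^{-2} - \mathrm{err}$ is positive and the inversion is valid, and the complementary regime, in which one falls back directly on the trivial bound from Theorem \ref{Theo:PropertiesLn}(iv) and checks it is absorbed into the $\|Q\|_{n-\ell}\|Q^{-1}\|$ or $\mathcal{G}_\ell \|Q\|\|Q^{-1}\|$ summands of the stated estimate.
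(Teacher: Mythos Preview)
Your overall architecture is right --- approximate $Q$ by something in $\mathcal{A}_{[1,n-\ell]}$, replace $E_n$ by a local expansional supported in $[n-\ell+1,n+\ell]$, and exploit the disjoint supports --- but the ``scalar collapse'' step $\mathcal{L}_n(Q_0) = \mu(Q_0)\,\mathcal{L}_n(\mathbbm{1}) + \mathcal{E}$ is where your plan breaks. Writing $P=(E_n^{(\ell)})^\dagger E_n^{(\ell)}=\sum_j A_j\otimes B_j$ with $A_j\in\mathcal{A}_{[n-\ell+1,n]}$, your claim amounts to $\operatorname{tr}(e^{-H_{[1,n]}}Q_0 A_j)\approx \mu(Q_0)\operatorname{tr}(e^{-H_{[1,n]}}A_j)$ uniformly in $j$, i.e.\ an approximate decorrelation of the Gibbs state on $[1,n]$ between the \emph{adjacent} regions $[1,n-\ell]$ and $[n-\ell+1,n]$. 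There is no separation here, so no such factorization holds, and your proposed cure via $F=e^{-H_{[1,n]}}e^{H_{[1,n-\ell]}+H_{[n-\ell+1,n]}}$ does not help: $\|F-\mathbbm{1}\|$ is $O(1)$, not $O(\mathcal{G}_\ell)$, and $F$ does not commute with $Q_0$, so the ``$F$-boundary correction'' is not small. This is the genuine obstruction, not bookkeeping.

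The paper sidesteps this entirely with two small but decisive changes. First, instead of a generic self-adjoint approximant $Q_0$, take the conditional expectation $A:=\operatorname{tr}_{[n-\ell+1,\infty)}(Q)\in\mathcal{A}_{[1,n-\ell]}$, which is automatically positive and invertible and satisfies $\|Q-A\|\le 2\|Q\|_{n-\ell}$. Second, rather than trying to factor $e^{-H_{[1,n]}}$, simply use that the map $X\mapsto \operatorname{tr}_{[1,n]}(e^{-H_{[1,n]}}X)$ is positive (cyclicity of the partial trace over $\mathcal{A}_{[1,n]}$). Since $A$ and the local expansional $E'$ commute, $E'^\dagger A E' = A^{1/2}E'^\dagger E' A^{1/2}$, and from $\mathcal{G}^{-2}\mathbbm{1}\le E'^\dagger E'\le\mathcal{G}^2\mathbbm{1}$ one gets $\mathcal{G}^{-2}A\le E'^\dagger A E'\le\mathcal{G}^2 A$; applying the positive map yields directly
\[
\mathcal{G}^{-2}\operatorname{tr}(e^{-H_{[1,n]}}A)\,\mathbbm{1}\ \le\ \widetilde{A}\ \le\ \mathcal{G}^{2}\operatorname{tr}(e^{-H_{[1,n]}}A)\,\mathbbm{1},
\]
so $\widetilde{A}$ has condition number $\le\mathcal{G}^4$ without ever identifying the scalar as $\mu(Q_0)$ times anything. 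The paper then converts this into the final bound by the state-comparison trick $\phi(\widetilde{A})\le\mathcal{G}^4\phi'(\widetilde{A})$ for any two states $\phi,\phi'$, followed by $\|B\|=\sup_\phi\phi(B)$ and $\|B^{-1}\|^{-1}=\inf_{\phi'}\phi'(B)$; this also eliminates the small-error/large-error case split you were worried about.
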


\begin{proof}
Let us write
\[ E_{(n,n+\ell)}' \, := \, e^{-\frac{1}{2}H_{[n-\ell+1, n+\ell]}} \, e^{\frac{1}{2}H_{[n-\ell+1, n]}} \, e^{\frac{1}{2} H_{[1+n,n+\ell]}},, \]
whose support  is contained in $ [ n - \ell +1, n + \ell ]$. Note that \eqref{equa:boundAuxExpansionalsConvergence} yields
\begin{equation}\label{equa:boundAuxExpansionalsConvergence2}
\| E_{(n,n+\ell)}' - E_{(n,n+\ell)} \| \leq \mathcal{G}_{\ell}\,.
\end{equation}
Using the tracial state over $\mathcal{A}_{[n-\ell+1, \infty)}$ we can define
\[ A := \operatorname{tr}_{[ n-\ell+1, \infty)} (Q) \in  \mathcal{A}_{[1,n-\ell]} \]
which is positive and invertible. Notice that $A$ and $E_{(n,n+\ell)}'$ have disjoint support and thus
\begin{align*}
\widetilde{A} \, & := \tau_{-n} \, \operatorname{tr}_{[1,n]}\big( e^{-H_{[1,n]} } E_{(n,n+\ell)}'^{\, \dagger} \, A \,  E_{(n,n+\ell)}' \big)\\[2mm]
& =  \tau_{-n} \, \operatorname{tr}_{[1,n]}\big( e^{-H_{[1,n]} } A^{1/2} E_{(n,n+\ell)}'^{\, \dagger} E_{(n,n+\ell)}'  A^{1/2} \big)  \,.
\end{align*}
Consequently, 
\begin{equation}\label{equa:FlatnessAux1}
\mathcal{G}^{-2} \, \operatorname{tr}\big( e^{- H_{[1,n]}} A \big) \mathbbm{1}  \, \leq \, \widetilde{A} \,\leq \, \mathcal{G}^{2} \, \operatorname{tr}\big( e^{- H_{[1,n]}} A \big) \mathbbm{1}\,. 
\end{equation}
Let $\phi$ and $\phi'$ be two states over $\mathcal{A}_{\mathbb{N}}$. From \eqref{equa:FlatnessAux1} it follows that
\[ \phi(\widetilde{A}) \, \leq \, \mathcal{G}^{4} \, \phi'(\widetilde{A})\, \]
and thus
\begin{align*}  
\phi(\mathcal{L}_{n}(Q)) \, & \leq \, \phi(\widetilde{A}) + \| \, \mathcal{L}_{n}(Q) - \widetilde{A} \, \|\\[2mm]  
& \, \leq \, \mathcal{G}^{4}  \, \phi'(\widetilde{A}) + \| \, \mathcal{L}_{n}(Q)- \widetilde{A} \, \|\\[2mm]  
& \, \leq \, \mathcal{G}^{4} \, \phi'(\mathcal{L}_{n}(Q)) + (1 + \mathcal{G}^{4}) \, \| \, \mathcal{L}_{n}(Q) - \widetilde{A} \, \|\,.
\end{align*}
Using that for every positive and invertible $B \in \mathcal{A}_{\mathbb{N}}$
\[ \| B\| = \sup_{\phi}{\phi(B)} \quad \quad \mbox{ and } \quad \quad \| B^{-1}\|^{-1} = \inf_{\phi'}{\phi'(B)}\,, \]
where the supremum and infimum are both taken with respect to all states over $\mathcal{A}_{\mathbb{N}}$, we conclude that
\begin{equation}\label{equa:FlatnessAux2} 
\begin{split}
\| \mathcal{L}_{n}(Q) \| \, \| \mathcal{L}_{n}(Q)^{-1}\| \, & \leq  \, \mathcal{G}^{4} + (1 + \mathcal{G}^{4}) \, \| \mathcal{L}_{n}(Q) - \widetilde{A} \| \, \| \mathcal{L}_{n}(Q)^{-1}\|\\[2mm]
& \leq \mathcal{G}^{4} \, \left( 1 + 2 \, \| \mathcal{L}_{n}(Q) - \widetilde{A} \| \, \| \mathcal{L}_{n}(Q)^{-1}\| \right)\,.
\end{split}
\end{equation}
But notice that the same argument given in the proof of Theorem \ref{Theo:PropertiesLn}.(v) using \eqref{equa:boundAuxExpansionalsConvergence2} leads to
\begin{equation} \label{equa:FlatnessAux3}
\| \mathcal{L}_{n}(Q) - \widetilde{A}\| \, \| \mathcal{L}_{n}(Q)^{-1}\| \, \leq \, 2 \, \mathcal{G}^{3} \, \mathcal{G}_{\ell} \, \| Q\| \, \| Q^{-1}\| \,  + \mathcal{G}^{4} \, \| Q - A\|\, \| Q^{-1}\| \,.   
\end{equation}
To bound the last summand, we use that taking \mbox{$Q_{n- \ell} \in \mathcal{A}_{[1,n-\ell]}$} with 
\[ \| Q - Q_{n -\ell}\|=\| Q\|_{n- \ell} \] 
it holds that
\begin{equation}\label{equa:FlatnessAux4}
\| Q - A\| = \| Q- Q_{n -\ell} + \operatorname{tr}_{[n - \ell ,  \infty)} (Q_{n -\ell} - Q) \| \leq  2 \,  \| Q\|_{n- \ell}\,. 
\end{equation}
Combining \eqref{equa:FlatnessAux2}, \eqref{equa:FlatnessAux3}  and \eqref{equa:FlatnessAux4}, we conclude the result.  
\end{proof}

\subsection{The maps $\mathcal{L}$ and $L$: fixed points}

\begin{Prop}\label{Prop: LisIterative}
For every $n \in \mathbb{N}$,  $\mathcal{L}_{n} = \mathcal{L}^{n} := \mathcal{L} \circ \ldots \circ \mathcal{L} \,\,\, (n \mbox{ times})$.
\end{Prop}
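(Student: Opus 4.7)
The plan is to prove the statement by induction on $n$, the base case $n=1$ being immediate by the definition $\mathcal{L} := \mathcal{L}_1$. For the inductive step I will establish the operator identity $\mathcal{L}_{n+1} = \mathcal{L} \circ \mathcal{L}_n$, which together with the inductive hypothesis gives $\mathcal{L}_{n+1} = \mathcal{L} \circ \mathcal{L}^n = \mathcal{L}^{n+1}$.

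The algebraic cornerstone of the inductive step is the factorization
$$\widetilde{E}_{n+1} \, = \, \widetilde{E}_{n} \cdot \tau_{n}(\widetilde{E}_{1})\,.$$
To derive it, I first work at finite volume. A direct computation with the telescoping insertion of $e^{\frac{1}{2} H_{[n+1, a]}} e^{-\frac{1}{2} H_{[n+1, a]}}$, together with translation invariance of $\Phi$ (which gives $\tau_{n} H_{[1,a-n]} = H_{[n+1,a]}$ and $\tau_n H_{[2, a-n]} = H_{[n+2, a]}$), yields
$$\widetilde{E}_{(n+1,a)} \, = \, \widetilde{E}_{(n,a)} \cdot \tau_{n}\bigl(\widetilde{E}_{(1, a-n)}\bigr)\,.$$
Passing to the limit $a \to \infty$ on both sides, using the norm convergence $\widetilde{E}_{(m,a)} \to \widetilde{E}_{m}$ and the uniform norm bounds from Proposition \ref{Prop:expansionalProperties}, produces the sought identity.

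Given this factorization, the inductive step is routine bookkeeping with partial traces. Setting $B := \operatorname{tr}_{[1,n]}(\widetilde{E}_{n}^{\dagger} Q \widetilde{E}_{n}) \in \mathcal{A}_{[n+1, \infty)}$, so that $\mathcal{L}_{n}(Q) = \tau_{-n}(B)$, I compute
$$\mathcal{L}(\mathcal{L}_{n}(Q)) \, = \, \tau_{-1}\bigl( \operatorname{tr}_{[1,1]}(\widetilde{E}_{1}^{\dagger}\, \tau_{-n}(B)\, \widetilde{E}_{1}) \bigr)\,.$$
Using translation invariance of $\operatorname{tr}_{[1,1]}$ (namely $\tau_{n} \circ \operatorname{tr}_{[1,1]} = \operatorname{tr}_{[n+1, n+1]} \circ \tau_{n}$) this rewrites as
$$\tau_{-(n+1)}\bigl( \operatorname{tr}_{[n+1, n+1]}\bigl(\tau_{n}(\widetilde{E}_{1})^{\dagger} B\, \tau_{n}(\widetilde{E}_{1}) \bigr) \bigr)\,.$$
Since $\tau_{n}(\widetilde{E}_{1}) \in \mathcal{A}_{[n+1, \infty)}$ is supported disjointly from $[1,n]$, it commutes with $\operatorname{tr}_{[1,n]}$ and can be pulled inside:
$$\tau_{n}(\widetilde{E}_{1})^{\dagger} B\, \tau_{n}(\widetilde{E}_{1}) \, = \, \operatorname{tr}_{[1,n]}\bigl((\widetilde{E}_{n} \tau_{n}(\widetilde{E}_{1}))^{\dagger} Q\, (\widetilde{E}_{n} \tau_{n}(\widetilde{E}_{1}))\bigr)\,.$$
Composing the two partial traces into $\operatorname{tr}_{[1,n+1]}$ and invoking the factorization $\widetilde{E}_{n+1} = \widetilde{E}_{n} \tau_{n}(\widetilde{E}_{1})$ gives exactly $\mathcal{L}_{n+1}(Q)$.

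The only substantive point is the factorization of $\widetilde{E}_{n+1}$, and in particular its limit form; everything else amounts to standard manipulations with partial traces, translations and disjoint supports. Given that the finite-volume identity is a one-line computation and that norm convergence of the expansionals has already been established in Proposition \ref{Prop:expansionalProperties}, I do not expect any real obstacle.
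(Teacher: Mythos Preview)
Your proof is correct and follows essentially the same route as the paper: both hinge on the finite-volume factorization $\widetilde{E}_{(n+1,a)} = \widetilde{E}_{(n,a)}\,\tau_n(\widetilde{E}_{(1,a-n)})$ (the paper writes it as $\widetilde{E}_{(n,a+n)}\,\tau_n(\widetilde{E}_{(1,a)}) = \widetilde{E}_{(n+1,n+a)}$) together with the already-established convergence of the expansionals. The only cosmetic difference is that the paper packages the partial-trace bookkeeping into the finite-volume identity $\mathcal{L}_{(1,a)} \circ \mathcal{L}_{(n,a+n)} = \mathcal{L}_{(n+1,a+n)}$ and then passes to the limit via Theorem~\ref{Theo:PropertiesLn}, whereas you take the limit first at the level of $\widetilde{E}$ and then carry out the trace manipulations explicitly.
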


\begin{proof}
For every $1\leq n < a$, it is easy to check that
\[ \widetilde{E}_{(n,a+n)} \, \tau_{n}\big( \widetilde{E}_{(1,a)} \big) = \widetilde{E}_{(n+1, n+a)}\,, \]
and so
\[ \mathcal{L}_{(1,a)} \circ \mathcal{L}_{(n,a+n)} = \mathcal{L}_{(n+1, a+n)}\,. \]
By  Theorem \ref{Theo:PropertiesLn}, we can take limit in the previous expression when $a \rightarrow + \infty$ to get that $\mathcal{L}_{1} \circ \mathcal{L}_{n} = \mathcal{L}_{n+1}$.
\end{proof}

\begin{Theo}\label{Theo:fixedState}
There exist a state $\nu$ over $\mathcal{A}_{\mathbb{N}}$ and a real number $\mu > 0$ such that
\begin{equation}\label{equa:fixedStateAux0} 
\nu(\mathcal{L}(Q)) = \mu \, \nu(Q) \quad \mbox{ for every } Q \in \mathcal{A}_{\mathbb{N}}\,. 
\end{equation}
Moreover, it satisfies that for every $n \in \mathbb{N}$
\begin{equation}\label{equa:fixedStateAux}
\mathcal{G}^{-2} \, \operatorname{tr}\big( e^{- H_{[1,n]}} \big) \,  \leq \,  \mu^{n} \, \leq \, \mathcal{G}^{2} \, \operatorname{tr}\big( e^{- H_{[1,n]}} \big)\,. 
\end{equation}
\end{Theo}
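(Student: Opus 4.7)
The plan is to obtain $\nu$ as a fixed point of the nonlinear map on the state space $\mathcal{S}(\mathcal{A}_{\mathbb{N}})$ induced by the dual action of $\mathcal{L}$. Concretely, I would define
\[
T: \mathcal{S}(\mathcal{A}_{\mathbb{N}}) \longrightarrow \mathcal{S}(\mathcal{A}_{\mathbb{N}}), \qquad T(\omega)(Q) := \frac{\omega(\mathcal{L}(Q))}{\omega(\mathcal{L}(\mathbbm{1}))}.
\]
The first point to check is that $T$ is well-defined. By Theorem \ref{Theo:PropertiesLn}(iv) applied with $n = 1, Q = \mathbbm{1}$, one has the operator inequality $\mathcal{L}(\mathbbm{1}) \geq c\,\mathbbm{1}$ with $c := \operatorname{tr}(e^{-H_{[1,1]}})\,\mathcal{G}^{-2} > 0$, hence $\omega(\mathcal{L}(\mathbbm{1})) \geq c > 0$ for every state $\omega$. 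Since $\mathcal{L}$ is positive, $T(\omega)$ is again a positive normalized functional, hence a state.

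Next, I would endow $\mathcal{S}(\mathcal{A}_{\mathbb{N}})$ with the weak-$\ast$ topology. By Banach-Alaoglu, it is a compact convex subset of $\mathcal{A}_{\mathbb{N}}^{\ast}$, and $T$ is weak-$\ast$ continuous since both $\omega \mapsto \omega(\mathcal{L}(Q))$ and $\omega \mapsto \omega(\mathcal{L}(\mathbbm{1}))$ are, with the denominator bounded below by $c$. The Schauder-Tychonoff fixed point theorem then produces a state $\nu$ with $T(\nu) = \nu$, i.e.
\[
\nu(\mathcal{L}(Q)) = \mu\,\nu(Q), \qquad Q \in \mathcal{A}_{\mathbb{N}},
\]
where $\mu := \nu(\mathcal{L}(\mathbbm{1})) \geq c > 0$. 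This proves \eqref{equa:fixedStateAux0}.

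For the bounds on $\mu^{n}$, I would iterate the eigenvalue relation to obtain $\nu(\mathcal{L}^{n}(Q)) = \mu^{n}\nu(Q)$ for every $n \in \mathbb{N}$. Setting $Q = \mathbbm{1}$ and invoking Proposition \ref{Prop: LisIterative} to identify $\mathcal{L}^{n} = \mathcal{L}_{n}$, one obtains $\mu^{n} = \nu(\mathcal{L}_{n}(\mathbbm{1}))$. Applying Theorem \ref{Theo:PropertiesLn}(iv) to $Q = \mathbbm{1}$ (so $\|Q\| = \|Q^{-1}\| = 1$) yields the sandwich
\[
\operatorname{tr}(e^{-H_{[1,n]}})\,\mathcal{G}^{-2}\,\mathbbm{1} \;\leq\; \mathcal{L}_{n}(\mathbbm{1}) \;\leq\; \operatorname{tr}(e^{-H_{[1,n]}})\,\mathcal{G}^{2}\,\mathbbm{1},
\]
and evaluating on the state $\nu$ gives exactly \eqref{equa:fixedStateAux}.

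The only delicate point is the fixed-point argument itself: the map $T$ is nonlinear because of the normalization, so the simpler Markov-Kakutani theorem for commuting families of affine maps does not directly apply, and one has to resort to Schauder-Tychonoff. Its applicability crucially depends on the a priori strict lower bound $\mathcal{L}(\mathbbm{1}) \geq c\,\mathbbm{1}$, which also forces $\mu > 0$; this positivity of the operator on the identity is the content of the ``flatness'' estimate in Theorem \ref{Theo:PropertiesLn}(iv) and is ultimately inherited from the upper and lower $L^{\infty}$ bounds on the expansionals $E_{n}$ in Proposition \ref{Prop:expansionalProperties}.
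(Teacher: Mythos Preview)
Your proposal is correct and follows essentially the same approach as the paper: both define the normalized dual map $\omega \mapsto \omega(\mathcal{L}(\cdot))/\omega(\mathcal{L}(\mathbbm{1}))$ on the weak-$\ast$ compact convex state space, verify it is well-defined and continuous via the lower bound $\mathcal{L}(\mathbbm{1}) \geq c\,\mathbbm{1}$ from Theorem~\ref{Theo:PropertiesLn}(iv), and apply Schauder--Tychonoff. Your derivation of \eqref{equa:fixedStateAux} is also the same, with the explicit invocation of Proposition~\ref{Prop: LisIterative} to identify $\mathcal{L}^{n}=\mathcal{L}_{n}$ making transparent a step the paper leaves implicit.
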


\begin{proof}
Let $\mathcal{L}: \mathcal{A}_{\mathbb{N}} \longrightarrow \mathcal{A}_{\mathbb{N}} $ be a positive linear operator on a C*-algebra $\mathcal{A}_{\mathbb{N}} $ satisfying that $\mathcal{L}(\mathbbm{1}) \geq \gamma \mathbbm{1}$ for a positive constant $\gamma > 0$.  The dual operator $\mathcal{L}^{\ast}:\mathcal{A}_{\mathbb{N}}  \longrightarrow \mathcal{A}_{\mathbb{N}} $ is weak$^\ast$-weak$^\ast$ continuous, and so does its restriction $\mathcal{L}^{\ast} : \mathcal{S} \longrightarrow \mathcal{A}_{\mathbb{N}} ^{\ast}$ to  the weak$^\ast$-compact convex set $\mathcal{S}$ of states over a  $\mathcal{A}_{\mathbb{N}} $. Next, let us define
\[ \widehat{\mathcal{L}} : \mathcal{S} \longrightarrow \mathcal{S} \,, \quad \widehat{\mathcal{L}} (\varphi) = \frac{\mathcal{L}^{\ast} \varphi }{\mathcal{L}^{\ast} \varphi (\mathbbm{1})}\,. \]
Notice that it is well-defined, as for every $\varphi \in \mathcal{S}$ the map $\widehat{\mathcal{L}}(\varphi)$ is linear and positive since
 \begin{equation}\label{equa:the_map_aux1} 
 \mathcal{L}^{\ast} \varphi(\mathbbm{1}) = \varphi(\mathcal{L} ( \mathbbm{1})) \geq \gamma \varphi(\mathbbm{1}) = \gamma > 0\,, 
 \end{equation}
 and moreover $\widehat{\mathcal{L}}(\varphi)(\mathbbm{1}) = 1$. It is also weak$^\ast$-weak$^\ast$-continuous again by \eqref{equa:the_map_aux1}. We are in the conditions to apply Schauder-Tychonov theorem to deduce the existence of a fixed point $\nu \in \mathcal{S}$ of $\widehat{\mathcal{L}}$, which clearly satisfies for every $Q \in \mathcal{A}_{\mathbb{N}} $
 \[ \nu(\mathcal{L}(Q)) = \mu \, \nu(Q) \quad \mbox{ where } \quad  \mu := \nu(\mathcal{L}(\mathbbm{1}))\,. \] 
To check second statement, we simply apply the state $\nu$ to each term in Theorem \ref{Theo:PropertiesLn}.(iv) with $Q = \mathbbm{1}$.
\end{proof}

\begin{Defi}
Let us denote by $L: \mathcal{A}_{\mathbb{N}} \longrightarrow \mathcal{A}_{\mathbb{N}} $ the operator given by $L = \mathcal{L}/\mu$, where $\mu > 0$ is the positive number provided by Theorem \ref{Theo:fixedState}. We will also write $L_{(n,a)} = \mathcal{L}_{(n,a)}/\mu^{n}$ for every $a>n \geq 1$.
\end{Defi}

The next result collects some straightforward properties of $L$ inherited from $\mathcal{L}$.

\begin{Coro}\label{Coro:Lbounded}
Let $n, \ell \in \mathbb{N}$ and $Q \in \mathcal{A}_{\mathbb{N}}$.Then\\[-2mm]
\begin{enumerate}\itemsep0.7em
\item[(i)] $\| L^{n}(Q) \| \leq \mathcal{G}^{4} \, \| Q\|$\,,
\item[(ii)] $\|L^{n}(Q) - L_{(n,n+\ell)}(Q)\| \leq 2 \mathcal{G}^{3} \mathcal{G}_{\ell} \| Q \|$\,,
\item[(iii)] $\| L^{n}(Q)\|_{\ell} \leq 2 \mathcal{G}^{3} \mathcal{G}_{\ell} \, \| Q\| + \mathcal{G}^{4} \| Q\|_{n + \ell}$\,.\\[-2mm]
\end{enumerate}
Moreover, if $Q$ is positive and invertible\\[-2mm]
\begin{enumerate}
\item[(iv)] $\mathcal{G}^{-4} \, \| Q^{-1}\|^{-1} \, \mathbbm{1} \, \leq \, L^{n}(Q) \, \leq \, \mathcal{G}^{4} \, \| Q\| \, \mathbbm{1}$.\\[-2mm]
\end{enumerate}
In particular, for every $x>1$ with $\sum_{\ell \geq 1}{\mathcal{G}_{\ell}\, x^{\ell}} < \infty$ we have that the restriction $L: \mathcal{A}_{\mathbb{N}} (x) \longrightarrow \mathcal{A}_{\mathbb{N}} (x)$ is well-defined and continuous. Indeed, there is a constant $C_{x} > 0$ such that for every $m,n \in \mathbb{N}$ and every $Q \in \mathcal{A}_{\mathbb{N}} (x)$
\[ \|| L^{n}(Q) |\|_{m,x} \leq C_{x} \, \|| Q |\|_{n+m,x}\,. \]
\end{Coro}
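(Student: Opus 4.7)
The approach is to derive all four bounds and the final continuity statement by combining the corresponding estimates for $\mathcal{L}_n$ from Theorem \ref{Theo:PropertiesLn} with the two-sided bound on the eigenvalue $\mu$ from Theorem \ref{Theo:fixedState}, via the tower identity $\mathcal{L}^n = \mathcal{L}_n$ of Proposition \ref{Prop: LisIterative}. Since $L^n = \mu^{-n}\mathcal{L}_n$ and $\mu^{-n} \leq \mathcal{G}^{2}/\operatorname{tr}(e^{-H_{[1,n]}})$, every factor of $\operatorname{tr}(e^{-H_{[1,n]}})$ in those bounds is absorbed into an additional $\mathcal{G}^{2}$, producing exactly the stated constants.

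Concretely, (i) follows immediately from Theorem \ref{Theo:PropertiesLn}(i); (ii) from Theorem \ref{Theo:PropertiesLn}(ii) specialized to $A = Q$, after noting that $L_{(n,n+\ell)} = \mu^{-n}\mathcal{L}_{(n,n+\ell)}$ by definition; (iii) from Theorem \ref{Theo:PropertiesLn}(iii) (whose proof actually carries the factor $\|Q\|$ in the first summand); and (iv) from Theorem \ref{Theo:PropertiesLn}(iv), where the lower inequality additionally uses the companion bound $\mu^{n} \leq \mathcal{G}^{2}\operatorname{tr}(e^{-H_{[1,n]}})$ to turn $\mathcal{G}^{-2}/\operatorname{tr}(e^{-H_{[1,n]}})$ into $\mathcal{G}^{-4}$.

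For the final assertion, fix $x>1$ with $\sum_{\ell \geq 1}\mathcal{G}_{\ell}\, x^{\ell} < \infty$ and take $Q \in \mathcal{A}_{\mathbb{N}}(x)$. Combining (i) and (iii), one writes
\begin{align*}
\||L^{n}(Q)|\|_{m,x} \,\leq\, \mathcal{G}^{4}\|Q\| + 2\mathcal{G}^{3}\|Q\| \sum_{\ell \geq m}\mathcal{G}_{\ell}\, x^{\ell} + \mathcal{G}^{4} \sum_{\ell \geq m}\|Q\|_{n+\ell}\, x^{\ell}.
\end{align*}
The first tail is bounded by the assumed convergent sum; the second, after the substitution $k = n+\ell$, equals $x^{-n}\sum_{k \geq n+m}\|Q\|_{k}\, x^{k}$, which is at most $\||Q|\|_{n+m,x}$ because $x>1$ and $\||Q|\|_{n+m,x}$ already contains $\|Q\|$ plus this tail. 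Using $\|Q\| \leq \||Q|\|_{n+m,x}$ and collecting constants into $C_{x} := 2\mathcal{G}^{4} + 2\mathcal{G}^{3}\sum_{\ell \geq 1}\mathcal{G}_{\ell}\, x^{\ell}$ yields the claim.

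The proof is almost entirely bookkeeping; the substantive inputs have already been supplied by the earlier theorems. The one point that requires care is that item (iii) forces the norm $\||Q|\|_{n+m,x}$ rather than $\||Q|\|_{m,x}$ on the right-hand side, because $\|L^{n}(Q)\|_{\ell}$ is only controlled by $\|Q\|_{n+\ell}$; this shift by $n$ is precisely the reason that the family of norms $\||\cdot|\|_{m,x}$ indexed by $m$ was introduced in Section \ref{sec:notation}, and it is what will allow iterating $L$ usefully in subsequent arguments on the decay of $L^{n}(\mathbbm{1} - \mathcal{P})$.
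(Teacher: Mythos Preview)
Your proof is correct and follows the same route as the paper: items (i)--(iv) are obtained from Theorem~\ref{Theo:PropertiesLn}(i)--(iv) by dividing by $\mu^{n}$ and invoking the two-sided bound \eqref{equa:fixedStateAux}, and the final norm estimate is derived by expanding $\||L^n(Q)|\|_{m,x}$, applying (i) and (iii), and reindexing the tail sum. Your observation that the proof of Theorem~\ref{Theo:PropertiesLn}(iii) actually carries the factor $\|Q\|$ in the first summand (omitted in its statement) is a useful catch, and your explicit constant $C_x = 2\mathcal{G}^4 + 2\mathcal{G}^3\sum_{\ell\geq 1}\mathcal{G}_\ell x^\ell$ is exactly what the paper's bound yields after taking $x^{-n}\leq 1$ and $\sum_{\ell\geq m}\leq \sum_{\ell\geq 1}$.
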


\begin{proof}
Items (i)-(iv) follow from Theorem \ref{Theo:PropertiesLn}.(i)-(iv), dividing by $\mu^{n}$ on both sides of each inequality and using \eqref{equa:fixedStateAux}. To check the last statement, let $Q \in \mathcal{A}_{\mathbb{N}}(x)$ with $x > 1$ satisfying $\sum_{\ell \geq 1}{\mathcal{G}_{\ell} \, x^{\ell}}< \infty$ and let $m,n \in \mathbb{N}$. Using (i), (ii) and (iv) we can estimate
\begin{align*}
\|| L^{n}(Q)|\|_{m,x} & = \| L^{n}(Q)\| + \sum_{\ell \geq m}{\| L^{n}(Q)\|_{\ell} \, x^{\ell}}\\[2mm]
& \leq \| L^{n}(Q)\| + 2\,  \mathcal{G}^{3} \, \| Q\| \, \sum_{\ell \geq m}{\mathcal{G}_{\ell} x^{\ell}} + \mathcal{G}^{4} \, \sum_{\ell \geq m}{\| Q\|_{n+\ell} \, x^{\ell}}\\[2mm]
& \leq \Big(  \mathcal{G}^{4} + 2 \mathcal{G}^{3} \, \sum_{\ell \geq m}{\mathcal{G}_{\ell} x^{\ell}} + \mathcal{G}^{4} x^{-n} \Big) \, \| |Q| \|_{m+n,x} \,.
\end{align*}
\end{proof}

\begin{Theo}\label{Theo:fixedPointH} 
The map $L:\mathcal{A}_{\mathbb{N}} \longrightarrow \mathcal{A}_{\mathbb{N}}$ has a fixed point $h \in \mathcal{A}_{\mathbb{N}}$ satisfying:
\begin{enumerate}
\item[(i)] $h \in \mathcal{A}_{\mathbb{N}}(x)$ for every $x > 1$  such that $\sum_{\ell \geq 1}{\mathcal{G}_{\ell} \, x^{\ell} } < \infty$\,,
\item[(ii)] $\mathcal{G}^{-4} \mathbbm{1} \leq h \leq \mathcal{G}^{4} \, \mathbbm{1}$ and $\nu(h) = 1$.
\end{enumerate}
\end{Theo}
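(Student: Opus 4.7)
The natural strategy is a Perron--Frobenius/Krylov--Bogolyubov argument via Ces\`aro averages. Define
\[
h_{N} \, := \, \frac{1}{N} \sum_{n=0}^{N-1} L^{n}(\mathbbm{1})\, , \quad N \geq 1\,.
\]
By Corollary \ref{Coro:Lbounded}(iv) applied to $Q = \mathbbm{1}$ we obtain uniformly in $n$ that $\mathcal{G}^{-4}\mathbbm{1} \leq L^{n}(\mathbbm{1}) \leq \mathcal{G}^{4}\mathbbm{1}$, and since $\|\mathbbm{1}\|_{n+\ell}=0$, item (iii) of the same corollary gives $\|L^{n}(\mathbbm{1})\|_{\ell}\leq 2\mathcal{G}^{3}\mathcal{G}_{\ell}$ for all $n,\ell$. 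Both estimates pass to the averages $h_{N}$ by convexity. Finally, $\nu \circ L = \nu$ (which follows from \eqref{equa:fixedStateAux0} and the definition of $L$), so $\nu(h_{N})=1$.

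Next I would extract a limit. The set
\[
K \, := \, \Bigl\{ Q \in \mathcal{A}_{\mathbb{N}} \colon \mathcal{G}^{-4}\mathbbm{1} \leq Q \leq \mathcal{G}^{4}\mathbbm{1}\,,\, \|Q\|_{\ell} \leq 2\mathcal{G}^{3}\mathcal{G}_{\ell} \text{ for all } \ell \geq 1 \Bigr\}
\]
is norm-closed (since each $Q \mapsto \|Q\|_{\ell}$ is $1$-Lipschitz as a quotient norm) and totally bounded: given $\varepsilon > 0$, choose $\ell$ so that $2\mathcal{G}^{3}\mathcal{G}_{\ell}<\varepsilon/2$, which is possible because $\mathcal{G}_{\ell}$ decays exponentially; every $Q \in K$ is within $\varepsilon/2$ of some element of the bounded set $\{Q_{\ell} \in \mathcal{A}_{[1,\ell]} \colon \|Q_{\ell}\|\leq \mathcal{G}^{4}+\varepsilon/2\}$, which admits a finite $\varepsilon/2$-net by finite-dimensionality. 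Hence $K$ is norm-compact, and $(h_{N}) \subset K$ has a subsequence $h_{N_{k}} \to h$ in operator norm.

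To see $L(h)=h$, note the telescoping identity
\[
L(h_{N}) - h_{N} \, = \, \frac{1}{N}\bigl( L^{N}(\mathbbm{1}) - \mathbbm{1}\bigr)\,,
\]
whose norm is at most $(\mathcal{G}^{4}+1)/N \to 0$. Since $L$ is bounded on $\mathcal{A}_{\mathbb{N}}$ (Corollary \ref{Coro:Lbounded}(i)), passing to the limit along $N_{k}$ yields $L(h)=h$. The pointwise bounds $\mathcal{G}^{-4}\mathbbm{1} \leq h \leq \mathcal{G}^{4}\mathbbm{1}$ and $\nu(h)=1$ survive the limit by continuity (of the order in $\mathcal{A}_{\mathbb{N}}$ and of the state $\nu$). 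Finally, for any $x>1$ with $\sum_{\ell\geq 1}\mathcal{G}_{\ell}x^{\ell}<\infty$, the lower-semicontinuity of $\|\cdot\|_{\ell}$ under norm limits gives $\|h\|_{\ell}\leq 2\mathcal{G}^{3}\mathcal{G}_{\ell}$, so
\[
\||h|\|_{1,x} \, \leq \, \mathcal{G}^{4} + 2\mathcal{G}^{3}\sum_{\ell \geq 1}\mathcal{G}_{\ell}\, x^{\ell} \, < \, \infty\,,
\]
which puts $h$ in $\mathcal{A}_{\mathbb{N}}(x)$, establishing (i) and (ii).

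The one delicate point I expect is the compactness of $K$; any alternative Schauder--Tychonoff-style argument in a weaker topology would fail because, as noted after Theorem \ref{Theo:PropertiesLn}, iterating $L$ does not preserve a two-sided bound like $[\mathcal{G}^{-4},\mathcal{G}^{4}]$ without the averaging. Thus the combination of the uniform bounds from Corollary \ref{Coro:Lbounded}(iii)--(iv) applied to $L^{n}(\mathbbm{1})$ and the exponential decay of $\mathcal{G}_{\ell}$ (already verified in \eqref{equa:constantsExpansionalsConv} via $\|\Phi\|_{\lambda}<\infty$) is precisely what makes the Ces\`aro argument work in norm rather than in a strictly weaker topology.
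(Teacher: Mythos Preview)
Your proof is correct and arrives at the same fixed point via the same compactness, but the route differs from the paper's. The paper applies Schauder's fixed point theorem directly to $L$ restricted to the closed convex hull $\mathcal{C}:=\overline{\operatorname{conv}}\{L^{n}(\mathbbm{1}):n\in\mathbb{N}\}$, which is $L$-invariant by construction and compact by exactly the total-boundedness argument you give for your set $K$. You instead use a Krylov--Bogolyubov/Ces\`aro scheme, extracting a convergent subsequence of the averages $h_{N}$ and exploiting the telescoping identity $L(h_{N})-h_{N}=\tfrac{1}{N}(L^{N}(\mathbbm{1})-\mathbbm{1})$. Your approach is slightly more elementary in that it avoids invoking Schauder, while the paper's is a touch shorter since it does not need to track the averaging. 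One remark: your closing paragraph dismisses a Schauder-type argument on the grounds that $L$ does not preserve a fixed order interval, but this objection is moot---the paper sidesteps it by taking $\mathcal{C}$ to be the convex hull of the orbit, which is automatically $L$-invariant and sits inside $[\mathcal{G}^{-4}\mathbbm{1},\mathcal{G}^{4}\mathbbm{1}]$ by Corollary~\ref{Coro:Lbounded}(iv), exactly as you yourself use for the individual iterates $L^{n}(\mathbbm{1})$.
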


\begin{proof}
Let us consider the set 
\[ \mathcal{C} := \overline{\operatorname{conv}}{\{ L^{n}(\mathbbm{1}) \colon n \in \mathbb{N} \}}\,, \] 
which is clearly convex, closed and invariant by $L$, i.e. $L(\mathcal{C}) \subset \mathcal{C}$. Applying Corollary \ref{Coro:Lbounded} we deduce that  $ \mathcal{C} \subset \mathcal{A}_{\mathbb{N}}(x)$ whenever $x>1$ satisfies $\sum_{\ell \geq 1}{\mathcal{G}_{\ell} \, x^{\ell}} < \infty$ and that for every $Q \in \mathcal{C}$ and $\ell \geq 1$
\[ \nu(Q) = 1\,,\quad  \|Q\| \leq \mathcal{G}^{4}\,, \quad \| L^{n}(Q)\|_{\ell} \leq 2 \mathcal{G}^{3} \, \mathcal{G}_{\ell}\quad \mbox{ and } \quad  \mathcal{G}^{-4}  \, \mathbbm{1} \, \leq \, Q \, \leq \, \mathcal{G}^{4} \, \mathbbm{1}\,.\]
In particular, $\mathcal{C}$ is bounded and satisfies 
\begin{equation}\label{equa:auxfFixedPointHaux1}  
C \subset \mathcal{A}_{[1,\ell]} \, + \, (2 \, \mathcal{G}^{3} \, \mathcal{G}_{\ell}) \, \mathbb{B}_{\mathcal{A}_{\mathbb{N}}}\,
\end{equation}
where $\mathbb{B}_{\mathcal{A}_{\mathbb{N}}}$ is the closed unit ball of $\mathcal{A}_{\mathbb{N}}$. Since $\mathcal{A}_{[1,\ell]}$ is finite-dimensional, the set $\mathcal{C}$ is totally bounded, and thus compact. Schauder's fixed point theorem now applies to $L|_{\mathcal{C}}:\mathcal{C} \longrightarrow \mathcal{C}$ and let us to deduce the existence of $h \in C$ with $L(h) = h$. 
\end{proof}

\subsection{Convergence of $L^{n}$}

\begin{Theo}\label{Theo:FlatnessStrong}
Let $x >  1$ with $\sum_{\ell}{\mathcal{G}_{\ell} \,  x^{\ell}} < \infty$. Then, there exists an absolute constant $C_{x} > 0$ with the following property: for every $a>0$ there is $N = N(x,a)$ such that for all $n \geq N$ and $Q \in \mathcal{A}_{\mathbb{N}}$ positive and invertible:
\[ \mbox{ if }  \quad \| |Q|\|_{1,x} \| Q^{-1}\| \leq a \quad \mbox{ then } \quad\|| L^{n}(Q)|\|_{1,x} \,  \| L^{n}(Q)^{-1}\|^{-1} \, \leq \, C_{x} \,. \]
\end{Theo}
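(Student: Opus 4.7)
The plan is to prove the theorem via a two-stage iteration of $L$: a first block of $n_1$ applications of $L$ flattens $L^{n_1}(Q)$ in the operator-norm sense via Theorem \ref{Theo:Flatness}, and a second block of $n_2$ applications damps the tail of the $\||\cdot|\|_{1,x}$-norm via the geometric factor $x^{-n_2}$ appearing in Corollary \ref{Coro:Lbounded}.(iii). For the first block, since the product $\|\cdot\|\,\|\cdot^{-1}\|$ is invariant under positive rescaling and $L^{n_1}=\mathcal{L}_{n_1}/\mu^{n_1}$, Theorem \ref{Theo:Flatness} transfers verbatim to $L^{n_1}$, giving
\[
\|L^{n_1}(Q)\|\,\|L^{n_1}(Q)^{-1}\| \,\leq\, \mathcal{G}^{4}\bigl(1 + 2\mathcal{G}^{3}\mathcal{G}_{\ell}\,\|Q\|\,\|Q^{-1}\| + 2\mathcal{G}^{4}\,\|Q\|_{n_1-\ell}\,\|Q^{-1}\|\bigr)
\]
for every $1\leq \ell\leq n_1$. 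Using $\|Q\|\,\|Q^{-1}\|\leq\||Q|\|_{1,x}\|Q^{-1}\|\leq a$ and $\|Q\|_{n_1-\ell}\leq x^{-(n_1-\ell)}\||Q|\|_{1,x}$, together with the exponential decay of $\mathcal{G}_{\ell}$ established in Section \ref{sec:PhaseTransition} prior to Proposition \ref{Prop:expansionalProperties}, I would pick $\ell=\ell(a)$ with $2\mathcal{G}^{3}\mathcal{G}_{\ell}\,a\leq 1$ and then $n_1=n_1(x,a)$ with $2\mathcal{G}^{4}x^{-(n_1-\ell)}a\leq 1$, obtaining the absolute flatness bound
\[
\|Q_1\|\,\|Q_1^{-1}\| \,\leq\, D \,:=\, 3\mathcal{G}^{4}, \qquad Q_1:=L^{n_1}(Q).
\]
At this point, however, the stronger product $\||Q_1|\|_{1,x}\|Q_1^{-1}\|$ is still only bounded linearly in $a$ via the last inequality of Corollary \ref{Coro:Lbounded} combined with item (iv), so a single block does not suffice.

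For the second block I would apply $L^{n_2}$ to $Q_1$ and split
\[
\||L^{n_2}(Q_1)|\|_{1,x} \,=\, \|L^{n_2}(Q_1)\| \,+\, \sum_{\ell\geq1}\|L^{n_2}(Q_1)\|_{\ell}\,x^{\ell}.
\]
Corollary \ref{Coro:Lbounded}.(i) bounds the first summand by $\mathcal{G}^{4}\|Q_1\|$, while item (iii) together with the change of index $m=n_2+\ell$ in the tail gives
\[
\sum_{\ell\geq1}\|L^{n_2}(Q_1)\|_{\ell}\,x^{\ell} \,\leq\, 2\mathcal{G}^{3}\|Q_1\|\sum_{\ell\geq1}\mathcal{G}_{\ell}x^{\ell} \,+\, \mathcal{G}^{4}\,x^{-n_2}\,\||Q_1|\|_{1,x},
\]
leading to $\||L^{n_2}(Q_1)|\|_{1,x}\leq C_1\|Q_1\|+\mathcal{G}^{4}x^{-n_2}\||Q_1|\|_{1,x}$ with the absolute constant $C_1:=\mathcal{G}^{4}+2\mathcal{G}^{3}\sum_{\ell\geq1}\mathcal{G}_{\ell}x^{\ell}$. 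Multiplying by $\|L^{n_2}(Q_1)^{-1}\|\leq\mathcal{G}^{4}\|Q_1^{-1}\|$ (Corollary \ref{Coro:Lbounded}.(iv)) and inserting the first-block estimates $\|Q_1\|\,\|Q_1^{-1}\|\leq D$ together with $\||Q_1|\|_{1,x}\|Q_1^{-1}\|\leq K_x\mathcal{G}^{4}a$, where $K_x$ is the constant from the last inequality of Corollary \ref{Coro:Lbounded}, yields
\[
\||L^{n_1+n_2}(Q)|\|_{1,x}\,\|L^{n_1+n_2}(Q)^{-1}\| \,\leq\, C_1\,\mathcal{G}^{4}\,D \,+\, K_x\,\mathcal{G}^{12}\,a\,x^{-n_2}.
\]
Choosing $n_2=n_2(x,a)$ of logarithmic size in $a$ so that the second summand is at most $1$, and setting $N(x,a):=n_1+n_2$ together with the absolute constant $C_x:=C_1\mathcal{G}^{4}D+1$, concludes the argument.

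The hard part is the apparent circularity in the second block: the tail $\sum_{\ell}\|L^{n_2}(Q_1)\|_{\ell}x^{\ell}$ can only be controlled in terms of the full $\||Q_1|\|_{1,x}$, which after the first block is only bounded linearly in $a$, not absolutely. What closes the argument is the geometric damping $x^{-n_2}$ in the tail bound of Corollary \ref{Coro:Lbounded}.(iii): it absorbs the linear-in-$a$ dependence, while the remaining contribution $C_1\|Q_1\|\,\|Q_1^{-1}\|$ is controlled absolutely through the operator-norm flatness from the first block. This compartmentalization is precisely why the argument requires two successive blocks rather than a single one, and why $N(x,a)$ ends up only logarithmic in $a$.
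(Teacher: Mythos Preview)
Your proof is correct and follows essentially the same two-block strategy as the paper: a first iteration of length $n_1$ (the paper uses $2r$) to achieve operator-norm flatness via Theorem~\ref{Theo:Flatness}, and a second iteration of length $n_2$ (the paper uses $r'\geq r$) to damp the $\||\cdot|\|_{1,x}$ tail through the geometric factor $x^{-n_2}$. The only cosmetic differences are that the paper parametrizes everything by a single scale $r$ with $N=3r$, applies Theorem~\ref{Theo:PropertiesLn}.(v) directly to the product $\|\cdot\|_{\ell}\,\|\cdot^{-1}\|$ (yielding slightly sharper constants than your combination of Corollary~\ref{Coro:Lbounded}.(iii) and (iv)), and in the second block unwinds the tail all the way back to $Q$ via another application of (v) rather than first bounding $\||Q_1|\|_{1,x}\|Q_1^{-1}\|$; none of this affects the argument. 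One tiny point you leave implicit: the theorem asks for all $n\geq N$, not just $n=N$, but this is immediate since your second-block estimate is monotone in $n_2$.
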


\begin{proof}
Let us fix $x>1$ as above. Given $3 \leq r \in \mathbb{N}$ and $a >0$, we are going to show that for every $n \geq N:=3r$ and every $Q \in \mathcal{A}_{\mathbb{N}}$ positive and invertible with $\|| Q|\|_{1,x} \, \| Q^{-1}\| \leq a$, it holds that
\begin{equation}\label{equa:FlatnessStrongAux0} 
\|| L^{n}(Q) |\|_{1,x} \, \| L^{n}(Q)^{-1}\| \, \leq K_{x}\,  (\, 1 + a \, x^{-r} \,) 
\end{equation}
for some constant $K_{x}$ independent of $a$ and $r$. It is clear that the theorem follows from this statement by taking $r = r(a,x)$ large enough. To prove the statement, let us first write $n$ in the form $n = 2r + r'$ with $r' \geq r$. On the one hand, we have by Theorem \ref{Theo:PropertiesLn}.(iv) 
\[ \| L^{2r+r'}(Q) \|  \, \| L^{2r+r'}(Q)^{-1}\| \, \leq \, \mathcal{G}^{4} \, \| L^{2r}(Q) \|  \, \| L^{2r}(Q)^{-1}\|\,.  \]
On the other hand, for each $ \ell \geq 1$ we can apply Theorem \ref{Theo:PropertiesLn}.(v) to obtain
\begin{align*}
\| L^{2r + r'}& (Q) \|_{\ell}  \, \| L^{2r + r'}(Q)^{-1}\|  =  \| L^{r'} (L^{2r}(Q))\|_{\ell} \, \| L^{r'} (L^{2r} (Q))^{-1}\| \\[3mm]
& \leq 2 \mathcal{G}^{3} \, \mathcal{G}_{\ell} \, \| L^{2r}(Q)\| \, \| L^{2r}(Q)^{-1}\| + \mathcal{G}^{4} \, \| L^{2r}(Q)\|_{r'+\ell} \, \| L^{2r}(Q)^{-1}\|\,.
\end{align*}
Combining the last two inequalities we get that
\begin{equation}\label{equa:FlatnessStrongAux1}
\begin{split}
\|| L^{n}(Q)|\|_{1,x} \, \| L^{n} (Q)^{-1}\|  \, & = \, \| L^{n}(Q) \|  \, \| L^{n}(Q)^{-1}\| \\[3mm]
& \hspace{20mm}  +  \, \sum_{\ell \geq 1} \, \| L^{n}(Q) \|_{\ell}  \, \| L^{n}(Q)^{-1}\| \, x^{\ell}\\[2mm]
& \, \leq \, 2 \, \mathcal{G}^{4}  \, \Big( 1 + \sum_{\ell \geq 1} \mathcal{G}_{\ell} \, x^{\ell} \Big) \, \| L^{2r}(Q)\| \, \| L^{2r}(Q)^{-1}\| \\[2mm]
& \hspace{20mm} + \mathcal{G}^{4} \, \sum_{\ell \geq 1}{\| L^{2r}(Q)\|_{r'+\ell} \, \| L^{2r}(Q)^{-1} \| \, x^{\ell}}\,.\\
\end{split}
\end{equation}
To bound the first summand we can apply Theorem \ref{Theo:Flatness} with $n= 2r$ and $\ell = r$
\begin{equation}\label{equa:FlatnessStrongAux2}
\begin{split} 
\| L^{2r}(Q)\| \, \| L^{2r}(Q)^{-1}\| \, & \leq \, \mathcal{G}^{4} \left( 1 + 2 \mathcal{G}^{3} \mathcal{G}_{r} \| Q\| \, \| Q^{-1}\| + 2 \mathcal{G}^{4} \, \| Q\|_{r} \, \| Q^{-1}\|\right)\\[2mm]
& \leq  \, \mathcal{G}^{4} \, \left( 1 + 2 \,a \, \mathcal{G}^{3} \, \mathcal{G}_{r}  + 2 \,a\, \mathcal{G}^{4} \, x^{-r} \right)\\[2mm]
& \leq  \mathcal{G}^{4} \, (2 \, \mathcal{G}^{3} \, \mathcal{G}_{r} \,x^{r} +2 \, \mathcal{G}^{4} ) \, \left( 1 + a \, x^{-r} \right)\,.
\end{split}
\end{equation}
A similar idea works for the second summand if we first apply Theorem \ref{Theo:PropertiesLn}.(v)
\begin{equation}\label{equa:FlatnessStrongAux3}
\begin{split}
\sum_{\ell \geq 1} \|&  L^{2r} (Q)\|_{r'+\ell}   \, \| L^{2r}(Q)^{-1} \| \, x^{\ell} \\[2mm] 
& \leq \, 2 \, \mathcal{G}^{3}\, \sum_{\ell \geq 1} \, \mathcal{G}_{r'+\ell} \, x^{\ell} \, \| Q\| \, \| Q^{-1}\| \, + \,  \mathcal{G}^{4} \, \sum_{\ell \geq 1} \| Q\|_{2r+r'+\ell} \, \| Q^{-1}\| \, x^{\ell}\\[2mm]
& \leq \, 2  \, \mathcal{G}^{4} \, \Big( 1 + \sum_{\ell \geq 1}{\mathcal{G}_{\ell} \, x^{\ell}}\Big) \,a \, x^{-r}\,.  \
\end{split}
\end{equation}
Finally, applying \eqref{equa:FlatnessStrongAux2} and \eqref{equa:FlatnessStrongAux3} to \eqref{equa:FlatnessStrongAux1} we conclude that the statement \eqref{equa:FlatnessStrongAux0} holds.
\end{proof}

The previous result is the key ingredient for the following main theorem.

\begin{Theo}\label{Theo:mainConvergence}
Let $x>1$ with $\sum_{\ell \geq 1}{\mathcal{G}_{\ell} \, x^{\ell}} < \infty$. Then, there exist $K_{x}>0$ and $\delta_{x} > 0$ with the following property: for every $n \in \mathbb{N}$ and $Q \in \mathcal{A}_{\mathbb{N}}(x)$ 
\[ \|| L^{n}(Q) - \nu(Q) h |\|_{1,x} \leq K_{x} \, e^{- n \, \delta_{x}} \, \|| Q  |\|_{1,x}\,. \]
\end{Theo}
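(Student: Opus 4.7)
The plan is to reduce the estimate to a spectral gap statement for $L$ on the kernel of $\nu$, then to a cone contraction problem for positive invertible elements, and finally to apply the flatness result of Theorem~\ref{Theo:FlatnessStrong} together with Birkhoff--Hopf cone contraction theory. Since $L(h) = h$, we have $L^n(Q) - \nu(Q) h = L^n(Q_0)$ with $Q_0 := Q - \nu(Q) h \in \ker \nu$, and $\||Q_0|\|_{1,x} \leq (1 + \||h|\|_{1,x}) \||Q|\|_{1,x}$. Splitting $Q_0$ into self-adjoint real and imaginary parts (both remain in $\ker \nu$ because $\nu$ is a state), and writing a self-adjoint $Q_0$, normalised to $\||Q_0|\|_{1,x} \leq 1$, as the difference $Q_0 = (Q_0 + 2\mathbbm{1}) - 2 \mathbbm{1}$, we get
\[
L^n(Q_0) = \bigl[L^n(P_+) - \nu(P_+) h\bigr] - \bigl[L^n(P_-) - \nu(P_-) h\bigr],
\]
where $P_+ := Q_0 + 2\mathbbm{1}$ and $P_- := 2\mathbbm{1}$ are both positive invertible, satisfy $\nu(P_+) = \nu(P_-) = 2$, and have $\||P_\pm|\|_{1,x}\|P_\pm^{-1}\| \leq 4$. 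Thus the theorem reduces to proving exponential decay of $\||L^n(P) - \nu(P) h|\|_{1,x}$, uniformly for positive invertible $P$ with $\||P|\|_{1,x} \|P^{-1}\| \leq a$.

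For this reduced statement, Theorem~\ref{Theo:FlatnessStrong} provides $N = N(x,a)$ such that $\||L^N(P)|\|_{1,x}\|L^N(P)^{-1}\| \leq C_x$; in particular, the $\|\cdot\|$-condition number of $L^N(P)$ is bounded by $C_x$. Combined with the spectral bound $\mathcal{G}^{-4}\mathbbm{1} \leq h \leq \mathcal{G}^4 \mathbbm{1}$ from Theorem~\ref{Theo:fixedPointH}, this confines $L^N(P)$ to a set of Hilbert projective diameter $\Delta \leq 2\log(C_x \mathcal{G}^8)$ around $h$. Birkhoff--Hopf's contraction theorem then implies that $L^N$ is a strict contraction in the Hilbert projective metric $d_H$ with ratio $\kappa := \tanh(\Delta/4) < 1$, so
\[
d_H(L^n(P), h) \,=\, d_H(L^n(P), L^n(h)) \,\leq\, C' \, e^{-n \delta_0}, \qquad \delta_0 := -\tfrac{\log \kappa}{N} > 0.
\]
The bound $d_H(R, h) \leq \epsilon$ yields $\alpha h \leq R \leq \beta h$ with $\beta/\alpha \leq e^\epsilon$; applying $\nu$ and using $\nu(h) = 1$ gives $\alpha \leq \nu(R) \leq \beta$, whence $\|R - \nu(R) h\| \leq (\beta - \alpha)\|h\| \leq (e^\epsilon - 1)\nu(R)\|h\|$. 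This converts the projective contraction into exponential $\|\cdot\|$-decay of $L^n(P) - \nu(P) h$.

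The last step is to upgrade $\|\cdot\|$-decay to $\||\cdot|\|_{1,x}$-decay by absorbing one more iteration. Writing $L^{n+m}(P) - \nu(P) h = L^m\bigl(L^n(P) - \nu(P) h\bigr)$ and applying items (i) and (iii) of Corollary~\ref{Coro:Lbounded}, one obtains
\[
\||L^{n+m}(P) - \nu(P) h|\|_{1,x} \,\leq\, C_1 \, \|L^n(P) - \nu(P) h\| \,+\, C_2 \, x^{-m} \, \||L^n(P) - \nu(P) h|\|_{1,x}.
\]
The first summand decays exponentially in $n$ by the previous paragraph, while the second is uniformly bounded by Theorem~\ref{Theo:FlatnessStrong} (applied to both $L^n(P)$ and to $h$) times $x^{-m}$. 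Taking $m$ comparable to $n$ balances both terms and produces the advertised exponential bound with a rate $\delta_x > 0$ depending on $\delta_0$, $\log x$ and the decay rate of $\mathcal{G}_m$. The main obstacle of the strategy is exactly this final transfer: the Birkhoff--Hopf contraction is tailored to the operator-norm cone structure, so extending it to the weighted norm $\||\cdot|\|_{1,x}$ requires a careful trade-off between the projective contraction rate and the locality tails $\mathcal{G}_m$, $x^{-m}$. An alternative route that avoids Birkhoff--Hopf would be a direct compactness/uniqueness argument, showing that $\|L^{n_0}\|_{\ker \nu \to \mathcal{A}_{\mathbb{N}}(x)} < 1$ for some $n_0$ by using the precompactness of bounded $L$-orbits in $\mathcal{A}_{\mathbb{N}}(x)$ (cf.\ Corollary~\ref{Coro:Lbounded}(iii)) and the uniqueness of the Perron--Frobenius fixed point $h$.
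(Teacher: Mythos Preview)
Your reduction in the first paragraph matches the paper's: subtract $\nu(Q)h$, pass to self-adjoint parts, and write each as a difference $P_+ - P_-$ of positive invertibles with uniformly bounded $\||\cdot|\|_{1,x}$--condition number. The divergence comes at the contraction step, and there the Birkhoff--Hopf invocation has a genuine gap. That theorem gives contraction ratio $\tanh(\Delta/4)$ only when $\Delta$ is the projective diameter of the image of the \emph{entire} positive cone under the linear map. Theorem~\ref{Theo:FlatnessStrong} tells you merely that $L^N$ sends the ball $\{\||P|\|_{1,x}\|P^{-1}\|\leq a\}$ into $\{\cdot\leq C_x\}$, with $N$ depending on $a$; it says nothing about the image of the full cone. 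In this infinite-dimensional $C^*$-algebra there is no reason for $L^N$ to be positivity-improving, and Corollary~\ref{Coro:Lbounded}(iv) only yields $\|L^n(Q)\|\,\|L^n(Q)^{-1}\|\leq \mathcal{G}^8\|Q\|\,\|Q^{-1}\|$, unbounded over all positive invertibles. Mapping a Hilbert-metric ball to a smaller concentric ball does not by itself force metric contraction on that ball, so the inequality $d_H(L^n(P),h)\leq C'e^{-n\delta_0}$ is unjustified.

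The paper replaces Birkhoff--Hopf by an elementary device that could be grafted onto your reduction. Fix $C=C_x$ and $N=N(x,2C)$ from Theorem~\ref{Theo:FlatnessStrong}, and set $\psi_N(A):=L^N(A)-\tfrac{\nu(A)}{2C}\mathbbm{1}$. For $P$ positive invertible with $\||P|\|_{1,x}\|P^{-1}\|\leq 2C$ the flatness bound gives $L^N(P)\geq \tfrac{\nu(P)}{C}\mathbbm{1}$, so $\psi_N(P)$ is again positive invertible with the \emph{same} bound $\leq 2C$, while $\nu(\psi_N(P))=(1-\tfrac{1}{2C})\nu(P)$. Iterating yields $\||\psi_N^k(P)|\|_{1,x}\leq 2C\,\nu(\psi_N^k(P))=2C(1-\tfrac{1}{2C})^k\nu(P)$, directly in the $\||\cdot|\|_{1,x}$ norm. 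Since $\nu(Q_0)=0$ forces $\psi_N^k(Q_0)=L^{kN}(Q_0)$, applying this to $P_\pm$ gives the exponential bound with no need for your final $\|\cdot\|\to\||\cdot|\|_{1,x}$ transfer. The operator $\psi_N$ is precisely what extracts the Perron--Frobenius gap by hand when the full-cone hypothesis of Birkhoff--Hopf is unavailable.
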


\begin{proof}
Let $C = C_{x}$ be the constant provided in Theorem \ref{Theo:FlatnessStrong}  (that we can assume to be greater than three) and let $N = N(x)$ be the corresponding number when applying the aforementioned theorem for $a = 2 C_{x}$. This means that if $Q \in \mathcal{A}_{\mathbb{N}}$ is positive, invertible and satisfies
\[ \|| Q|\|_{1,x} \, \| Q^{-1}\| \, \leq \, 2 C\,, \]
then, for every $n \geq N$
\[ \|| L^{n}(Q)|\|_{1,x} \, \leq \, C \, \| L^{n}(Q)^{-1}\|^{-1}  \]
and so using that $\nu(L^{n}(Q))  = \nu(Q)$ 
\[ \nu(Q) \, \leq \, \|| L^{n}(Q)|\|_{1,x}  \, \leq \, C \, \| L^{n}(Q)^{-1}\|^{-1} \leq C \, \nu(Q)\,.  \]
This means that the linear operator $\psi_{N}: \mathcal{A}_{\mathbb{N}} \longrightarrow \mathcal{A}_{\mathbb{N}}$ given by
\[ \psi_{N}(A) = L^{N}(A) - \frac{\nu(A)}{2C} \, \mathbbm{1}\,, \quad A \in \mathcal{A}_{\mathbb{N}} \]
satisfies that $\psi_{N}(Q)$ is positive and invertible with
\[ \| \psi_{N}(Q) ^{-1} \|^{-1} \geq \| L^{N}(Q)^{-1}\|^{-1} - \frac{\nu(Q)}{2C} \, \geq \, \frac{\nu(Q)}{2C} \geq \frac{\| L^{N}(Q)^{-1}\|^{-1}}{2} \,,  \]
and moreover
\[
\| \psi_{N}(Q)\| \, \leq \, \| L^{N}(Q)\| \quad\mbox{ and } \quad \| \psi_{N}(Q)\|_{\ell} \, = \, \| L^{N}(Q)\|_{\ell} \quad (\ell \geq 1)\,.\\
\]
Altogether yields that the above $Q$ satisfies
\[ \|| \psi_{N}(Q) \||_{1,x} \, \| \psi_{N}(Q)^{-1} \| \, \leq \, 2 \, \|| L^{N}(Q) \||_{1,x} \, \| L^{N}(Q)^{-1} \| \, \leq \, 2 C\,. \]
Notice that we can iterate this process and get that  for every $k \in \mathbb{N}$ 
\[ \|| \psi_{N}^{k}(Q) \||_{1,x} \, \| \psi_{N}^{k}(Q)^{-1} \| \, \leq \, 2 C\,. \]
Consequently
\[ \|| \psi_{N}^{k}(Q) \||_{1,x} \, \leq \, 2  C\,  \| \psi_{N}^{k}(Q)^{-1} \|^{-1}  \, \leq \, 2 C  \nu( \psi_{N}^{k}(Q) ) = 2 C \, \left( 1 - \frac{1}{2C} \right)^{k}\,.   \]
Having this observation in mind we can now prove the result. We are going to distinguish several cases:\\

\noindent (1) Let $Q \in \mathcal{A}_{\mathbb{N}}(x)$ be a self-adjoint element with $\nu(Q) = 0$ and $\|| Q|\|_{1,x} \leq 1$. Write $Q_{1} = Q+ 2 \mathbbm{1}$ and $Q_{2} = 2 \mathbbm{1}$ which are both positive and invertible. Then 
\[ \||Q_{i} |\|_{1,x} \| Q_{i}^{-1}\| \, \leq \, \|| Q_{i} |\|_{1,x} \, \leq \, 2 + \|| Q |\|_{1,x} \, \leq \, 3\, \leq C \, \quad (i=1,2).  \]
We can then apply the previous observation and get that for every $k \in \mathbb{N}$
\[ \|| \psi_{N}^{k}(Q_{i}) |\|_{1,x} \leq 2 C \, \Big( 1 - \frac{1}{2C}\Big)^{k} \quad (i=1,2) \,. \]
But since $\nu(Q) = 0$, for every $k \in \mathbb{N}$
\begin{align*} 
\|| L^{kN}\big( Q \big) |\|_{1,x}  \, = \, \|| \psi_{N}^{k}(Q)  |\|_{1,x} \, \leq \,\||  \psi_{N}^{k}(Q_{1}) |\|_{1,x} + \|| \psi_{N}^{k}(Q_{2})|\|_{1,x}  \leq   4 C \, \Big( 1 - \frac{1}{2C} \Big)^{k}\,.
\end{align*}
If we take an arbitrary $n \in \mathbb{N}$ and write it as $n = k N + r$ with $k=[n/N]$   and $0 \leq r < N$, then we deduce from the previous inequality and the second part of Corollary \ref{Coro:Lbounded} that
\begin{align*} 
\|| L^{n}(Q) |\|_{1,x} \, = \, \|| L^{r} \big( L^{k N}(Q) \big) |\|_{x} \,  \leq \, 4 C \, \|| L^{r} |\|_{1,x} \, \Big( 1 - \frac{1}{2C} \Big)^{k} \, \leq \, 4\, \widetilde{C}_{x}  \, \Big( 1 - \frac{1}{2C} \Big)^{\frac{n}{N}}\,. 
\end{align*}
Observe that the constant $\widetilde{C}_{x}$ does only depend on $x$ since $0 \leq r < N$ and $N = N(x)$ was fixed at the beginning.\\

\noindent (2) Let $Q \in \mathcal{A}_{\mathbb{N}}(x)$ be a self-adjoint element. We deduce from case (1) applied to $Q' = Q - \nu(Q) h$ that
\begin{align*} 
\|| L^{n}(Q) - \nu(Q) h |\|_{1,x} \, & \leq \,  4 \, \widetilde{C}_{x} \, \Big( 1 - \frac{1}{2C} \Big)^{\frac{n}{N}}\ \|| Q - \nu(Q) h |\|_{1,x}\\[2mm]
& \leq 4 \, \widetilde{C}_{x} \, (1 + \|| h |\|_{1,x}) \, \Big( 1 - \frac{1}{2C} \Big)^{\frac{n}{N}}\ \|| Q  |\|_{1,x}\, 
\end{align*}
The previous inequality extends straightforwardly to any $Q \in \mathcal{A}_{\mathbb{N}}(x)$ as it can be written as the difference of two self-adjoint operators in $\mathcal{A}_{\mathbb{N}}(x)$ with norm $\||\cdot |\|_{1,x}$ less than or equal to that of $Q$.
\end{proof}

\begin{Coro}\label{Coro:mainConvergence}
Let $x>1$ with $\sum_{\ell \geq 1}{\mathcal{G}_{\ell} \, x^{\ell}} < \infty$. Then, there exist $\widetilde{K}_{x}>0$ and $\delta_{x} > 0$ with the following property: for every $m, n \geq 0$ and $Q \in \mathcal{A}_{\mathbb{N}}$ it holds that
\[ \|| L^{n+m}(Q) - \nu(Q) h |\|_{1,x} \leq \widetilde{K}_{x} \, e^{- n \, \delta_{x}} \, \|| Q   |\|_{1+m,x}\,. \]
In particular, if $Q \in \mathcal{A}_{[1,m]}$ then 
\[ \|| L^{n+m}(Q) - \nu(Q) h |\|_{1,x} \leq \widetilde{K}_{x} \, e^{- n \, \delta_{x}} \, \| Q   \|\,. \]
\end{Coro}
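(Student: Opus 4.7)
The plan is to obtain this corollary by a simple composition argument: write $L^{n+m} = L^{n} \circ L^{m}$ and apply Theorem \ref{Theo:mainConvergence} to the iterate $L^{m}(Q)$ rather than to $Q$ itself. The only two ingredients I need beyond that theorem are $L$-invariance of the state $\nu$ and the continuity estimate for $L$ on the scale of norms $\||\cdot|\|_{k,x}$ provided by Corollary \ref{Coro:Lbounded}.

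For the first ingredient, I would observe that Theorem \ref{Theo:fixedState} gives $\nu(\mathcal{L}(Q)) = \mu\,\nu(Q)$, so $\nu(L(Q)) = \nu(Q)$, and by iteration $\nu(L^{m}(Q)) = \nu(Q)$ for every $m \geq 0$. For the second, the last assertion of Corollary \ref{Coro:Lbounded} yields a constant $C_{x} > 0$ (depending only on $x$) such that
\[
\|| L^{m}(Q) |\|_{1,x} \,\leq\, C_{x}\, \|| Q |\|_{1+m,x}
\]
for every $m \geq 0$ and every $Q \in \mathcal{A}_{\mathbb{N}}(x)$, with the case $m=0$ being trivial.

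Putting the pieces together, I would apply Theorem \ref{Theo:mainConvergence} to the observable $L^{m}(Q) \in \mathcal{A}_{\mathbb{N}}(x)$, producing
\[
\|| L^{n+m}(Q) - \nu(L^{m}(Q))\, h |\|_{1,x} \,\leq\, K_{x}\, e^{-n\delta_{x}}\, \|| L^{m}(Q) |\|_{1,x}.
\]
Replacing $\nu(L^{m}(Q))$ by $\nu(Q)$ using the $L$-invariance, and bounding $\||L^{m}(Q)|\|_{1,x}$ by $C_{x}\,\||Q|\|_{1+m,x}$, I obtain the first inequality with $\widetilde{K}_{x} := K_{x}\,C_{x}$ and the same exponent $\delta_{x}$. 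For the ``in particular'' clause, if $Q \in \mathcal{A}_{[1,m]}$ then $\|Q\|_{n} = 0$ for every $n \geq m$ (take the trivial approximant $Q_{n} = Q$), so $\||Q|\|_{1+m,x} = \|Q\|$, and the second bound follows at once.

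There is no real obstacle here: the only tiny point to double-check is the constant in Corollary \ref{Coro:Lbounded}, where the estimate $\||L^{n}(Q)|\|_{m,x} \leq C_{x}\,\||Q|\|_{n+m,x}$ is stated uniformly in $n,m$, which is exactly what is needed so that $\widetilde{K}_{x}$ does not depend on $m$.
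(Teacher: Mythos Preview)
Your proposal is correct and follows the same route as the paper: write $L^{n+m}=L^{n}\circ L^{m}$, apply Theorem \ref{Theo:mainConvergence} to $L^{m}(Q)$ using $\nu(L^{m}(Q))=\nu(Q)$, and then invoke the bound $\||L^{m}(Q)|\|_{1,x}\leq C_{x}\,\||Q|\|_{1+m,x}$ from Corollary \ref{Coro:Lbounded}. Your explicit justification of the ``in particular'' clause via $\|Q\|_{n}=0$ for $n\geq m$ is a nice addition that the paper leaves implicit.
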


\begin{proof}
Combining Theorem \ref{Theo:mainConvergence} and Corollary \ref{Coro:Lbounded} we deduce that
\begin{align*}
\|| L^{n+m}(Q) - \nu(Q) h |\|_{1,x} & \, = \, \|| L^{n}(L^{m}(Q)) - \nu(L^{m}(Q)) h |\|_{1,x}\\[2mm]  
& \, \leq \, K_{x} \, e^{- n \, \delta_{x}} \, \|| L^{m}(Q)   |\|_{1,x}\\[2mm]
& \, \leq \, \widetilde{K}_{x} \, e^{- n \, \delta_{x}} \, \|| Q   |\|_{m+1,x}\,.
\end{align*}
\end{proof}


\subsection{Exponential decay of correlations and convergence of states}

\begin{Theo}\label{Theo:ExponentialDecay}
Let $n, \ell \geq 1$ and $A \in \mathcal{A}_{[1,n]}$ and $B \in \mathcal{A}_{\mathbb{N}}$. Then,
\begin{equation}\label{equa:TheoExponentialDecay1}
\| L^{n}(A \, \tau_{n+\ell}(B)) - \tau_{\ell}(B) \, L^{n}(A) \| \leq 4 \mathcal{G}^{3}\, \mathcal{G}_{\ell} \, \| A\| \, \| B\| \,.\\[2mm] 
\end{equation}
In particular, there exist constants $K, \delta > 0$ independent of $n, \ell$ and the observables $A,B$  such that
\begin{equation}\label{equa:TheoExponentialDecay2}
|\nu(A \, \tau_{n+\ell}(B)) - \nu(A) \, \nu(\tau_{n+\ell}(B))|\,  \leq \, K \,  \| A \| \, \| B\| \, e^{- \delta \, \ell}\,.
\end{equation}
\end{Theo}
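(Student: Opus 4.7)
For part (1), the natural strategy is to replace $\widetilde{E}_{n}$ (which enters the definition of $\mathcal{L}_{n}$ and hence of $L^{n}$) by its finite-range truncation $\widetilde{E}_{(n,n+\ell)}$, which is supported in $[1,n+\ell]$. The resulting approximation to $L^{n}$ is $L_{(n,n+\ell)} := \mathcal{L}_{(n,n+\ell)}/\mu^{n}$, and by Corollary \ref{Coro:Lbounded}(ii) it satisfies $\| L^{n}(Q) - L_{(n,n+\ell)}(Q)\| \leq 2\mathcal{G}^{3}\mathcal{G}_{\ell}\|Q\|$ uniformly in $Q$. The crucial observation is that $L_{(n,n+\ell)}$ \emph{exactly} factorises on products of the desired form: because $\widetilde{E}_{(n,n+\ell)} \in \mathcal{A}_{[1,n+\ell]}$ and $\tau_{n+\ell}(B) \in \mathcal{A}_{[n+\ell+1,\infty)}$ have disjoint supports, $\tau_{n+\ell}(B)$ commutes with $\widetilde{E}_{(n,n+\ell)}$, has trivial support on the region $[1,n]$ where $\operatorname{tr}_{[1,n]}$ acts, and is sent to $\tau_{\ell}(B)$ by $\tau_{-n}$; all together this yields $L_{(n,n+\ell)}(A\,\tau_{n+\ell}(B)) = \tau_{\ell}(B)\,L_{(n,n+\ell)}(A)$. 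Splitting $L^{n}(A\tau_{n+\ell}(B)) - \tau_{\ell}(B) L^{n}(A)$ into two differences (each comparing $L^{n}$ with $L_{(n,n+\ell)}$) and applying the displayed operator bound with $Q = A\tau_{n+\ell}(B)$ and $Q = A$ respectively produces \eqref{equa:TheoExponentialDecay1} with total constant $4\mathcal{G}^{3}\mathcal{G}_{\ell}$.

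For part (2), I would exploit $\nu \circ L = \nu$ to insert extra iterations of $L$ so that the rank-one convergence of Corollary \ref{Coro:mainConvergence} becomes available. Fix $m := \lfloor \ell/2 \rfloor$, set $n' := n+m$ and $\ell' := \ell - m \geq 1$, and apply part (1) at the shifted scale $(n', \ell')$ to both $A\tau_{n+\ell}(B) = A\,\tau_{n'+\ell'}(B)$ and $\tau_{n+\ell}(B) = \mathbbm{1}\cdot\tau_{n'+\ell'}(B)$, then evaluate $\nu$. After cancellation the quantity $\nu(A\tau_{n+\ell}(B)) - \nu(A)\nu(\tau_{n+\ell}(B))$ reduces to
\[ \nu\big( \tau_{\ell'}(B)\,L^{n+m}(A - \nu(A)\mathbbm{1}) \big) \;+\; O\!\big(\mathcal{G}_{\ell'}\|A\|\|B\|\big). \]
Now $A - \nu(A)\mathbbm{1}$ has support in $[1,n]$, norm at most $2\|A\|$, and is $\nu$-null, so Corollary \ref{Coro:mainConvergence} (applied with $m$ additional iterations beyond the support $[1,n]$) yields $\|L^{n+m}(A - \nu(A)\mathbbm{1})\| \leq 2\widetilde{K}_{x}\,e^{-m\delta_{x}}\|A\|$. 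Bounding the main term by $\|B\|$ times this quantity and recalling that $\mathcal{G}_{\ell}$ itself decays exponentially under the standing hypothesis $\lambda > 2\Omega_{0}$ (a fact established in the discussion immediately preceding Proposition \ref{Prop:expansionalProperties}), both the main term and the $O(\mathcal{G}_{\ell'})$ error decay exponentially in $\ell$, and the balanced choice $m = \lfloor \ell/2 \rfloor$ produces \eqref{equa:TheoExponentialDecay2} with constants $K, \delta > 0$ independent of $n, \ell, A, B$.

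The conceptual heart of the argument is the exact factorisation identity for $L_{(n,n+\ell)}$ in part (1); this is precisely where one-dimensionality and the careful choice of $\widetilde{E}_{(n,n+\ell)}$ (with support exactly reaching the gap between $A$ and $\tau_{n+\ell}(B)$) matter. Part (2) is then a routine balancing between the approximation error controlled by $\mathcal{G}_{\ell}$ and the spectral-gap-type decay $e^{-m\delta_{x}}$, and requires no new input beyond the machinery already developed.
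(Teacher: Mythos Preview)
Your proposal is correct and follows essentially the same route as the paper: the exact factorisation identity $L_{(n,n+\ell)}(A\,\tau_{n+\ell}(B)) = \tau_{\ell}(B)\,L_{(n,n+\ell)}(A)$ combined with two applications of Corollary~\ref{Coro:Lbounded}(ii) for part~(1), and for part~(2) the splitting at scale $\lfloor \ell/2\rfloor$, invariance $\nu\circ L=\nu$, and Corollary~\ref{Coro:mainConvergence} applied to the $\nu$-null observable $A-\nu(A)\mathbbm{1}$. The only cosmetic difference is that the paper first reduces to $\nu(A)=0$ and then applies part~(1) once, whereas you apply part~(1) to both $A$ and $\mathbbm{1}$ before subtracting; these are equivalent.
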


\begin{proof}
Let us fist notice that
\begin{equation}\label{equa:ExponentialDecayAux1}
\mathcal{L}_{(n,n+\ell)} \big(\, A \, \tau_{n+\ell}(B) \big) \, = \, \tau_{\ell}(B) \, \mathcal{L}_{(n,n+\ell)}(A)
\end{equation}
since the support of $\tau_{n+\ell}(B)$ is contained in $[1+n+\ell, \infty)$ and so it is disjoint from the other factors and the lattice sites where the partial trace acts. We can then estimate, using Theorem \ref{Theo:PropertiesLn}
\begin{align*}
\| \mathcal{L}_{n}(A\tau_{n+\ell}(B)) - \mathcal{L}_{(n,n+\ell)}(A\tau_{n+\ell}(B))\| \, & \leq \, \operatorname{tr}\big( e^{- H_{[1,n]}}\big)\, 2 \, \mathcal{G} \, \mathcal{G}_{\ell} \, \| A \, \tau_{n+\ell}(B)\|\,,\\[2mm]
\| \tau_{\ell}(B) \mathcal{L}_{n}(A) - \tau_{\ell}(B) \mathcal{L}_{(n,n+\ell)}(A)\| \, & \leq \, \operatorname{tr}\big( e^{- H_{[1,n]}}\big)\, 2 \, \mathcal{G} \, \mathcal{G}_{\ell} \, \| A \| \, \|\tau_{\ell}(B)\|\,.
\end{align*}
Combining these inequalities with \eqref{equa:ExponentialDecayAux1} we deduce
\[ \| \mathcal{L}_{n}(A\tau_{n+\ell}(B)) - \tau_{\ell}(B) \mathcal{L}_{n}(A) \| \, \leq \, \operatorname{tr}(e^{-H_{[1,n]}}) \, 4 \mathcal{G} \, \mathcal{G}_{\ell} \, \| A\| \, \| B\|\,. \]
Dividing by $\mu^{n}$ on both sides and using \eqref{equa:fixedStateAux} we conclude that \eqref{equa:TheoExponentialDecay1} hold. Let us focus now on the second statement. We can assume w.l.o.g. that $\nu(A) = 0$ replacing $A$ with $A - \nu(A) \mathbbm{1}$. Let $r := \lfloor \ell/2\rfloor$, then
\begin{align*} 
 |\nu(A \tau_{n+\ell}(B))| & = |\nu(L^{n+r}(A\tau_{n+\ell}(B)))| \leq \| L^{n+r}(A\tau_{n+\ell}(B))\|\\[2mm] 
 & \leq \, \| L^{n+r}(A\tau_{n+\ell}(B)) - \tau_{\ell}(B) \, L^{n+r}(A)\| + \| B\| \, \| L^{n+r}(A)\|\,.
 \end{align*}
 Applying \eqref{equa:TheoExponentialDecay1} and Corollary \ref{Coro:mainConvergence}, we have for suitable constants $ \widetilde{K} ,\widetilde{\delta} > 0$ independent of $n,r$
 \[ |\nu(AB)| \leq 2 \, \mathcal{G}^{3} \, \mathcal{G}_{\ell-r} \, \| A\| \, \| B\|  + \widetilde{K} \, \| A \|\, \| B\|  \, e^{-r \, \widetilde{\delta}} \]
 where $\mathcal{G}_{\ell - r} \leq \mathcal{G}_{r}$ since $\ell - r \geq r$, so using that $\mathcal{G}_{r}$ converges to zero exponentially fast in $r$ we conclude the result.
\end{proof}

\begin{Lemm}\label{Lemm:convergenceStates}
Let $x>1$ with $\sum_{\ell} \mathcal{G}_{\ell} \, x^{\ell} < \infty$. Then, there are constants $K_{x}, \delta_{x}>0$ such that for every $a,k \geq 1$ and every $Q \in \mathcal{A}_{\mathbb{N}}(x)$
\[ \left| \varphi^{[1,a+k]}(Q) - \nu(Q) \right| \, \leq \, K_{x} \, e^{- \delta_{x} a} \, \||Q |\|_{1 + k, x}\,.  \]
In particular, 
the sequence of states $\varphi^{[1,k]}$ converges pointwise on $\mathcal{A}_{\mathbb{N}}$ to the state $\nu$.
\end{Lemm}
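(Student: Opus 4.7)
The strategy is to apply the recursion \eqref{equa:localGibbsL} to the interval $[1,a+k]$ with an intermediate index $n$ chosen so that the two unavoidable sources of error both decay exponentially in $a$. Writing $L_{(n,a+k)}:=\mathcal{L}_{(n,a+k)}/\mu^{n}$, the first error is the finite-volume discrepancy between $L_{(n,a+k)}$ and $L^{n}$, controlled by $\mathcal{G}_{a+k-n}$ through Corollary~\ref{Coro:Lbounded}(ii); the second is the fixed point convergence $L^{n}(Q)\to\nu(Q)h$, which Corollary~\ref{Coro:mainConvergence} controls via a split $n=n'+m'$ by $e^{-n'\delta_{x}}\||Q|\|_{1+m',x}$. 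Since the right-hand side of the claim carries the seminorm $\||Q|\|_{1+k,x}$, I am forced to take $m'=k$; to simultaneously ensure that both $a+k-n$ and $n-k$ grow like $a$, I will set $n:=k+\lfloor a/2\rfloor$, so that $a+k-n=\lceil a/2\rceil$ and $n-k=\lfloor a/2\rfloor$.

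With this choice I would first establish the master estimate
\[
\bigl\|L_{(n,a+k)}(Q)-\nu(Q)\,h\bigr\| \;\leq\; C_{x}\,e^{-\delta_{x}^{\ast}a}\,\||Q|\|_{1+k,x},\qquad \delta_{x}^{\ast}:=\tfrac{1}{2}\min(\delta_{x},\log x),
\]
by combining two bounds via the triangle inequality: Corollary~\ref{Coro:Lbounded}(ii) yields $\|L_{(n,a+k)}(Q)-L^{n}(Q)\|\leq 2\mathcal{G}^{3}\mathcal{G}_{\lceil a/2\rceil}\|Q\|$, and the elementary consequence $\mathcal{G}_{\ell}\lesssim x^{-\ell}$ of $\sum_{\ell}\mathcal{G}_{\ell} x^{\ell}<\infty$ converts this into an $e^{-(\log x)a/2}\|Q\|$ bound; meanwhile Corollary~\ref{Coro:mainConvergence} applied with parameters $(n',m')=(\lfloor a/2\rfloor,k)$ gives $\|L^{n}(Q)-\nu(Q)h\|\leq \widetilde{K}_{x}\,e^{-\lfloor a/2\rfloor\delta_{x}}\||Q|\|_{1+k,x}$, and bounding $\|Q\|\leq \||Q|\|_{1+k,x}$ merges the two error terms.

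Feeding this into \eqref{equa:localGibbsL} produces
\[
\varphi^{[1,a+k]}(Q)-\nu(Q)\;=\;\frac{\varphi^{[1,\lceil a/2\rceil]}\bigl(L_{(n,a+k)}(Q-\nu(Q)\mathbbm{1})\bigr)}{\varphi^{[1,\lceil a/2\rceil]}\bigl(L_{(n,a+k)}(\mathbbm{1})\bigr)},
\]
and the numerator is controlled by applying the master estimate to $Q-\nu(Q)\mathbbm{1}$, whose $\nu$-expectation vanishes and whose seminorm is at most $2\||Q|\|_{1+k,x}$. For the denominator, specializing the master estimate to $Q=\mathbbm{1}$ together with $h\geq \mathcal{G}^{-4}\mathbbm{1}$ from Theorem~\ref{Theo:fixedPointH}(ii) gives $L_{(n,a+k)}(\mathbbm{1})\geq \tfrac{1}{2}\mathcal{G}^{-4}\mathbbm{1}$ once $a$ exceeds a threshold $a_{0}=a_{0}(x)$; for $a<a_{0}$ the trivial bound $|\varphi^{[1,a+k]}(Q)-\nu(Q)|\leq 2\|Q\|\leq 2\||Q|\|_{1+k,x}$ is absorbed into $K_{x}$ by taking it at least $2e^{\delta_{x}^{\ast}a_{0}}$. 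The ``in particular'' assertion then follows by applying the main inequality to local $Q\in\mathcal{A}_{loc}\subset \mathcal{A}_{\mathbb{N}}(x)$ with fixed $k$ larger than the support and letting $a\to\infty$, then extending to all of $\mathcal{A}_{\mathbb{N}}$ by density and the contractivity of the states $\varphi^{[1,N]}$ and $\nu$. The main obstacle is precisely this choice of $n$: the natural options $n=a$ with $(n',m')=(a,0)$ or $(n',m')=(a-k,k)$ produce either the larger seminorm $\||Q|\|_{1,x}$ or the exponent $e^{-(a-k)\delta_{x}}=e^{k\delta_{x}}e^{-a\delta_{x}}$ that blows up when $k$ exceeds $a$; only the symmetric split $n=k+\lfloor a/2\rfloor$ makes both error sources produce genuine $a$-exponential decay against the stronger seminorm, at the cost of halving the rate to $\delta_{x}^{\ast}$.
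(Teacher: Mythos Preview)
Your proposal is correct and follows essentially the same route as the paper: the same choice $n=k+\lfloor a/2\rfloor$, the same master estimate via Corollary~\ref{Coro:Lbounded}(ii) and Corollary~\ref{Coro:mainConvergence}, and the same appeal to Theorem~\ref{Theo:fixedPointH}(ii) for the denominator. The only tactical difference is that the paper rewrites the difference as
\[
\bigl|\varphi^{[1,a']}(Q)-\nu(Q)\bigr|\;\leq\;\bigl|\varphi^{[1,a']}(Q)\bigr|\,\frac{\|L_{(n,a')}(\mathbbm{1})-h\|}{\varphi^{[1,a'-n]}(h)}\;+\;\frac{\|L_{(n,a')}(Q)-\nu(Q)h\|}{\varphi^{[1,a'-n]}(h)},
\]
placing $\varphi^{[1,a'-n]}(h)$ rather than $\varphi^{[1,a'-n]}(L_{(n,a')}(\mathbbm{1}))$ in the denominator; since $h\geq \mathcal{G}^{-4}\mathbbm{1}$ gives an immediate uniform lower bound, this bypasses your threshold $a_{0}$ case distinction, but the substance is identical.
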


\begin{proof}
We can write for every $1 \leq n < a':=a+k$
\[ \varphi^{[1,a']}(Q) \, = \, \frac{\varphi^{[1,a'-n]}\left( L_{(n,a')}(Q) \right)}{\varphi^{[1,a'-n]}\left( L_{(n,a')}(\mathbbm{1})\right)} \quad, \quad \nu(Q) = \frac{\varphi^{[1,a'-n]}\left( \nu(Q) h\right)}{\varphi^{[1,a'-n]}\left(  h\right)}\]
where the first identity already appeared in \eqref{equa:localGibbsL}. Thus
\[ \left| \varphi^{[1,a']}(Q) - \nu(Q) \right| \, \leq \, \big| \varphi^{[1,a']}(Q)\big| \, \frac{ \| L_{(n,a')}(\mathbbm{1}) -  h \|}{\varphi^{[1,a'-n]}(h)} \, + \frac{\| L_{(n,a')}(Q) -  \nu(Q) h \| }{\varphi^{[1,a'-n]}(h)}.\]
By Corollary \ref{Coro:Lbounded}.(ii) and Theorem \ref{Theo:mainConvergence}  
\begin{align*}
\| L_{(n,a')}(Q) - \nu(Q)h \| & \leq \| L_{(n,a')}(Q) - L^{n}(Q)\| + \| L^{n}(Q) - \nu(Q) h\|\\[2mm]
& \leq 2 \, \mathcal{G}^{3} \,  \mathcal{G}_{a'-n} \, \| Q\| + \widetilde{K}_{x} \, e^{-(n-k) \tilde{\delta}_{x}} \||Q |\|_{1+k, x}\,.
\end{align*}
Moreover, applying the state $\varphi^{[1,a'-n]}$ to Theorem \ref{Theo:fixedPointH}.(ii) we get $\varphi^{[1,a'-n]}(h) \geq \mathcal{G}^{-2}$. Therefore, combining these estimates
\[  \left| \varphi^{[1,a']}(Q) - \nu(Q) \right| \, \leq \,  4 \, \mathcal{G}^{5} \, \mathcal{G}_{a'-n}  \, \| Q\| + \mathcal{G}^{2} \, \widetilde{K}_{x} \, e^{-(n-k) \, \tilde{\delta}_{x}} \, \||Q |\|_{1+k, x}\,. \]
Finally, if we take $n := \lfloor a/2\rfloor + k$, we have that $n-k = \lfloor a/2\rfloor$ and $a'-n \geq \lfloor a/2\rfloor $, so that $\mathcal{G}_{a'-n} \leq \mathcal{G}_{\lfloor \ell/2 \rfloor}$. Since the latter converges to zero exponentially fast in $a$, we conclude the result.
\end{proof}

\begin{Theo}\label{Theo:KMSstate}
There is a translation invariant state $\psi$ on $\mathcal{A}_{\mathbb{N}}$ such that
\begin{equation}\label{equa:KMSstateaux}  
\lim_{k \rightarrow \infty} \, \| \nu \circ \tau_{k} - \psi \| \, e^{\delta k}  \, = \, 0 \quad \mbox{ for some $\delta > 0$}  \,. 
\end{equation}
Moreover, the (unique) extension of $\psi$ to a translation invariant state on $\mathcal{A}_{\mathbb{Z}}$, that we also denote by $\psi$, satisfies that there are constants $K, \delta>0$ such that\\[-2mm]
\begin{enumerate}\itemsep0.7em
\item[(i)] For every $j \in \mathbb{Z}\,, \, k \in \mathbb{N}$ and every $A \in \mathcal{A}_{(-\infty, j]}$ and $B \in \mathcal{A}_{[j+k, \infty)}$
\[ |\psi(AB) - \psi(A) \, \psi(B)| \, \leq \, \| A\| \, \| B\| \, K \, e^{ - \delta k}\,. \]
\item[(ii)] For every $Q \in \mathcal{A}_{[a,b]}$ and every $k \in \mathbb{N}$
\[ \left| \psi(Q) - \varphi^{[a-k,b+k]}(Q) \right| \leq K e^{- \delta k} \,.\\[2mm] \]
\end{enumerate}
In other words, $\psi$ is the infinite-volume KMS state associated to $\Phi$ at temperature one and satisfies exponential decay of correlations.
\end{Theo}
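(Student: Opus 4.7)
My plan is to first establish the convergence of $\nu \circ \tau_k$ in the dual norm on $\mathcal{A}_{\mathbb{N}}^{*}$ with exponential rate, from which all remaining claims follow via standard manipulations plus the quantitative results of previous sections. The crucial step is a Cauchy estimate \emph{uniform} in $Q \in \mathcal{A}_{\mathbb{N}}$ with $\|Q\| \le 1$. For fixed $n,\ell \ge 1$, applying Theorem~\ref{Theo:ExponentialDecay}, inequality~\eqref{equa:TheoExponentialDecay1}, with $A=\mathbbm{1}$, $B=Q$ and then evaluating $\nu$ (using $\nu \circ L^{n} = \nu$) yields
\[
\big|\nu(\tau_{n+\ell}Q) - \nu\bigl(\tau_\ell(Q)\,L^{n}(\mathbbm{1})\bigr)\big| \le 4\mathcal{G}^{3}\mathcal{G}_\ell\,\|Q\|.
\]
Corollary~\ref{Coro:mainConvergence} applied to $Q = \mathbbm{1}$ gives $\|L^{n}(\mathbbm{1})-h\| \le \widetilde{K}_{x} e^{-n\delta_{x}}$, so $L^{n}(\mathbbm{1})$ may be replaced by $h$ up to an exponentially small error. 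Next, since $h$ belongs to the set $\mathcal{C}$ of the proof of Theorem~\ref{Theo:fixedPointH}, it inherits the bound $\|h\|_{j} \le 2\mathcal{G}^{3}\mathcal{G}_{j}$; feeding the local approximant $h_{\lfloor\ell/2\rfloor}$ into~\eqref{equa:TheoExponentialDecay2} together with $\nu(h)=1$ produces $|\nu(\tau_\ell(Q)\,h) - \nu(\tau_\ell Q)| \le C e^{-\delta \ell}\|Q\|$. Chaining the three bounds gives
\[
|\nu(\tau_{n+\ell}Q) - \nu(\tau_\ell Q)| \le C' \bigl(e^{-\delta'\ell} + e^{-\delta' n}\bigr)\|Q\|,
\]
and comparing $\nu(\tau_{k_{1}}Q)$ and $\nu(\tau_{k_{2}}Q)$ both with $\nu(\tau_{k_{2}+N}Q)$ through this estimate and letting $N\to\infty$ upgrades it to $|\nu(\tau_{k_{1}}Q) - \nu(\tau_{k_{2}}Q)| \le 2C' e^{-\delta' \min(k_{1},k_{2})}\|Q\|$, as desired.

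This uniform Cauchy bound defines $\psi := \lim_{k\to\infty} \nu\circ\tau_{k}$ in operator norm, with~\eqref{equa:KMSstateaux} following by letting $k_{2}\to\infty$. Being a dual-norm limit of states, $\psi$ is a state, and translation invariance on $\mathcal{A}_{\mathbb{N}}$ follows from $\psi\circ\tau_{j} = \lim_{k} \nu\circ\tau_{k+j} = \psi$. I extend $\psi$ to a translation-invariant state on $\mathcal{A}_{\mathbb{Z}}$ by the recipe $\psi(Q) := \psi(\tau_{r} Q)$ for $Q \in \mathcal{A}_{[a,b]}$ and any $r \ge 1-a$, well-definedness being guaranteed by the translation invariance on $\mathcal{A}_{\mathbb{N}}$.

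Property~(i) then follows by density from the local case: for $A \in \mathcal{A}_{[-N,j]}$ and $B \in \mathcal{A}_{[j+k,M]}$, translating both by a large $r$ places them inside $\mathcal{A}_{\mathbb{N}}$ with the distance $k$ preserved between their supports, and applying~\eqref{equa:TheoExponentialDecay2} to $\nu(\tau_{r} A \cdot \tau_{r} B)$ before letting $r\to\infty$ yields the statement. For property~(ii), translation invariance of $\Phi$ gives $\varphi^{[a-k,b+k]}(Q) = \varphi^{[1,\,2k+b-a+1]}(\tau_{k-a+1}Q)$; setting $Q' := \tau_{k-a+1}Q$, whose support lies in $[1,b-a+k+1]$, one has $\||Q'|\|_{1+(b-a+k),x} = \|Q\|$, so Lemma~\ref{Lemm:convergenceStates} applied with its parameters $k+1$ and $b-a+k$ gives
\[
\big|\varphi^{[1,2k+b-a+1]}(Q') - \nu(Q')\big| \le K_{x} e^{-\delta_{x}(k+1)}\|Q\|.
\]
Writing $Q' = \tau_{k}\tau_{1-a}(Q)$ and using the first paragraph's bound $\|\nu\circ\tau_{k} - \psi\| \le 2C' e^{-\delta' k}$ together with $\psi\circ\tau_{1-a} = \psi$ gives $|\nu(Q') - \psi(Q)| \le 2C' e^{-\delta' k}\|Q\|$, closing the estimate. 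The identification of $\psi$ with the infinite-volume KMS state at $\beta=1$ is automatic, since $\psi$ is the thermodynamic limit and the KMS state is unique in this setting by the cited results of Araki~\cite{Araki75} and Kishimoto~\cite{Kishimoto76}.

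The main obstacle is maintaining uniformity in $Q$ throughout the first paragraph's chain of approximations, so that the convergence holds in operator norm rather than merely pointwise on local observables. This is enabled by the fact that Theorem~\ref{Theo:ExponentialDecay} and Corollary~\ref{Coro:mainConvergence} applied to $\mathbbm{1}$ deliver norm estimates without any locality hypothesis on $Q$, while the exponential decay $\|h\|_{j} \lesssim \mathcal{G}_{j}$ extracted from the explicit construction of $\mathcal{C}$ makes the decay-of-correlations step quantitative in $\ell$.
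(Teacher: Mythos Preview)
Your proposal is correct and follows essentially the same route as the paper: the key Cauchy estimate comes from combining \eqref{equa:TheoExponentialDecay1} (with $A=\mathbbm{1}$) and the convergence $L^{n}(\mathbbm{1})\to h$, after which (i) and (ii) are deduced exactly as you describe via Theorem~\ref{Theo:ExponentialDecay} and Lemma~\ref{Lemm:convergenceStates}. The paper's argument is marginally shorter in that it stops at the bound $|\nu(\tau_{n+N}Q)-\nu(\tau_N(Q)h)|\le(4\mathcal{G}\mathcal{G}_N+K_xe^{-n\delta_x})\|Q\|$ and compares any two shifts directly to the common anchor $\nu(\tau_N(Q)h)$, bypassing your intermediate estimate on $|\nu(\tau_\ell(Q)h)-\nu(\tau_\ell Q)|$ (which, incidentally, invokes \eqref{equa:TheoExponentialDecay2} with the factors in the reverse order from how it is stated---the symmetric version does hold by the same proof, but this should be remarked).
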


\begin{proof}
Let $n,N \in \mathbb{N}$ and $Q \in \mathcal{A}_{\mathbb{N}}$. Using Theorems \ref{Theo:ExponentialDecay}  and \ref{Theo:mainConvergence} we have
\begin{align*}
\left| \nu\left(\tau_{n+N}(Q) \right) - \nu\left( \tau_{N}(Q) h\right) \right| \, & = \, \left| \nu\big(L^{n}\big( \tau_{N +n}(Q)\big) \big) - \nu\big(\tau_{N}(Q)h\big) \right|\\[2mm]
& \, \leq \| L^{n}\big( \tau_{N+n}(Q) \big) - \tau_{N}(Q) \, L^{n}(\mathbbm{1}) \| \, + \, \| Q\| \, \| L^{n}(\mathbbm{1}) - h\|\\[2mm]
& \, \leq \, \left( 4 \, \mathcal{G} \, \mathcal{G}_{N} \,  +  \, K_{x} \, e^{- n \, \delta_{x}} \right) \, \| Q\|\,.
\end{align*}
As a consequence, $(\nu \circ \tau_{k})_{k}$ is a Cauchy sequence convergent to a state $\psi$ over $\mathcal{A}_{\mathbb{N}}$ satisfying moreover
 \[ \| \nu \circ \tau_{n+N} - \psi\| \, \leq \, 8 \, \mathcal{G} \, \mathcal{G}_{N} + 2 K_{x} e^{- n \, \delta_{x}}\,. \]
It is clear from the previous expression that $\psi \circ \tau = \psi$, that is $\psi$ is translation invariant. Let us now check statements (i) and (ii).\\[-2mm]

\noindent (i) Let $A \in \mathcal{A}_{(-\infty, j]}$ and $B \in \mathcal{A}_{[j+k, \infty)}$ as above. We can assume w.l.o.g. that $A$ is a local observable. Then, there exist $n_{0} \in \mathbb{N}$ such that $\tau_{n}(A), \tau_{n}(B) \in \mathcal{A}_{\mathbb{N}}$ for every $n \geq n_{0}$. We have by Theorem \ref{Theo:ExponentialDecay} that there are constants $K, \delta > 0$ independent of $k,j,n_{0},n$ such that
\[ |\nu(\tau_{n}(A) \tau_{n}(B)) - \nu(\tau_{n}(A)) \nu(\tau_{n}(B))| \leq K  e^{- \delta k} \,\| A\|\, \| B\|\,. \]
Taking limit on $n$ and using \eqref{equa:KMSstateaux}  we conclude the result.\\[-2mm]

\noindent (ii) Let $Q \in \mathcal{A}_{[a,b]}$ and $k \in \mathbb{N}$. Then
\begin{align*}
\left| \varphi^{[a-k,b+k]}(Q) - \psi(Q) \right| \, & =\, \left| \varphi^{[1,1+b-a+2k]}(\tau_{1+k-a}(Q)) - \psi(Q)\right|\\[2mm]
& \leq \, \left| \varphi^{[1,1+b-a+2k]}(\tau_{1+k-a}(Q)) - \nu(\tau_{1+k-a}(Q))\right| +\\[2mm]
& \hspace{4cm} + \left| \nu(\tau_{1+k-a}(Q)) - \psi(Q)\right|\,.
\end{align*}
The first summand can be estimated using Lemma \ref{Lemm:convergenceStates} 
\begin{align*} 
\left| \varphi^{[1,1+b-a+2k]}(\tau_{1+k-a}(Q)) - \nu(\tau_{1+k-a}(Q))\right| \leq K_{x} \, e^{-(k-1) \,  \delta_{x}} \|| \tau_{1+k-a}(Q) |\|_{x, 2+b-a+k}
 \end{align*}
and using that $\tau_{1+k-a}(Q)$ has support in $[1+k,1+b-a+k]$, so that 
\[ \|| \tau_{1+k-a}(Q) |\|_{x, 2+b-a+k} = \|  \tau_{1+k-a}(Q) \| = \| Q\|\,. \]
The second summand can be estimated replacing $\psi(Q) = \psi(\tau_{1-a}(Q))$ by translational invariance and using \eqref{equa:KMSstateaux}.
\end{proof}

\newpage

\section{Spectral gap problem for 2D PEPS}
\label{sec:2DPEPS}

\subsection{Basics}

Let us consider the two-dimensional lattice $\mathbb{Z}^{2}$ and denote by $\mathcal{E}$ its set of edges. Fixed $d, D \in \mathbb{N}$, we consider at each lattice point $v \in \mathbb{Z}^{2}$ a linear operator or tensor 

\begin{figure}[h]
\begin{tikzpicture}[scale=0.3]
\draw (-6.1,0) node{$T_{v} \,  : (\mathbb{C}^{D})^{\otimes 4}  \,\,  \longrightarrow \,\,  \mathbb{C}^{d}$};

\draw (0,-4) node{$T_{v} \, = \, \sum\limits_{k=1}^{d} \,\, \sum\limits_{j_{1}, j_{2}, j_{3}, j_{4}=1}^{D} T_{j_{1}, j_{2}, j_{3}, j_{4}}^{k} \ket{k} \bra{j_{1} j_{2} j_{3} j_{4}}$}; 

\begin{scope}[xshift=-22cm, yshift=-2.5cm]
\draw[ thick] (0,-3) -- (0,3) ;  
\draw[thick] (-3,0) -- (3,0);  
\draw[fill=black!20] (-0.9,-0.9) rectangle (0.9,0.9);  

\draw (0,0) node{$k$}; 
\draw (2.5,1) node{$j_1$}; 
\draw (-1,2.5) node{$j_2$}; 
\draw (-2.5,-1) node{$j_3$}; 
\draw (1,-2.5) node{$j_4$}; 
\end{scope}
\end{tikzpicture}
\end{figure}

\noindent Here, $\mathcal{H}_{v} \equiv \mathbb{C}^{d}$ is the \emph{physical} space associated to $v$ and each $\mathbb{C}^{D}$ is the \emph{virtual} space corresponding to an edge $e$ incident to $v$. For each finite region $\mathcal{R} \subset \mathbb{Z}^{2}$ let us denote by $\mathcal{E}_{\mathcal{R}}$ the set of edges conecting vertices contained in $\mathcal{R}$, and by $\mathcal{E}_{\partial \mathcal{R}}$ the edges simultaneously incident to $\mathcal{R}$ and $\mathbb{Z}^{2} \setminus \mathcal{R}$. We can then associate to $\mathcal{R}$ the linear operator
\[ \mbox{$\bigotimes_{v \in \mathcal{R}}$}T_{v} \,: \, \mbox{$\bigotimes_{e \in \mathcal{E}_{\mathcal{R}}}$} \left( \mathbb{C}^{D} \otimes \mathbb{C}^{D}\right)  \otimes  \mbox{$\bigotimes_{e \in \mathcal{E}_{\partial \mathcal{R}}}$} \mathbb{C}^{D}\,\,   \longrightarrow \,\, \mbox{$\bigotimes_{v \in \mathcal{R}}$} \mathbb{C}^{d}. \]
If we set on each edge $e \in \mathcal{E}_{\mathcal{R}}$ a maximally entangled state
\[ \ket{\omega_{e}} = \frac{1}{\sqrt{D}} \sum_{j=1}^{D}{\ket{j\, j}} \in \mathbb{C}^{D} \otimes \mathbb{C}^{D}\,, \]
then, we get a linear operator from the virtual space $\mathcal{H}_{\partial \mathcal{R}} := \bigotimes_{e \in \mathcal{E}_{\partial \mathcal{R}}}{\mathbb{C}^{D}}$ into the bulk physical space $\mathcal{H}_{\mathcal{R}} := \bigotimes_{v \in \mathcal{R}}{\mathbb{C}^{d}}$
\[ T_{\Lambda}: \mathcal{H}_{\partial \mathcal{R}} \, \longrightarrow \, \mathcal{H}_{\mathcal{R}}\,,\quad \ket{X} \, \longmapsto \, \mbox{$\bigotimes_{v \in \mathcal{R}}{T_{v}}$} \Big( \mbox{$\bigotimes_{e \in \mathcal{E}_{\mathcal{R}}}$} \ket{\omega_{e}} \otimes \ket{X}\Big)\,. \]

\begin{figure}[h]
\begin{tikzpicture}[scale=0.3]


\begin{scope}[scale=1.5, xshift=6cm,yshift=-0.5cm]
\draw[step=1cm,black] (-1.9,-1.9) grid (4.9,2.9);

\foreach \x in {-1,...,4} \foreach \y in {-1,...,2}
\draw[fill=black!20] (\x-0.15,\y-0.15) rectangle (\x+0.15,\y+0.15);  

\end{scope}


\begin{scope}[scale=1.5, xshift=15.5cm,yshift=-0.5cm]
\draw[step=1cm,black] (-1.9,-1.9) grid (4.9,2.9);

\draw[fill=black!20] (-1.15,-1.15) rectangle (4.15,2.15);  

\draw (1.65,0.5) node{$\mathcal{H}_{\mathcal{R}}$}; 
\end{scope}


\begin{scope}[scale=1.5, xshift=25cm, yshift=-0.5cm]

\draw[step=1cm,black!20] (-1.9,-1.9) grid (4.9,2.9);

\draw 
  (-1.5,-1.5) --
  ++(6,0) --
  ++(0,4) --
  ++(-6,0) --
  cycle
  {};
  
  \foreach \x in {-1.5, 4.5} \foreach \y in {-1,...,2}
\draw[fill=black] (\x,\y) circle [radius = 0.1];   

  \foreach \x in {-1,...,4} \foreach \y in {-1.5,2.5}
\draw[fill=black] (\x,\y) circle [radius = 0.1];   

\draw (1.65,0.5) node{$\mathcal{H}_{\partial \mathcal{R}}$}; 

\end{scope}


\end{tikzpicture}

\caption{\footnotesize Contracted tensors on a finite rectangular region $\mathcal{R}$.}
\label{Fig:PEPS}
\end{figure}
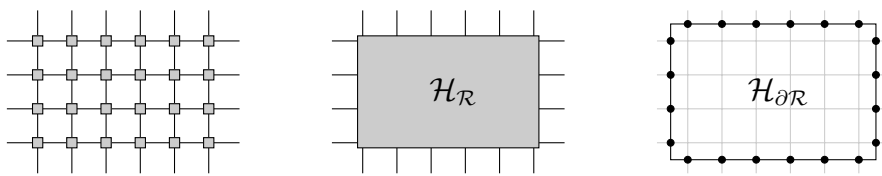

We say that the PEPS is \emph{injective} on $\mathcal{R}$ if the map $T_{\mathcal{R}}$ is injective. This property is somehow generic on regions large enough, and allows to construct a local Hamiltonian for which the PEPS is its unique ground state. Indeed, let us assume that every $T_{v}$ is injective. For each edge $e=(v_{1}, v_{2})$ of the lattice, let $h_{e}$  denote the orthogonal projection  onto the orthogonal complement of $\operatorname{Im} T_{\{ v_{1},v_{2}\}}$.  Then, the set of nearest-neighbor interactions $(h_{e})_{e \in \mathcal{E}}$ determines the \emph{parent Hamiltonian} of the PEPS. It satisfies that for every finite region $\mathcal{R}$ of the lattice, the groundspace of $H_{\mathcal{R}} = \sum_{e \in \mathcal{E}_{\mathcal{R}}}{h_{e}}$ coincides with $\operatorname{Im}T_{\mathcal{R}}$. Moreover, it is \emph{frustration-free}, namely that every ground state of $H_{\mathcal{R}}$ is a ground state of every local interaction term. See \cite{PeVeCiWo08} for a detailed exposition of these statements. \\

\noindent Finally, let us recall that the \emph{boundary state} of the PEPS on $\mathcal{R}$ is defined  as \footnote{We are following the definition given in \cite{KaLuPe19}. See there for a comparison with the definition given in \cite{CiPoScVe11}.}
\[ \rho_{\partial \mathcal{R}} := T_{\mathcal{R}}^{\dagger} T_{\mathcal{R}}\, \in \, \mathcal{B}(\mathcal{H}_{\partial \mathcal{R}}) \,. \]
Under the injectivity condition the boundary state is a full-rank positive matrix, and so it can be written as
\[  \rho_{\mathcal{R}}  =  \exp(2 \,G_{\partial \mathcal{R}})  \]
for some self-adjoint operator $G_{ \partial \mathcal{R}} \in \mathcal{B}(\mathcal{H}_{\partial \mathcal{R}})$ called the \emph{boundary Hamiltonian}. The choice of a factor two in the exponent is just convenient for later arguments.

\subsection*{Spectral gap and approximate factorization}

A main problem to tackle is finding conditions ensuring that the family of (parent) Hamiltonians $(H_{\mathcal{R}})_{\mathcal{R}}$ where $\mathcal{R}$ runs over all finite rectangles is \emph{gapped}, namely if 
\[ \inf_{\mathcal{R}}{\gamma(H_\mathcal{R})} > 0 \]
where $\gamma(H_{\mathcal{R}})$ is the difference between the two smallest eigenvalues of $H_{\mathcal{R}}$. This issue is related to the correlation properties in the bulk of the system, and the latter are connected to the locality features of the boundary states and Hamiltonians \cite{CiPoScVe11}. This relation was formalized in \cite{KaLuPe19} as a sufficient condition on the boundary states for the parent Hamiltonian to be gapped. We have to introduce some notation prior to the formal statement. 
\begin{Defi}
Given a length parameter $\ell \in \mathbb{N}$, let us say that three given rectangles $A,B, C$ are \emph{$\ell$-admissible} if they are adjacent, $B$ shields $A$ from $C$, and the width of $B$ is at least $4\ell$, as portrayed in Figure \ref{Fig:rectangulo0}. 
\end{Defi}

\begin{figure}[h]
\begin{tikzpicture}

\begin{scope}[scale=0.4]
\draw[step=1cm,black] (-4.9,-2.9) grid (5.9,2.9);

\foreach \x in {-4,...,5} \foreach \y in {-2,...,2}
\draw[fill=gray] (\x-0.15,\y-0.15) rectangle (\x+0.15,\y+0.15);  


\draw[fill=black!20] (-4-0.2,-2-0.2) rectangle (-2+0.2,2+0.2);

\draw[fill=black!20] (-1-0.2,-2-0.2) rectangle (2+0.2,2+0.2);

\draw[fill=black!20] (3-0.2,-2-0.2) rectangle (5+0.2,2+0.2);

\draw (-3,0) node{$A$}; 
\draw (0.5,0) node{$B$}; 
\draw (4,0) node{$C$}; 

\end{scope}

\begin{scope}[scale=0.4, xshift=13cm]

\draw[step=1cm,black!20] (-4.9,-2.9) grid (5.9,2.9);

\draw (-4.5,-2.5) -- (5.5,-2.5) -- (5.5,2.5) -- (-4.5,2.5) -- cycle;
  
\draw (-1.5,2.5) -- (-1.5, -2.5);
  \draw (2.5,2.5) -- (2.5, -2.5);
  
  \foreach \x in {-4.5, -1.5, 2.5, 5.5} \foreach \y in {-2,...,2}
\draw[fill=black] (\x,\y) circle [radius = 0.1];   

  \foreach \x in {-4,...,5} \foreach \y in {-2.5,2.5}
\draw[fill=black] (\x,\y) circle [radius = 0.1];   

\draw (-3,0) node{$A$}; 
\draw (0.5,0) node{$B$}; 
\draw (4,0) node{$C$}; 

\end{scope}

\begin{scope}[scale=0.4, xshift=26cm]

\draw (-4.5,-2.5) -- (5.5,-2.5) -- (5.5,2.5) -- (-4.5,2.5) -- cycle;
  
\draw (-1.5,2.5) -- (-1.5, -2.5);
  \draw (2.5,2.5) -- (2.5, -2.5);

\draw (-3,0) node{$A$}; 
\draw (0.5,0) node{$B$}; 
\draw (4,0) node{$C$}; 

\draw[dashed] (-0.5,2.5) -- (-0.5, -2.5);
  \draw[dashed] (0.5,2.5) -- (0.5, 0.5);
    \draw[dashed] (0.5,-0.5) -- (0.5, -2.5);
\draw[dashed] (1.5,2.5) -- (1.5, -2.5);
  
  
  \draw[<->] (-1.4,-2.3) -- (-0.6,-2.3);
  \draw[<->] (-0.4,-2.3) -- (0.4,-2.3);
  \draw[<->] (0.6,-2.3) -- (1.4,-2.3);
  \draw[<->] (1.6,-2.3) -- (2.4,-2.3);

\draw (-1,-1.7) node{$\ell$};
\draw (0,-1.7) node{$\ell$};
\draw (1,-1.7) node{$\ell$};
\draw (2,-1.7) node{$\ell$};

\end{scope}


\pgfmathsetmacro{\base}{3.5}
\pgfmathsetmacro{\altu}{1.6}

\begin{scope}[scale=1,xshift=6cm,yshift=-4cm]

\begin{scope}[ultra thick]

\draw (-\base,\altu) -- ({\base},\altu);   
\draw (-\base,-\altu) -- ({\base},-\altu);   
\draw (-\base,\altu) -- (-\base,-\altu);   
\draw (\base,-\altu) -- (\base,\altu); 

\draw (-{\base/2},\altu) -- (-{\base/2},-\altu);   
\draw ({\base/2},\altu) -- ({\base/2},-\altu);   

\end{scope}

\draw [dashed] (-{\base/4},\altu) -- (-{\base/4},-\altu);   
\draw [dashed] (0,\altu) -- (0,-\altu);   
\draw [dashed] ({\base/4},\altu) -- ({\base/4},-\altu);   


\begin{scope}[scale=1.1]
\draw [|-|] (-{\base/4},\altu) -- (-{\base}+0.1,\altu) -- (-{\base}+0.1,-\altu) -- (-{\base/4},-\altu);
\draw (-{\base-0.1},0) node{$a$}; 
\end{scope}

\begin{scope}[scale=1.1]
\draw [|-|] ({\base/4},\altu) -- ({\base-0.1},\altu) -- ({\base-0.1},-\altu) -- ({\base/4},-\altu);
\draw ({\base+0.1},0) node{$c$}; 
\end{scope}

\begin{scope}[scale=1.1]
\draw [|-|] (-{\base/4+0.1},\altu) -- ({\base/4-0.1},\altu);
\draw (0,{\altu+0.2}) node{$z$}; 
\end{scope}

\begin{scope}[scale=1.1]
\draw [|-|] (-{\base/4+0.1},-\altu) -- ({\base/4-0.1},-\altu);
\draw (0,{-\altu-0.2}) node{$z$}; 
\end{scope}

\begin{scope}[scale=0.9]
\draw [|-|] ({-\base/4-0.2},\altu) -- ({-\base/2},\altu) -- ({-\base/2},-\altu) -- ({-\base/4-0.2},-\altu);
\draw ({-\base/2+0.2},0) node{$\alpha$}; 
\end{scope}

\begin{scope}[scale=0.9]
\draw [|-|] ({\base/4+0.2},\altu) -- ({\base/2},\altu) -- ({\base/2},-\altu) -- ({\base/4+0.2},-\altu);
\draw ({\base/2-0.2},0) node{$\gamma$}; 
\end{scope}

\begin{scope}[scale=0.9]
\draw [|-|] (-{\base/4},\altu) -- (-0.1,\altu); 
\draw ({-0.5},{\altu-0.2}) node{$x$}; 
\draw [|-|] (-{\base/4},-\altu) -- (-0.1,-\altu); 
\draw ({-0.5},{-\altu+0.2}) node{$x$}; 

\draw [|-|]  (0.1,\altu) -- ({\base/4},\altu); 
\draw ({0.5},\altu-0.2) node{$y$}; 
\draw [|-|]  (0.1,-\altu) -- ({\base/4},-\altu); 
\draw ({0.5},-\altu+0.2) node{$y$}; 
\end{scope}

\end{scope}

\end{tikzpicture}
\caption{\small $A,B,C$ are $\ell$-admissible rectangles. At the bottom, notation introduced for boundary segments .}
\label{Fig:rectangulo0}
\end{figure}
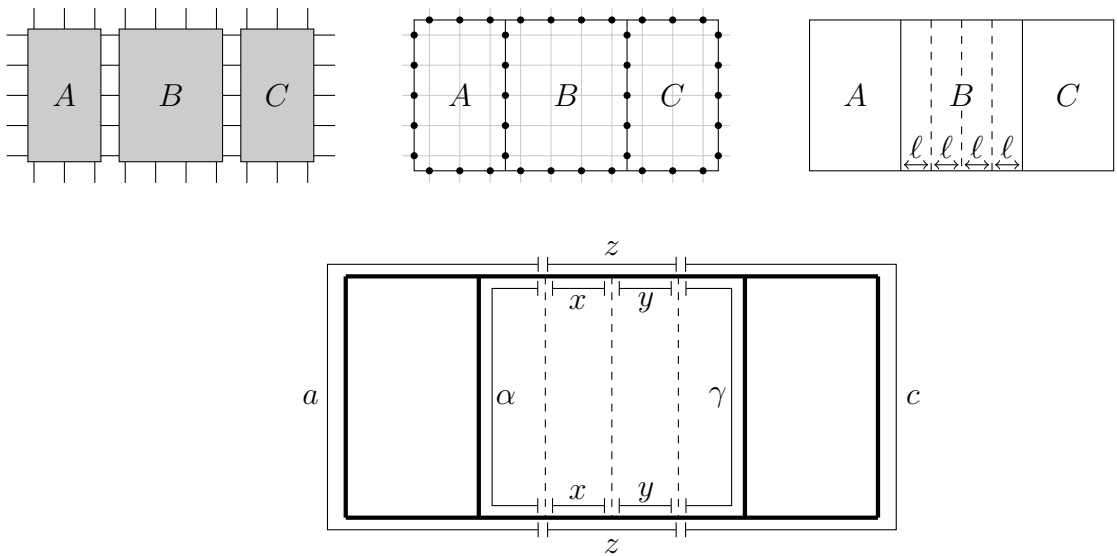

In this setup, let us introduce in Figure \ref{Fig:rectangulo0} some further notation for the boundary regions of $ABC$ that we will use in the following subsections. The horizontal sides of $B$ are split into four segments of length greater than the length scale $\ell$. Notice that $z$ and so $x,y$ are at distance greater than $\ell$ from $\partial A \cup \partial C$, while $a$  and $\alpha$ (as well as $c$ and $\gamma$) overlap on segments with length greater than $\ell$, by definition.

\begin{Defi}
Let us say that a family of positive and invertible observables $(\rho_{\partial \mathcal{R}})_{\mathcal{R}}$, where $\mathcal{R}$ runs over all finite rectangles, is \emph{approximately factorizable} if there is a positive decreasing function $ \ell \longmapsto \varepsilon(\ell)$ with polynomial decay, i.e. $\varepsilon(\ell) = \operatorname{poly}(1/\ell)$ fulfilling that for every $\ell$-admissible rectangles $ABC$ there exist
invertible observables $\Delta_{az}, \Delta_{zc}, \Upsilon_{\alpha z}, \Upsilon_{z \gamma}$ on $\partial A \cup \partial B \cup \partial C$ (subscripts indicate the support according to Figure \ref{Fig:rectangulo0}) such that the full-rank matrices
\[ \sigma_{\partial ABC}:= \Delta_{zc} \Delta_{az}\,, \,\,\, \sigma_{\partial AB}:=\Upsilon_{z \gamma} \Delta_{az}\,,\,\,\, \sigma_{\partial BC}:=\Delta_{zc} \Upsilon_{\alpha z}\,, \,\,\, \sigma_{\partial B} = \Upsilon_{z \gamma} \Upsilon_{\alpha z}\, \]
satisfy
\[ \| \rho_{\partial \mathcal{R}}^{1/2} \, \sigma_{\partial \mathcal{R}}^{-1} \, \rho_{\partial \mathcal{R}}^{1/2} \, - \,  \mathbbm{1}\| < \varepsilon(\ell)\quad \mbox{for every } \mathcal{R} \in \{ABC, AB, BC, B \}\,. \]
\end{Defi}

If the boundary states of the above injective PEPS satisfy this property, then the family of parent Hamiltonians $(H_{\mathcal{R}})_{\mathcal{R}}$ where $\mathcal{R}$ runs over all finite rectangles in $\mathbb{Z}^{2}$ is gapped  \cite{KaLuPe19}. Moreover, the authors show that this approximate factorization property is satisfied if the boundary Hamiltonians posses nice locality and compatibility conditions.

\begin{Defi}[Locality and homogeneity]
Let us decompose each term of the family of boundary Hamiltonians $(G_{\partial \mathcal{R}})_{\mathcal{R}}$ as a sum of local interactions
\[ G_{\partial \mathcal{R}}= \sum_{X \subset \partial \mathcal{R}}{g_{X}^{\partial \mathcal{R}}}\,. \]
The rate of decay of this decomposition is defined as the sequence
\begin{equation*} 
\Omega_{k} := \, \mbox{$\sup_{\mathcal{R}}$} \,\,\, \mbox{$\sup_{x \in \partial \mathcal{R}}$} \,\, \sum \{  \, \| g_{X}^{\partial \mathcal{R}}\| \colon X \ni x \,, \,\, \operatorname{diam}_{\mathcal{R}}(X) \geq k \, \} \,\,, \,\, k \geq 0 
\end{equation*}
where the diameter is calculated in terms of the intrinsic distance $\operatorname{dist}_{\partial \mathcal{R}}$ of $\partial \mathcal{R}$ considered as a one-dimensional periodic lattice. In particular, we will say that the interactions decay exponentially if there is $\lambda > 0$ such that
\[ \| \Omega\|_{\lambda}:= \sum_{k=0}^{\infty} e^{\lambda k} \Omega_{k} < \infty\,. \]
In addition, given a positive sequence $\ell \longmapsto \eta(\ell)$ converging to zero as $\ell$ tends to infinity, we say that this decomposition is \emph{$\eta$-homogeneous} if for every pair of adjacent rectangles $AB$, the corresponding boundary Hamiltonians $G_{\partial AB}$ and $G_{\partial A}$ satisfy that for each $X \subset \partial{A} \setminus \partial{B}$\\ 
\begin{center}
\begin{tikzpicture}
\begin{scope}[scale=0.4]

\draw (-4.5,-1.5) -- (2.5,-1.5) -- (2.5,1.5) -- (-4.5,1.5) -- cycle;
  
\draw[dashed] (-1.5,1.5) -- (-1.5, -1.5);
  
  \foreach \x in {-4.5, 2.5} \foreach \y in {-1,...,1}
\draw[fill=black] (\x,\y) circle [radius = 0.1];   

  \foreach \x in {-4,...,2} \foreach \y in {-1.5,1.5}
\draw[fill=black] (\x,\y) circle [radius = 0.1];   

\draw (-3,0) node{$A$}; 
\draw (0.5,0) node{$B$}; 

\end{scope}

\draw (7,0) node{$\| g_{X}^{\partial AB} - g_{X}^{\partial A} \| \leq \eta(\operatorname{dist}_{\partial AB}(X, \partial B)) \, \left( \|g_{X}^{\partial AB} \|+ \| g_{X}^{\partial A}\|\right)$.};
\end{tikzpicture}
\end{center}
\end{Defi}

It is shown in \cite[Section 5]{KaLuPe19} that if the boundary Hamiltonians are finite-range (i.e. $\Omega_{k}=0$ if $k$ is larger than a fixed $r>0$) and the homogeneity condition holds for a controlling sequence $\eta$ that decays sufficiently (polynomially) fast, then the quasi-factorization condition holds. We aim to extend this result to interactions with exponential decay. Our method is analogous, based on the imaginary time locality estimate from previous sections, although simpler in the sense that we replace the use of expansional formulas for multiple product with just perturbation formulas.

\subsection*{Locality estimates on the boundary}

First we have to obtain some locality estimates on the imaginary-time evolution of an observable $Q \in \mathcal{B}(\mathcal{H}_{\partial \mathcal{R}})$ with respect to a boundary Hamiltonian $G_{\partial \mathcal{R}}$. We are going to call a supporting set $\Lambda_{0} \subset \partial \mathcal{R}$ \emph{admissible} if it consists of at most two connected segments. As usual, we denote
\[ \Lambda_{k}:= \{ x \in \partial{\mathcal{R}} \colon \operatorname{dist}_{\partial \mathcal{R}}(x, \Lambda_{0}) \leq k \}\,\, , k \geq 0\,. \]

\begin{Lemm}\label{Lemm:localityEstimates Boundary}
Let us consider a family of local boundary Hamiltonians $(G_{\partial \mathcal{R}})_\mathcal{R}$ with exponential decay $\| \Omega\|_{\lambda} < \infty$ for some $\lambda > 0$. Then, for every rectangle $\mathcal{R}$ and every observable $Q \in \mathcal{B}(\mathcal{H}_{\partial \mathcal{R}})$ with admissible support $\Lambda_{0}$ 
\begin{align}
\label{equa:localityQuasifactorization1} \| \Gamma_{\Lambda_{L}}^{s}(Q) - \Gamma_{\Lambda_{\ell}}^{s}(Q) \| & \leq \|Q \| e^{2 |s| \Omega_{0} |\Lambda_{0}| \, } \, e^{ 8 |s| \, \|\Omega\|_{\lambda} } \, e^{(8|s| \Omega_{0}- \lambda) \ell}\\[2mm]
\label{equa:localityQuasifactorization2} \| \Gamma_{\Lambda_{L}}^{s}(Q)\| & \leq \|Q \| e^{2 |s| \Omega_{0} |\Lambda_{0}| \, }  \, e^{ 8 |s| \, \|\Omega\|_{\lambda} }\,.
\end{align}
whenever  $0 \leq \ell \leq L$ and $s \in \mathbb{C}$ satisfies $|s| \leq \lambda/(8\Omega_{0})$. 
\end{Lemm}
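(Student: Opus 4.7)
The plan is to follow exactly the same strategy used to derive Theorem~\ref{Theo:FundamentalEstimations} and its exponential-decay corollary \eqref{equa:localityEstimatesExponentialDecay1}--\eqref{equa:localityEstimatesExponentialDecay2}, with one geometric adjustment to accommodate the fact that $\partial\mathcal{R}$ is a periodic 1D lattice (a cycle) and that the support $\Lambda_{0}$ may consist of \emph{two} connected segments rather than one. Concretely, I would apply Theorem~\ref{Theo:analyticLRgeneral} to the sequence $(\Lambda_{k})_{k\geq 0}$ of $k$-neighbourhoods of $\Lambda_{0}$ in $\partial\mathcal{R}$, exactly as in the one-dimensional particularization that led to Theorem~\ref{Theo:FundamentalEstimations}.

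The key observation is the following: since $\Lambda_{0}$ is admissible (at most two connected arcs) and $\partial\mathcal{R}$ is one-dimensional with the intrinsic distance, the shell $\Lambda_{k}\setminus \Lambda_{k-1}$ contains at most four new points for every $k\geq 1$ (two per segment endpoint, and fewer once the two expanding segments merge or cover the whole boundary). This replaces the bound $|J_{k}\setminus J_{k-1}|\leq 2$ used in the derivation of Theorem~\ref{Theo:FundamentalEstimations} by $|\Lambda_{k}\setminus\Lambda_{k-1}|\leq 4$. Consequently, the surface energy of Theorem~\ref{Theo:analyticLRgeneral} satisfies $W(j,k)\leq 4\,\Omega_{k-j}$ for $0\leq j<k$, and $|\Lambda_{k}|\leq |\Lambda_{0}|+4k$. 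Plugging these into the conclusion of Theorem~\ref{Theo:analyticLRgeneral} and rescaling (the power series for $W^{\ast}_{k}(2|s|)$ built from $W(j,k)\leq 4\Omega_{k-j}$ becomes exactly $\Omega^{\ast}_{k}(8|s|)$ in the notation of \eqref{equa:OmegaStarExpression1}), one obtains the boundary analog
\begin{align*}
\|\Gamma^{s}_{\Lambda_{L}}(Q)-\Gamma^{s}_{\Lambda_{\ell}}(Q)\| & \;\leq\; \|Q\|\,e^{2|s|\Omega_{0}|\Lambda_{0}|}\sum_{k=\ell+1}^{L} e^{8|s|\Omega_{0}k}\,\Omega^{\ast}_{k}(8|s|),\\
\|\Gamma^{s}_{\Lambda_{L}}(Q)\| & \;\leq\; \|Q\|\,e^{2|s|\Omega_{0}|\Lambda_{0}|}\sum_{k=0}^{L} e^{8|s|\Omega_{0}k}\,\Omega^{\ast}_{k}(8|s|).
\end{align*}

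With this in place, the exponential-decay hypothesis $\|\Omega\|_{\lambda}<\infty$ is handled verbatim by repeating the computation preceding \eqref{equa:localityEstimatesExponentialDecay1}--\eqref{equa:localityEstimatesExponentialDecay2}, only with $x=8|s|$ in place of $x=4|s|$. Factoring out $e^{-\lambda k}$ inside $\Omega^{\ast}_{k}(x)$ and using $\sum_{k}e^{\lambda k}\Omega^{\ast}_{k}(x)\leq e^{x\|\Omega\|_{\lambda}}$ (from \eqref{equa:OmegaStarExpression2}), one gets, for $|s|\leq \lambda/(8\Omega_{0})$,
\[
\sum_{k=\ell+1}^{L} e^{8|s|\Omega_{0}k}\,\Omega^{\ast}_{k}(8|s|)\;\leq\;e^{8|s|\|\Omega\|_{\lambda}}\,e^{(8|s|\Omega_{0}-\lambda)\ell},
\]
and likewise the total sum from $k=0$ to $L$ is bounded by $e^{8|s|\|\Omega\|_{\lambda}}$ (using the constraint $8|s|\Omega_{0}\leq\lambda$ so that the series converges). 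Inserting these into the two displays above yields \eqref{equa:localityQuasifactorization1}--\eqref{equa:localityQuasifactorization2}.

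I do not expect a genuine obstacle here: the only nontrivial point is recognizing that the topology of $\partial\mathcal{R}$ and the two-segment nature of admissible supports are already absorbed by the single combinatorial estimate $|\Lambda_{k}\setminus\Lambda_{k-1}|\leq 4$, which merely doubles the combinatorial constants appearing in Theorem~\ref{Theo:FundamentalEstimations}. Everything else is a direct transcription of the one-dimensional argument.
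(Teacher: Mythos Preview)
Your proposal is correct and follows essentially the same approach as the paper: the paper's proof also relies on the single observation $|\Lambda_{k}\setminus\Lambda_{k-1}|\leq 4$ to get $W(j,k)\leq 4\,\Omega_{k-j}$, then invokes Theorem~\ref{Theo:analyticLRgeneral} to obtain the analogue of Theorem~\ref{Theo:FundamentalEstimations} with $8|s|$ in place of $4|s|$, and finally repeats the exponential-decay computation leading to \eqref{equa:localityEstimatesExponentialDecay1}--\eqref{equa:localityEstimatesExponentialDecay2}.
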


\begin{proof}
Since $\Lambda_{0}$ consists of at most two connected segments,  $|\Lambda_{k} \setminus \Lambda_{k-1}| \leq 4$ for every $k \geq 1$. Consequently, and following the notation of Theorem \ref{Theo:analyticLRgeneral}, we can bound for every $0 \leq j < k$ 
\[ W(j, k) \,\, \leq \,\, |\Lambda_{k} \setminus \Lambda_{k-1}|  \, \Omega_{k-j} \,\, \leq \,\, 4 \, \Omega_{k-j}\,.\]
Analogously to Theorem \ref{Theo:FundamentalEstimations}, for every $0 \leq \ell \leq L$ and $s \in \mathbb{C}$
\begin{align*}
\| \Gamma_{\Lambda_{L}}^{s}(Q) - \Gamma_{\Lambda_{\ell}}^{s}(Q) \| & \leq \|Q \| e^{2 |s| \Omega_{0} |\Lambda_{0}| \, } \, \sum_{k=\ell +1}^{L} e^{8 |s| \Omega_{0} k} \Omega_{k}^{\ast}(8 |s|)\\
\| \Gamma_{\Lambda_{L}}^{s}(Q)\| & \leq \|Q \| e^{2 |s| \Omega_{0} |\Lambda_{0}| \, } \, \sum_{k=0}^{L} e^{8 |s| \Omega_{0} k} \Omega_{k}^{\ast}(8 |s|)\,.
\end{align*}
Finally, we can argue as in the proof of \eqref{equa:localityEstimatesExponentialDecay1}-\eqref{equa:localityEstimatesExponentialDecay2} to obtain \eqref{equa:localityQuasifactorization1}-\eqref{equa:localityQuasifactorization2}. 
\end{proof}

\subsection*{Quasiperturbation formulas}

We consider the border of a rectangle $AB$ made with two adjacent rectangles $A$ and $B$. It is formed by two connected segments, namely $\partial A \setminus \partial B$ and $\partial B \setminus \partial A$, see Figure ~\ref{Fig:quasiPerturbation1}. Denote by $S_{1}$ the set containing the (four) extreme points of both segments, and for each $j > 1$
\[ S_{j}:=\{ v \in \partial \mathcal{R} \colon \operatorname{dist}_{\partial \mathcal{R}}(v, S_{1}) \leq j \}\,. \]

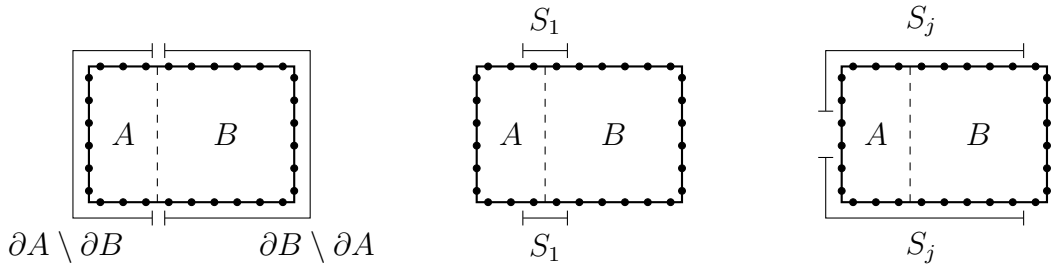
\begin{figure}[h]
\begin{tikzpicture}[scale=0.3]

\pgfmathsetmacro{\error}{0.7}

\begin{scope}[xshift=33cm]

\draw[thick] (-4,-3) -- (-4,3) -- (5,3) -- (5,-3) -- cycle;

\foreach \x in {-4,...,4} \foreach \y in {-3,3}
\draw[fill=black] ({0.5+\x},\y) circle [radius = 0.15];    

\foreach \x in {-4,5} \foreach \y in {-3,...,2} \draw[fill=black] (\x,{0.5+\y}) circle [radius = 0.15];    

\draw (-2.5,0) node{$A$}; 
\draw (2,0) node{$B$}; 


\draw[dashed] (-1, -3) -- (-1,3);  

\draw[|-|] ({-4- \error},{1}) -- ({-4-\error},{3+\error}) -- ({4},{3+\error}); 

\draw[|-|] ({-4- \error},{-1}) -- ({-4-\error},{-3-\error}) -- ({4},{-3-\error});

\draw ({-0.5},-5) node{$S_{j}$}; 
\draw ({-0.5},5) node{$S_{j}$};

\end{scope}


\begin{scope}[xshift=17cm]

\draw[thick] (-4,-3) -- (-4,3) -- (5,3) -- (5,-3) -- cycle;

\foreach \x in {-4,...,4} \foreach \y in {-3,3}
\draw[fill=black] ({0.5+\x},\y) circle [radius = 0.15];    

\foreach \x in {-4,5} \foreach \y in {-3,...,2} \draw[fill=black] (\x,{0.5+\y}) circle [radius = 0.15];       

\draw (-2.5,0) node{$A$}; 
\draw (2,0) node{$B$}; 


\draw[dashed] (-1, -3) -- (-1,3);  

\draw[|-|] ({-2},{3+\error}) -- ({0},{3+\error}); 

\draw[|-|] ({-2},{-3-\error}) -- (0,{-3-\error});

\draw ({-1},-5) node{$S_{1}$}; 
\draw ({-1},5) node{$S_{1}$};

\end{scope}


\begin{scope}

\draw[thick] (-4,-3) -- (-4,3) -- (5,3) -- (5,-3) -- cycle;

\foreach \x in {-4,...,4} \foreach \y in {-3,3}
\draw[fill=black] ({0.5+\x},\y) circle [radius = 0.15];    

\foreach \x in {-4,5} \foreach \y in {-3,...,2} \draw[fill=black] (\x,{0.5+\y}) circle [radius = 0.15];      


\draw[dashed] (-1, -3) -- (-1,3);  

\draw[|-|] ({-1.2},{-3-\error}) -- ({-4-\error},{-3-\error}) -- ({-4-\error},{3+\error}) -- (-1.2,{3+\error});

\draw[|-|] (-0.7,{-3-\error}) -- ({5+\error},{-3-\error}) -- ({5+\error},{3+\error}) -- (-0.7,{3+\error});

\draw ({-5},-5) node{$\partial A \setminus \partial B$}; 
\draw ({6},-5) node{$\partial B \setminus \partial A$}; 

\draw (-2.5,0) node{$A$}; 
\draw (2,0) node{$B$}; 

\end{scope}


\end{tikzpicture}
\caption{Adjacent rectangles $AB$.}
\label{Fig:quasiPerturbation1}
\end{figure}

We are going to consider two local Hamiltonians on $\partial AB$, namely $G$ and a perturbed version $\widetilde{G}=G+U$, and study bounds and locality properties of
\begin{equation}\label{equa:expansionalExpansion} 
e^{G}e^{-\widetilde{G}}  = \mathbbm{1} + \sum_{m=1}^{\infty} (-1)^{m} \int_{0}^{1}d \beta_{1} \ldots \int_{0}^{\beta_{m-1}}d \beta_{m} \, \Gamma_{G}^{-i \beta_{1}}(U) \ldots \Gamma_{G}^{-i \beta_{m}}(U)\,,
\end{equation}
where we have used the expansional formulas described in Section \ref{sec:expansionals}.

\begin{Theo}\label{Theo:quasilocalityAroundBorder}
Let $(G_{\partial \mathcal{R}})_{\mathcal{R}}$ be a family of local boundary Hamiltonians  with exponential decay $\| \Omega\|_{\lambda} < \infty$ for some $\lambda > 8 \Omega_{0}$. Then, for every pair of adjacent rectangles $AB$ as above it holds that the local Hamiltonians
\[ G = \sum_{X \subset \partial AB}{g_{X}^{\partial AB}}\quad \mbox{and} \quad  \widetilde{G} = \sum_{X \subset \partial A \setminus B}{g_{X}^{\partial AB}} + \sum_{X \subset \partial B \setminus \partial A}{g_{X}^{\partial AB}}\]
satisfy
\begin{align*} 
\| e^{G}e^{-\widetilde{G}}\| & \leq \exp{ (e^{12 \| \Omega\|_{\lambda}}) }\,, \\[2mm]
\| e^{G}e^{-\widetilde{G}} - e^{G_{S_{k}}} e^{\widetilde{G}_{S_{k}}}\| & \leq  \, \exp{(e^{12 \| \Omega\|_{\lambda}})} \, e^{12 \| \Omega\|_{\lambda}} \, e^{(8 \Omega_{0} - \lambda)k} \,.
\end{align*}
\end{Theo}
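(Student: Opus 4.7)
The plan is to reduce both bounds to estimates on the imaginary-time evolution of a single observable via the expansional formula of Section~\ref{sec:expansionals}, and then apply the boundary locality estimate Lemma~\ref{Lemm:localityEstimates Boundary}. Writing $\widetilde{G} = G + U$ with $U := \widetilde{G} - G$, the observable $-U = \sum\{g_X^{\partial AB} : X \text{ straddles}\}$ is the straddling sum, and setting $U^{(k)} := \widetilde{G}_{S_k} - G_{S_k}$, the observable $-U^{(k)}$ restricts this sum to $X \subset S_k$. Identity \eqref{equa:expansionalExpansion} and its restricted analog give
\begin{equation*}
  e^G e^{-\widetilde{G}} = \mathbbm{1} + \sum_{m \ge 1}(-1)^m \int_{0}^{1}\!\! d\beta_1 \cdots \!\!\int_{0}^{\beta_{m-1}}\!\! d\beta_m\; \Gamma_G^{-i\beta_1}(U)\cdots \Gamma_G^{-i\beta_m}(U),
\end{equation*}
and the same formula holds with $(G_{S_k}, U^{(k)})$ in place of $(G, U)$. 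Both parts of the theorem will thus reduce to controlling (i)~$\sup_{|\beta|\le 1}\|\Gamma_G^{-i\beta}(U)\|$ (together with its analog on $S_k$), and (ii)~the single-factor difference $\sup_{|\beta|\le 1}\|\Gamma_G^{-i\beta}(U) - \Gamma_{G_{S_k}}^{-i\beta}(U^{(k)})\|$.

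For (i), decompose $U = \sum_{j \ge 1} V_j$, where $V_j$ collects the straddling $g_X^{\partial AB}$ with $X \subset S_j$ and $X \not\subset S_{j-1}$. Any such $X$ contains a point at intrinsic distance $j$ from $S_1$ and must reach the opposite side, whence $\operatorname{diam}_{\partial AB}(X) \ge j$; since $|S_j \setminus S_{j-1}|$ is bounded by an absolute constant, $\|V_j\| \le C\,\Omega_j$. Moreover $S_j$ consists of at most two arcs (one around each transition region of $\partial AB$) of total length $|S_j| = O(j)$, hence is admissible for Lemma~\ref{Lemm:localityEstimates Boundary}, and the hypothesis $\lambda > 8\Omega_0$ makes $|\beta| \le 1 \le \lambda/(8\Omega_0)$ allowed. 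Applying \eqref{equa:localityQuasifactorization2} and summing in $j$,
\begin{equation*}
  \sup_{|\beta|\le 1}\|\Gamma_G^{-i\beta}(U)\| \le C' e^{8\|\Omega\|_\lambda}\sum_{j \ge 1}\Omega_j\,e^{8\Omega_0 j} \le C'' e^{8\|\Omega\|_\lambda}\|\Omega\|_\lambda \le e^{12\|\Omega\|_\lambda},
\end{equation*}
the last step absorbing absolute constants via $x \le e^x$ and $\Omega_0 \le \|\Omega\|_\lambda$. Substituting back into the Dyson-like series yields $\|e^G e^{-\widetilde{G}}\| \le \exp(e^{12\|\Omega\|_\lambda})$, the first inequality. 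The same reasoning applied to $(G_{S_k}, U^{(k)})$ gives the analogous uniform bound.

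For (ii), apply in each integrand the telescoping identity $\prod_{j=1}^m A_j - \prod_{j=1}^m B_j = \sum_{i=1}^m (\prod_{j<i}A_j)(A_i - B_i)(\prod_{j>i}B_j)$ with $A_j = \Gamma_G^{-i\beta_j}(U)$, $B_j = \Gamma_{G_{S_k}}^{-i\beta_j}(U^{(k)})$. Using the uniform per-factor bound $e^{12\|\Omega\|_\lambda}$, the resulting difference series is controlled by $\exp(e^{12\|\Omega\|_\lambda}) \cdot \sup_{|\beta|\le 1}\|\Gamma_G^{-i\beta}(U) - \Gamma_{G_{S_k}}^{-i\beta}(U^{(k)})\|$. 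The single-factor difference splits as
\begin{equation*}
  \Big\|\Gamma_G^{-i\beta}\Big(\sum_{j > k}V_j\Big)\Big\| + \sum_{j \le k}\|\Gamma_G^{-i\beta}(V_j) - \Gamma_{G_{S_k}}^{-i\beta}(V_j)\|.
\end{equation*}
The first summand is the tail of the series of step~(i) restricted to $j > k$ and is bounded by $C\,e^{8\|\Omega\|_\lambda}\|\Omega\|_\lambda\,e^{(8\Omega_0 - \lambda)k}$. For the second, apply \eqref{equa:localityQuasifactorization1} with $\Lambda_0 = S_j$ and buffer $\ell = k-j$, yielding the term-by-term bound $\|V_j\|\,e^{2\Omega_0|S_j|}\,e^{8\|\Omega\|_\lambda}\,e^{(8\Omega_0 - \lambda)(k-j)}$; the exponent identity $e^{8\Omega_0 j}\,e^{(8\Omega_0-\lambda)(k-j)} = e^{(8\Omega_0 - \lambda)k}\,e^{\lambda j}$ collapses the sum to $C\,e^{8\|\Omega\|_\lambda}\|\Omega\|_\lambda\,e^{(8\Omega_0 - \lambda)k}$. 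Absorbing constants into $e^{12\|\Omega\|_\lambda}$ and multiplying by $\exp(e^{12\|\Omega\|_\lambda})$ produces the desired second bound.

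The main delicate point is the careful packaging of the absolute combinatorial constants arising from $|S_j \setminus S_{j-1}| = O(1)$, $|S_j| = O(j)$, and the enumeration of straddling supports into the clean prefactor $e^{12\|\Omega\|_\lambda}$; this relies on $x \le e^x$ and $\Omega_0 \le \|\Omega\|_\lambda$ to hide all constants in the exponent, and on the exponent $8\Omega_0$ appearing in both the volume term and the buffer decay of Lemma~\ref{Lemm:localityEstimates Boundary} being exactly offset by the interaction decay rate $\lambda$, as guaranteed by the hypothesis $\lambda > 8\Omega_0$.
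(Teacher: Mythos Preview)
Your proof is correct and follows essentially the same route as the paper: decompose the straddling perturbation $U$ into layers $V_j$ supported in $S_j$ (the paper's $U_j$), bound each via Lemma~\ref{Lemm:localityEstimates Boundary}, sum using the exponent identity $e^{8\Omega_0 j}e^{(8\Omega_0-\lambda)(k-j)}=e^{(8\Omega_0-\lambda)k}e^{\lambda j}$, and feed the resulting single-factor estimates into the expansional series via the telescoping argument of Theorem~\ref{Theo:BoundExpansionals}. The only cosmetic difference is that the paper tracks the explicit constant $4$ (from $|S_j\setminus S_{j-1}|\le 4$) throughout, whereas you keep it abstract as $C$ and absorb it at the end; the packaging into $e^{12\|\Omega\|_\lambda}$ is identical.
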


\begin{proof}
Let us denote by $\mathcal{F}$ the family of subsets $X$ of $\partial AB$ such that $X \cap \partial A$ and $X \cap \partial B$ are both not empty. Then,  
\[ \widetilde{G} = G + U \quad \mbox{ where } \quad U = \sum_{X \in \mathcal{F}}{-g_{X}^{\partial AB}}\,. \]
Next we split the family $\mathcal{F}$ into the following subfamilies: for every $j \geq 1$ 
\[  \mathcal{F}_{j}:= \{ X \in \mathcal{F} \colon X \subset S_{j} \} \setminus \mbox{$\bigcup_{j' < j}$} {\mathcal{F}_{j'}}\,. \]
We can then decompose
\begin{equation}\label{equa:decompositionQuasiperturbation} 
U = \sum_{j \geq 1}{U_{j}} \quad \mbox{ where } \quad U_{j}:=\sum_{X \in \mathcal{F}_{j}}{-g_{X}^{\partial AB}}\,. 
\end{equation}
Since every set $X$ in $\mathcal{F}_{j}$ has diameter greater than or equal to $j$, and it has nonempty intersection with $S_{j} \setminus S_{j-1}$ (that contains at most four points) we can easily deduce 
\[ \| U_{j}\| \leq 4 \Omega_{j}\,. \]
Moreover $U_{j}$ has support in $S_{j}$, which consists of at most two connected segments and contains at most $4j$ elements. This is an admissible supporting set, and so we can use the locality estimates from Lemma \ref{Lemm:localityEstimates Boundary} to get for $|s| \leq 1$ and every $j\leq k$
\begin{align*}
\| \Gamma_{G}^{s}(U_{j}) \| , \|\Gamma_{G_{S_{k}}}^{s}(U_{j}) \| \, & \leq \, \| U_{j}\| \, e^{2 \Omega_{0} |S_{j}|} \, e^{8 \, \| \Omega\|_{\lambda}} \, \leq \, 4 \, \Omega_{j} e^{\lambda j}  e^{8 \| \Omega\|_{\lambda}} e^{(8 \Omega_{0}-\lambda) j}\\[2mm]
\| \Gamma_{G}^{s}(U_{j}) - \Gamma_{G_{S_{k}}}^{s}(U_{j}) \| \, & \leq \,  \| U_{j}\| \, e^{2 \Omega_{0} |S_{j}|} \, e^{8 \, \| \Omega\|_{\lambda}} e^{(8 \Omega_{0} - \lambda) (k-j)}\,,\\
& \leq \, 4 \, \Omega_{j} \, e^{8 \Omega_{0} j} \, e^{8 \| \Omega\|_{\lambda}} \, e^{(8 \Omega_{0} - \lambda) (k-j)}\, = \, 4 \, \Omega_{j} \, e^{\lambda j} \, e^{8\| \Omega\|_{\lambda}} \, e^{(8\Omega_{0} - \lambda)k}\,.
\end{align*} 
Next, using \eqref{equa:decompositionQuasiperturbation} we deduce for every $k \geq 1$ and $|s| \leq 1$
\begin{align*}
\| \Gamma_{G}^{s}(U) \| \, , \, \|\Gamma_{G_{S_{k}}}^{s}(U) \| \, & \leq \, \sum_{j \geq 1} 4 \, \Omega_{j} \, e^{\lambda j} \, e^{8\| \Omega\|_{\lambda}} \, \leq \, 4 \, \| \Omega\|_{\lambda} \,  e^{8 \| \Omega\|_{\lambda}} \, \leq \, e^{12 \| \Omega\|_{\lambda}}\,
\end{align*}
and
\begin{align*}
\| \Gamma_{G}^{s}(U) - \Gamma_{G_{S_{k}}}^{s}(U_{S_{k}}) \| & \leq \, \sum_{j=1}^{k}{\| \Gamma_{G}^{s}(U_{S_{j}}) - \Gamma_{G_{S_{k}}}^{s}(U_{S_{j}}) \|} + \sum_{j >k}\| \Gamma_{G}^{s}(U_{S_{j}}) \|\\
& \leq \, 4 \, e^{8\| \Omega\|_{\lambda}} \, e^{(8 \Omega_{0} - \lambda)k}  \,\sum_{j=1}^{k} \Omega_{j} e^{\lambda j} + 4 \, e^{8 \| \Omega\|_{\lambda}} \, \sum_{j>k} \Omega_{j} e^{\lambda j} \, e^{(8 \Omega_{0}-\lambda)j}\\
&  \leq \, 4 \, \| \Omega\|_{\lambda} \, e^{8\| \Omega\|_{\lambda}}\, e^{(8 \Omega_{0} - \lambda)k}\\[3mm]
&  \leq \, e^{12 \| \Omega\|_{\lambda}}\, e^{(8 \Omega_{0} - \lambda)k} \,.
\end{align*}
Applying these estimates to \eqref{equa:expansionalExpansion} and reasoning as in the proof of Theorem \ref{Theo:BoundExpansionals} we conclude the result.
\end{proof}

\begin{Theo}\label{Theo:quasiperturbationHomogeneity}
Let $(G_{\partial \mathcal{R}})_{\mathcal{R}}$ be a family of local boundary Hamiltonians  with exponential decay $\| \Omega\|_{\lambda} < \infty$ for some $\lambda > 8 \Omega_{0}$ and $\eta$-homogeneous for an absolutely summable sequence $\eta(\ell)$. Then, for every pair of adjacent rectangles $AB$ as above it holds that 
\[ G = \sum_{X \subset (\partial A \setminus \partial B) \setminus S_{\ell}}{g_{X}^{\partial AB}}\quad\quad \mbox{and} \quad\quad  \widetilde{G} = \sum_{X \subset (\partial A \setminus \partial B) \setminus S_{\ell}}{g_{X}^{\partial A}}\]
satisfy
 \[ \| e^{G}e^{-\widetilde{G}} - \mathbbm{1}\| \leq \exp{\big[ e^{12 \| \Omega\|_{\lambda}} \mbox{$\sum_{j > \ell}$}\, \eta(j) \big]} e^{12 \| \Omega\|_{\lambda}} \Big( \mbox{$\sum_{j > \ell}$}\, \eta(j)\Big). \]
\end{Theo}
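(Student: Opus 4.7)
The proof will follow the pattern of Theorem \ref{Theo:quasilocalityAroundBorder}. Writing $\widetilde{G} = G + U$ with perturbation
\[ U \;=\; \sum_{X\subset(\partial A\setminus \partial B)\setminus S_\ell} \bigl( g_X^{\partial A} - g_X^{\partial AB}\bigr), \]
the expansional formula \eqref{equa:expansionalExpansion} together with the termwise norm bound on each $m$-fold integral yields
\[ \|e^G e^{-\widetilde G} - \mathbbm{1}\| \;\leq\; \exp\!\Bigl(\sup_{|s|\leq 1}\|\Gamma_G^s(U)\|\Bigr) - 1. \]
The elementary inequality $e^x - 1 \leq x e^x$ then reproduces the shape of the claimed bound, provided one establishes
\[ \sup_{|s|\leq 1}\|\Gamma_G^s(U)\| \;\leq\; e^{12\|\Omega\|_\lambda}\sum_{j>\ell}\eta(j). \]

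To prove this inequality I would distribute norm and supremum inside the sum over $X$ and apply the locality estimate \eqref{equa:localityQuasifactorization2} of Lemma \ref{Lemm:localityEstimates Boundary} to each summand. Since every such $X$ lies inside the connected arc $\partial A\setminus \partial B$, its smallest admissible supporting enlargement has at most $\operatorname{diam}(X)+1$ points, and for $|s|\leq 1$,
\[ \|\Gamma_G^s\bigl(g_X^{\partial A}-g_X^{\partial AB}\bigr)\| \;\leq\; \|g_X^{\partial A}-g_X^{\partial AB}\|\,e^{2\Omega_0(\operatorname{diam}(X)+1)}\,e^{8\|\Omega\|_\lambda}. \]
The $\eta$-homogeneity hypothesis replaces the first norm by $\eta(j)(\|g_X^{\partial A}\|+\|g_X^{\partial AB}\|)$, where $j:=\operatorname{dist}_{\partial AB}(X,\partial B)>\ell$, and the standing assumption $\lambda > 8\Omega_0$ lets one absorb the spurious factor $e^{2\Omega_0\operatorname{diam}(X)}$ into $e^{\lambda\operatorname{diam}(X)}$.

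The decisive step is then the combinatorial rearrangement: group the supports $X$ by the value of $j = \operatorname{dist}_{\partial AB}(X,\partial B)$. For each $j>\ell$ there is only a bounded number of vertices in $\partial A\setminus\partial B$ at distance exactly $j$ from $\partial B$, and every $X$ in the corresponding class must contain at least one of them; the sum $\sum_{X\ni v}(\|g_X^{\partial A}\|+\|g_X^{\partial AB}\|)\,e^{\lambda\operatorname{diam}(X)}$ is then controlled by a constant multiple of $\|\Omega\|_\lambda$ directly from the definition of $\|\Omega\|_\lambda$. Putting these ingredients together and absorbing the finite prefactors $e^{8\|\Omega\|_\lambda}$, $e^{2\Omega_0}$ and $\|\Omega\|_\lambda$ inside $e^{12\|\Omega\|_\lambda}$ delivers the required inequality, exactly in the spirit of the closing computation of Theorem \ref{Theo:quasilocalityAroundBorder}.

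The only genuine technical point is this decoupling: the $\eta$ decay is governed by the distance from $X$ to $\partial B$, while the interaction decay $\Omega$ is governed by $\operatorname{diam}(X)$, and one must organize the double sum so that these two decays act on independent axes rather than competing. Once that separation of variables is in place, the remainder of the argument is the same expansional machinery that already underlies Theorem \ref{Theo:quasilocalityAroundBorder}, with no additional analytic subtlety.
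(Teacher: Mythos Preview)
Your proposal is correct and follows essentially the same strategy as the paper's proof: both use the expansional formula \eqref{equa:expansionalExpansion}, reduce to bounding $\sup_{|s|\leq 1}\|\Gamma_G^s(U)\|$, and exploit the separation between the $\eta$-decay (distance to $\partial B$) and the $\Omega$-decay (diameter of $X$). The only organizational difference is that the paper first groups the interaction terms into packets $U_{j,k}$ indexed by both the nearest distance $j$ to $S_1$ and the farthest distance $k$, applies the locality estimate to each packet with support $S_k\setminus S_{j-1}$, and then sums; you instead apply the locality estimate term by term with support an interval of length $\operatorname{diam}(X)+1$, and only afterwards group by $j$ and sum the diameters into $\|\Omega\|_\lambda$. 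Both routes give the same final bound, and your termwise version is arguably a little more direct.
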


\begin{proof}
Let us denote by $\mathcal{F}$ the family of all sets $X \subset \partial A \setminus \partial B$.  We are going to split $\mathcal{F}$ into the following subfamilies: for each $1 \leq j \leq k$ let us define inductively 
\[ \mathcal{F}_{j,k}:= \{ X \in \mathcal{F} \colon  X \subset S_{k},   X \cap S_{j} \neq \emptyset \} \setminus \mbox{$\bigcup$}\{ \mathcal{F}_{j',k'} \colon j' < j \mbox{ or } k' < k \}\,.  \]
This ensures that every set $X$ in $\mathcal{F}_{j,k}$ has support contained in $S_{j,k}:=S_{k} \setminus S_{j-1}$ and diameter greater than or equal to $k-j$, since it intersects both $S_{k} \setminus S_{k-1}$ and $S_{j} \setminus S_{j-1}$. Then, we can use the locality estimates from Lemma \ref{Lemm:localityEstimates Boundary} on
\[ U_{j,k}:= \sum_{X \in \mathcal{F}_{j,k}}{g_{X}^{\partial AB} - g_{X}^{\partial A}}\, \] 
to deduce that for $|s| \leq 1$
\begin{align*} 
\| \Gamma_{G}^{s}(U_{j,k})\| \,  \leq \, \| U_{j,k}\| \, e^{2 \Omega_{0} |S_{j,k}|} \, e^{8 \| \Omega\|_{\lambda}}\, & \leq \, \| U_{j,k}\| \, e^{4 \Omega_{0} (k-j)} \, e^{8\| \Omega\|_{\lambda}}\\[2mm]
& \leq \, 4 \, \eta(j) \, \Omega_{k-j} \, e^{4 \Omega_{0}(k-j)} \, e^{8 \| \Omega\|_{\lambda}}\,
 \end{align*}
where in the last inequality we have used the $\eta$-homogeneity condition. Therefore, using that the perturbation can be written as
\[ U := \widetilde{G} - G = \sum_{\ell < j \leq k}{U_{j,k}} \] 
 we conclude that for $|s| \leq 1$
 \begin{align*} 
 \| \Gamma^{s}_{G}(U)\| \leq \sum_{\ell < j \leq k} \| \Gamma^{s}_{G}(U)\| \, & \leq \, 4 e^{8\| \Omega\|_{\lambda}} \sum_{j > \ell} \eta(j) \, \sum_{k \geq j} \Omega_{k-j} e^{4 \Omega_{0}(k-j)}\\
 & \leq \, 4 e^{8\| \Omega\|_{\lambda}} \| \Omega\|_{\lambda} \sum_{j > \ell}{\eta(j)}\\
 & \leq \, e^{12 \| \Omega\|_{\lambda}} \sum_{j>\ell}{\eta(j)}\,.
 \end{align*}
Applying these estimates to \eqref{equa:expansionalExpansion} and reasoning as in the proof of Theorem \ref{Theo:BoundExpansionals} we conclude the result.
\end{proof}

\subsection*{Main result}

\begin{Theo}
Let $(\rho_{\partial \mathcal{R}})_{\mathcal{R}}$ be a family of boundary states whose boundary Hamiltonians $(G_{\partial \mathcal{R}})_{\mathcal{R}}$ are local with exponential decay $\| \Omega\|_{\lambda}< \infty$ for some $\lambda > 8 \Omega_{0}$, and also $\eta$-homogenous for an absolutely summable sequence  $\eta(\ell)$ with sum $\| \eta\|$. Then, $(\rho_{\partial \mathcal{R}})_{\mathcal{R}}$ is approximately factorizable with function 
\[ \varepsilon(\ell) = \exp \big[ e^{K\, (1 +  \| \Omega\|_{\lambda} + \| \eta\|)} \big] \, \big(  \, e^{(8 \Omega_{0} - \lambda) \ell} \, +  \mbox{$\sum_{k > \ell}$} \,\, \eta(k) \, \big) \]
for an absolute constant $K > 0$.
\end{Theo}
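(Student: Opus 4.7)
The plan is to construct the four observables $\Delta_{az},\Delta_{zc},\Upsilon_{\alpha z},\Upsilon_{z\gamma}$ explicitly as (modified) exponentials of boundary Hamiltonians of auxiliary sub-rectangles, and then to control the two sources of error via the locality estimate of Theorem \ref{Theo:quasilocalityAroundBorder} (for splitting $e^{G_{\partial\mathcal{R}}}$ along a chosen cut of $\partial\mathcal{R}$) and the homogeneity estimate of Theorem \ref{Theo:quasiperturbationHomogeneity} (for replacing boundary Hamiltonians of one rectangle by those of a neighboring sub-rectangle). Concretely, I would split $B$ horizontally at its midpoint into two adjacent sub-rectangles $B_{1}$ (adjacent to $A$) and $B_{2}$ (adjacent to $C$), so that the four auxiliary rectangles $AB_{1}, B_{2}C, B_{1}, B_{2}$ have boundaries whose natural ``outer'' portions are precisely the regions $az,zc,\alpha z,z\gamma$. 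The candidates are then
\[
\Delta_{az}\approx e^{G_{\partial(AB_{1})}},\quad \Delta_{zc}\approx e^{G_{\partial(B_{2}C)}},\quad \Upsilon_{\alpha z}\approx e^{G_{\partial B_{1}}},\quad \Upsilon_{z\gamma}\approx e^{G_{\partial B_{2}}},
\]
each truncated so that its support coincides with the region indicated by its subscript (with the truncation correction absorbed by the locality bound).

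First, for each $\mathcal{R}\in\{ABC,AB,BC,B\}$, I would apply Theorem \ref{Theo:quasilocalityAroundBorder} to the splitting of $\mathcal{R}$ at the midpoint of $B$ to obtain $e^{G_{\partial\mathcal{R}}}\approx e^{\widetilde G^{(1)}_{\mathcal{R}}}e^{\widetilde G^{(2)}_{\mathcal{R}}}$, where $\widetilde G^{(1)}_{\mathcal{R}}, \widetilde G^{(2)}_{\mathcal{R}}$ are ``halved'' boundary Hamiltonians supported on the two arcs of $\partial\mathcal{R}$, and the error is of order $e^{(8\Omega_{0}-\lambda)\ell}$ with the large prefactor $\exp(e^{12\|\Omega\|_{\lambda}})$. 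Next, Theorem \ref{Theo:quasiperturbationHomogeneity} allows me to replace each $\widetilde G^{(i)}_{\mathcal{R}}$ by the corresponding auxiliary boundary Hamiltonian $G_{\partial (AB_{1})}, G_{\partial(B_{2}C)}, G_{\partial B_{1}}, G_{\partial B_{2}}$, at a cost governed by $\exp\!\bigl[e^{12\|\Omega\|_{\lambda}}\sum_{k>\ell}\eta(k)\bigr]\cdot e^{12\|\Omega\|_{\lambda}}\sum_{k>\ell}\eta(k)$. The crucial point is that these substitutions are uniform across the four choices of $\mathcal{R}$, so the same four observables enter all four identities required for approximate factorizability.

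Finally, one assembles $\sigma_{\partial\mathcal{R}}$ in each case by pairing the two appropriate factors among $\Delta_{az},\Delta_{zc},\Upsilon_{\alpha z},\Upsilon_{z\gamma}$, and estimates the approximation via
\[
\rho^{1/2}\sigma^{-1}\rho^{1/2}-\mathbbm{1}=\rho^{1/2}\sigma^{-1}(\rho-\sigma)\rho^{-1/2}.
\]
The factors $\|\rho^{\pm 1/2}\|$ and $\|\sigma^{-1}\|$ are uniformly bounded thanks to the norm estimates of Theorem \ref{Theo:quasilocalityAroundBorder} (together with the elementary bound $\|\rho^{\pm 1/2}\|\le e^{\|G_{\partial\mathcal{R}}\|}\le e^{\mathrm{const}\cdot\|\Omega\|_{\lambda}}$), producing the multiplicative prefactor $\exp[e^{K(1+\|\Omega\|_{\lambda}+\|\eta\|)}]$, while $\|\rho-\sigma\|$ is bounded by the sum of the two errors above, producing the additive structure $e^{(8\Omega_{0}-\lambda)\ell}+\sum_{k>\ell}\eta(k)$ inside the outer exponential.

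The principal obstacle is the combination of non-commutativity with the prescribed two-factor product form: since $\rho_{\partial\mathcal{R}}=(e^{G_{\partial\mathcal{R}}})^{2}$, the natural factorization $e^{G_{\partial\mathcal{R}}}\approx XY$ squares to $XYXY$, which is not of the form $\sigma_{\partial\mathcal{R}}=\Delta_{zc}\Delta_{az}$ (etc.) required by the definition. Two ways around this are (i) to use a symmetric splitting of $e^{G_{\partial\mathcal{R}}}$ so that $\rho$ factors cleanly after taking adjoints, or (ii) to apply the splitting argument directly at the level of $\rho=e^{2G_{\partial\mathcal{R}}}$ by doubling the interaction and verifying that the hypothesis $\lambda>8\Omega_{0}$ still suffices with the $8$ absorbed into $K$. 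The secondary difficulty, that a single choice of four observables be simultaneously compatible with all four rectangles $\mathcal{R}\in\{ABC,AB,BC,B\}$, is precisely what the $\eta$-homogeneity condition was designed to handle via Theorem \ref{Theo:quasiperturbationHomogeneity}.
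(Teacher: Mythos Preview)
Your proposal has a genuine gap in the final assembly step. You estimate
\[
\rho^{1/2}\sigma^{-1}\rho^{1/2}-\mathbbm{1}=\rho^{1/2}\sigma^{-1}(\rho-\sigma)\rho^{-1/2}
\]
and then claim that $\|\rho^{\pm 1/2}\|\le e^{\|G_{\partial\mathcal{R}}\|}\le e^{\mathrm{const}\cdot\|\Omega\|_{\lambda}}$. This is false: $\|G_{\partial\mathcal{R}}\|$ scales linearly with $|\partial\mathcal{R}|$ (only $\Omega_{0}$ is a per-site bound), so $\|\rho^{\pm 1/2}\|$ and $\|\sigma^{-1}\|$ blow up with the size of the rectangle and cannot serve as uniform prefactors. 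The whole point of the locality/expansional machinery is precisely to avoid ever invoking global norms of $e^{\pm G_{\partial\mathcal{R}}}$.

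The paper circumvents this by a different construction and a different assembly. It does \emph{not} introduce sub-rectangles $B_{1},B_{2}$; instead it builds $\Delta_{az},\Delta_{zc}$ from restrictions of the boundary Hamiltonian $Q$ of the \emph{full} rectangle $ABC$, and $\Upsilon_{\alpha z},\Upsilon_{z\gamma}$ from restrictions of the boundary Hamiltonian $T$ of $B$, each as a triple product of exponentials (e.g.\ $\Delta_{az}=e^{Q_{ax}}e^{-Q_{y}}e^{Q_{axy}}$). With this choice, the product $\rho_{\partial\mathcal{R}}^{1/2}\sigma_{\partial\mathcal{R}}^{-1}\rho_{\partial\mathcal{R}}^{1/2}$ is \emph{algebraically} a finite product of factors of the form $e^{G}e^{-\widetilde G}$, where each $\widetilde G-G$ is a local perturbation near a cut or a homogeneity substitution. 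Theorems \ref{Theo:quasilocalityAroundBorder} and \ref{Theo:quasiperturbationHomogeneity} then bound each factor uniformly and show it is close to $\mathbbm{1}$, so the product is close to $\mathbbm{1}$ without any reference to $\|\rho^{\pm 1/2}\|$. Your two ``ways around'' the squaring obstacle do not address this: a symmetric splitting still leaves you multiplying by unbounded global factors, and doubling the interaction does not help because the issue is the extensive norm of $G_{\partial\mathcal{R}}$, not the decay rate $\lambda$.
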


\begin{proof}
Let us fix $\ell>0$ and consider an $\ell$-admissible triplet of rectangles $ABC$ as in Figure \ref{Fig:rectangulo0}. For simplicty let us denote the corresponding boundary Hamiltonians of $ABC$, $AB$, $BC$ and $B$ respectively as
\[ \rho_{ABC} = e^{2 Q_{azc}}\,, \,\,\,  \rho_{AB} = e^{2 R_{a z \gamma}}\,, \,\,\, \rho_{BC} = e^{2 S_{\alpha z c}}\,, \,\,\, \rho_{B} = e^{2 T_{\alpha z \gamma}}\,. \]
Following the same idea from \cite{KaLuPe19} one considers the following invertible matrices
\begin{align*}
\Delta_{az} & := e^{Q_{ax}} e^{-Q_{y}} e^{Q_{axy}}  \quad\quad \Upsilon_{\alpha z}:= e^{T_{\alpha x}} e^{-T_{y}} e^{T_{\alpha xy}}\\[2mm]
\Delta_{zc} & := e^{Q_{xyc}} e^{-Q_{x}} e^{Q_{yc}} \quad\quad \Upsilon_{z \gamma} :=e^{T_{xy\gamma}} e^{-T_{x}} e^{T_{y\gamma}},
\end{align*}
and shows that they provide the desired approximate factorization. We are going to explicitly check this for $\rho_{AB}$ and $\sigma_{AB}:=\Upsilon_{z\beta}\Delta_{az}$, namely 
\[ \| \rho_{AB}^{1/2} \, \sigma_{AB}^{-1} \, \rho_{AB}^{1/2} - \mathbbm{1} \| < \varepsilon(\ell)\, \]
for a suitable function $\varepsilon(\ell)$ as above. We can explicitly write
\[ \rho_{AB}^{1/2} \, \sigma_{AB}^{-1} \, \rho_{AB}^{1/2} = \left(e^{R_{axy\gamma}} e^{-Q_{axy}} e^{Q_{y}} e^{-T_{y \gamma}}\right) \left( e^{-Q_{ax}} e^{T_{x}} e^{-T_{xy\gamma}} e^{R_{axy\gamma}} \right).  \]
Let us show that each of the two factors can be suitably approximated by the identity. We deal with the first factor, the second one is analogous. Let us decompose
\[
\begin{split} 
e^{R_{axy\gamma}} e^{-Q_{axy}} e^{Q_{y}} e^{-T_{y\gamma}} \, = \,&  \big( e^{R_{axy\gamma}} e^{-R_{axy} - R_{\gamma}} \big)  \,   \big( e^{R_{axy}} e^{-Q_{axy}} \big)\\[2mm]
&  \big(e^{R_{\gamma}} e^{-T_{\gamma}}\big) \, \big(e^{Q_{y}}e^{-T_{y}}\big) \,  \big(e^{T_{y} +T_{\gamma}} e^{-T_{y \gamma}}\big)\,.
\end{split}
\] 
By Theorems
\ref{Theo:quasilocalityAroundBorder} and \ref{Theo:quasiperturbationHomogeneity}   we know that there is an absolute constant $K > 0$ such that the five factors are uniformly bounded by $$\exp{\big(\, e^{K( 1+ \| \eta\| + \| \Omega\|_{\lambda})} \, \big)}$$ 
and moreover we can approximate
\begin{align*} 
 e^{R_{axy\gamma}} e^{-R_{axy} - R_{\gamma}} \,\, & \approx \,\, e^{R_{y\gamma}} e^{-R_{y} - R_{\gamma}}\\
 e^{R_{\gamma}} e^{-T_{\gamma}}\,\,,\,\, e^{Q_{y}}e^{-T_{y}} \,\,,\,\, e^{R_{axy}} e^{-Q_{axy}} \,\,,\,\, e^{R_{y}} e^{-T_{y}}\,\,,\,\,e^{R_{y \gamma}} e^{-T_{y \gamma}} \,\, & \approx \,\, \mathbbm{1}   
\end{align*}
with an error of order
\begin{equation}\label{equa:accumulativeError}
\exp \big[ e^{K(1 +  \| \Omega\|_{\lambda} + \| \eta\|)} \big] \, \big(  \, e^{(8 \Omega_{0} - \lambda) \ell} \, +  \mbox{$\sum_{k > \ell}$} \,\, \eta(k) \, \big)\,.
\end{equation}
Thus, applying these estimates to the above decomposition we conclude that
\begin{align*}
e^{R_{axy\gamma}} e^{-Q_{axy}} e^{Q_{y}} e^{-T_{yc}}  \, &  \approx \big( e^{R_{y\gamma}}e^{-R_{y} - R_{\gamma}} \big) \, \big( e^{R_{y}} e^{-T_{y}}\big) \big( e^{R_{\gamma}} e^{- T_{\gamma}} \big) \,  \big(  e^{T_{y} + T_{\gamma}} e^{-T_{y\gamma}} \big)\\
& = e^{R_{y \gamma}} e^{-T_{y \gamma}} \approx \mathbbm{1}\,
\end{align*}
with an accumulated error of the same type \eqref{equa:accumulativeError} for a suitable absolute constant.
\end{proof}


\section{Conclusion and final remarks}

We have presented locality estimates for complex time evolution of local observables in one-dimensional systems with interactions decaying exponentially fast, and apply them to extend previous works on the clustering property of KMS states and the spectral gap problem for parent Hamiltonians of PEPS to this latter case. 

More specifically, we have shown that if the interactions on the infinite one-dimensional lattice decay as  $\Omega_{n} = O(e^{- \lambda n})$ for some $\lambda > 0$, then for every local observable $Q$ the infinite-volume time evolution operator $s \longmapsto \Gamma^{s}(Q)$ is well-defined, analytic and quasi-local (with exponential tails) on the strip $|\operatorname{Im}(s)| < \lambda/(4 \Omega_{0})$. Let us point out that we are not aware of any 1D model exhibiting this type of threshold, although there are examples in 2D  of finite range interactions with this property  \cite{Bouch15}.

We have also shown that, under the previous conditions, the infinite-volume KMS state at inverse temperature $\beta$ has exponential decay of correlations whether \mbox{$0 < \beta < \lambda/(2 \Omega_{0})$}. This leaves open the existence of phase transitions at lower temperatures, which might be unexpected according to the folklore statement that 1D systems with short range Hamiltonians do not exhibit phase transitions. A similar constraint in terms of $\lambda$ and $\Omega_{0}$ also appears in the result for PEPS. 

In both applications, these seeming thresholds arise from combining the Araki-Dyson expansional formulas for local perturbations with the prior locality estimates. Let us remark that there is an alternative perturbation formula due to Hastings \cite{Ha07} that provides $O$ satisfying $e^{-(H+U)} = O \, e^{-H} \, O^{\dagger}$ and whose locality properties depend on the locality of $U$ via the ordinary Lieb-Robinson bounds, so they are better than the ones we handle. However,  we have not succeeded in applying this formula to the previous issues, leaving this possibility open.


\bibliographystyle{plain}

\vspace{4mm}

\end{document}